\newtheorem{theorem}{Theorem}[section]
\newtheorem{proposition}{Corollary}[theorem]
\newcommand{\N}{\mathbb{N}}
\newcommand{\R}{\mathbb{R}}
\newcommand{\Z}{\mathbb{Z}}
\newcommand{\Fref}[1]{\Cref{#1}}%{Figure~\ref{#1}}
\newcommand{\fref}[1]{\cref{#1}}%{Fig.~\ref{#1}}
\newcommand{\eref}[1]{\cref{#1}}
\title{Chaotic switching in driven-dissipative Bose-Hubbard dimers: \\ 
when a flip bifurcation meets a T-point in $\R^4$}
\author{Andrus Giraldo\footnotemark[1] \footnotemark[3]
\and Neil G. R. Broderick\footnotemark[2] \footnotemark[3]
\and Bernd Krauskopf\footnotemark[1] \footnotemark[3]}
\date{}
\begin{document}

\renewcommand{\thefootnote}{\fnsymbol{footnote}}
\footnotetext[1]{Department of Mathematics, The University of
  Auckland, Private Bag 92019, Auckland 1142, New Zealand
  (\href{mailto:a.giraldo@auckland.ac.nz}{a.giraldo@auckland.ac.nz},
  \href{mailto:b.krauskopf@auckland.ac.nz}{b.krauskopf@auckland.ac.nz})}
\footnotetext[2]{Department of Physics, The University of
  Auckland, Private Bag 92019, Auckland 1142, New Zealand
  (\href{mailto:n.broderick@auckland.ac.nz}{n.broderick@auckland.ac.nz})}
\footnotetext[3]{Dodd-Walls Centre for Photonic and Quantum Technologies, New Zealand}

\renewcommand{\thefootnote}{\arabic{footnote}}

\maketitle

%%%%%%%%%%%%%%%%%%%%%%%%%%%%%%%%%%%%%%%%%%%%%%%%%%%%%%%%
\begin{abstract}

The Bose--Hubbard dimer model is a celebrated fundamental quantum mechanical model that  accounts for the dynamics of bosons at two interacting sites. It has been realized experimentally by two coupled, driven and lossy photonic crystal nanocavities, which are optical devices that operate with only a few hundred photons due to their extremely small size. Our work focuses on characterizing the different dynamics that arise in the semiclassical approximation of such driven-dissipative photonic Bose--Hubbard dimers. Mathematically, this system is a four-dimensional autonomous vector field that describes two specific coupled oscillators, where both the amplitude and the phase are important. We perform a bifurcation analysis of this system to identify regions of different behavior as the pump power $f$ and the detuning $\delta$ of the driving signal are varied, for the case of fixed positive coupling. The bifurcation diagram in the $(f,\delta)$-plane is organized by two points of codimension-two bifurcations --- a $\Z_2$-equivariant homoclinic flip bifurcation and a Bykov T-point --- and provides a roadmap for the observable dynamics, including different types of chaotic behavior. To illustrate the overall structure and different accumulation processes of bifurcation curves and associated regions, our bifurcation analysis is complemented by the computation of kneading invariants and of maximum Lyapunov exponents in the $(f,\delta)$-plane. The bifurcation diagram displays a menagerie of dynamical behavior and offers insights into the theory of global bifurcations in a four-dimensional phase space, including novel bifurcation phenomena such as degenerate singular heteroclinic cycles.

\end{abstract}

%%%%%%%%%%%%%%%%%%%%%%%%%%%%%%%%%%%%%%%%%%%%%%%%%%%%%%%%

%%%%%%%%%%%%%%%%%%%%%%%%%%%%%%%%%%%%%%%%%%%%%%%%%%%%%%%%
\section{Introduction} \label{sec:intro}

The nonlinear response of a single isolated optical resonators has attracted considerable interest since the early works of Gibbs \cite{Gibbs1985} due to the 
plethora of behaviors they may display, including bistability, ultra-short pulses, cavity solitons, and most recently optical frequency combs \cite{10.5555/1593342}. 
While such resonators come in many shapes and sizes we focus here on optical microresonators based on photonic crystal structures, which are realized by introducing tiny defects in a regular lattice that generate refractive index changes on the order of the wavelength of the light \cite{Joannopoulos:08:Book}. Such photonic crystal nanocavities are extremely small and provide very strong confinement of light at a specified frequency, and this allows them to operate with unusually low numbers of photons --- down to a few hundred or even only tens of photons \cite{Hamel2015}. As such, these devices offer the possibility of investigating the interface between classical and quantum optics. More precisely, photonic crystal nanocavities have gained a lot of attention as prospective experimental realizations of an open quantum system, where an optical source replenishes photons lost from the system due to leaking \cite{MaiaThesis,BinMahmud, Casteels2017}. It is generally recognized that understanding the possible dynamics of such 
devices through careful analysis of appropiate mathematical models is of vital importance for explaining and guiding experimental observations. 

Moving beyond a single optical cavity, which has been well studied, we  focus here on a system of two resonantly driven identical coupled photonic crystal nanocavities. 
Such devices have been fabricated, and experiments with small to moderate driving (amount of input light) showed spontaneous symmetry breaking, in good 
agreement with the analysis of equilibria of the associated mathematical model \cite{MaiaThesis,And3,Hamel2015}. This system of two coupled optical microresonators is of wider importance, because it is known to realize the so-called driven-dissipative two-site Bose-Hubbard dimer model \cite{PhysRevA.93.033824}, which is a well-known quantum-optical model that also describes a two-mode approximation at low temperatues of a Bose-Einstein condensate in a double-well potential \cite{PhysRevLett.95.010402, PhysRevE.64.025202}. The semiclassical description of the Bose-Hubbard dimer is derived from its quantum-theoretical description by the semiclassical mean-field approximation of the so-called Lindblad master equation \cite{BinMahmud, Carmichael:1631392, Casteels2017}. This semiclassical approximation agrees with the macroscopic model for two identical  driven nonlinear optical resonator that are linearly coupled. After a suitable rescaling that links the physical parameters of the system to a reduce set of new parameters (see \cite{And3} for details), it takes the form
\begin{equation}  \label{eq:Couplednondim} 
\begin{cases}  
\dot{A} & = i\left(\delta + |A|^2 \right) A - A+ i\kappa B  + f, \\[2mm]
\dot{B} & = i\left(\delta  + |B|^2 \right) B - B + i\kappa A  + f,
\end{cases}
\end{equation}
for the complex variables $A$ and $B$, which are the (rescaled) envelopes of the electric fields in each cavity that describe the amplitudes and phases of the light in the two cavities; in particular, the intensities of light in each cavity are $|A|^2$ and $|B|^2$. Both cavities are assumed to be identical and coherently driven by a driving field of (rescaled) intensity $f$ and detuning $\delta$ (with respect to the frequency of the identical cavities); moreover, $\kappa$ is the (rescaled) coupling strength. The positive sign of the nonlinear term follows from considering a postive Kerr-type nonlinearity. We remark that our results can also be mapped to the case of negative nonlinearity by a suitable transformation; see \cite{And3} for more details. 

Photonic crystal nanocavities with different coupling strengths can and have been fabricated, but the respective coupling strength $\kappa$ remains fixed for any given device as it represents the geometry of the resonators and, in particular, the distance between them. The pump power $f$ and the detuning $\delta$ of the input light, on the other hand, are parameters that can readily be varied during an experiment. Moreover, since the cavities are identical and the driving is assumed to be the same for both of them, system~\eref{eq:Couplednondim} has the natural $\Z_2$-equivariance of interchanging the two cavities, that is, of interchanging the complex variables $A$ and $B$. 

Equations~\eref{eq:Couplednondim} define an autonomous real vector field with a four-dimensional phase space, and the main purpose of this paper is to present a detailed study of its bifurcation structure to determine and illustrate the dynamics the two microresonators may display. We focus here on the case of positive coupling, because photonic crystals nonocavities with $\kappa > 0 $ can be fabricated and studied experimentally \cite{Hamel2015}. More precisely, we consider throughout the representative value of $\kappa=2$ and present a comprehensive bifurcation diagram in the $(f,\delta)$-plane, where we identify, delimit and explain the different dynamical behaviors, from simple to chaotic, that system~\cref{eq:Couplednondim} exhibits. To accomplish this, we combine parameter sweeping of kneading invariants with the numerical continuation of global bifurcations.  We identify two codimension-two bifurcation points, a $\Z_2$-equivariant homoclinic flip bifurcation and a Bykov T-point, as organizing centers for infinitely many global bifurcations in the $(f,\delta)$-plane.  We compute a considerable number of the curves of such global bifurcations that emerge from these two special points. Here, we pay particular attention to the role of global bifurcations involving saddle periodic orbits for the emergence of chaotic attractors and their subsequent bifurcations, namely symmetry increasing bifurcations \cite{Chossat1988} and boundary crises \cite{GREBOGI1983181}. Moreover, we show how different global bifurcations interact with each other at additional codimension-two points of degenerate singular cycles \cite{Kirk1, Lohr1} and their generalizations. 

Overall, our bifurcation analysis of the semiclassical limit of the Bose-Hubbard dimer reveals a menagerie of dynamical behavior. Our findings identify the role of pretty exotic global bifurcation phenomena for the organization of chaotic behavior --- within reach of highly controlled future experiments with coupled nanocavities. More generally, our work showcases the important role of state-of-the-art numerical continuation techniques, in tandem with parameter sweeping, as a bridge between highly advanced results in dynamical systems and their relevance for applications. By studying and computing all these global bifurcations, we are able to present a comprehensive bifurcation diagram in the $(f,\delta)$-plane of system~\eref{eq:Couplednondim} that explains features identified by the computation of kneading invariants and of maximum Lyapunov exponents. It serves as a roadmap for symmetry increasing bifurcations of chaotic attractors and, moreover, identifies system~\cref{eq:Couplednondim} as a promising, concrete example vector field featuring novel types of global bifurcations. 
Specifically from the point of view of the application, we identify the following different types of chaotic behaviour of the two nanocavities, where the intensity of light inside each cavity fluctuates chaotically with the following additional characterizing properties:
\begin{itemize}
\item 
\emph{non-switching chaotic behavior}, where one cavity is dominant throughout, that is, traps more light and, hence, has a higher intensity; 
\item 
\emph{chaotic behavior with chaotic switching}, where there are epochs when one cavity traps more light, with irregular, unpredictable switching between which cavity is dominant;
\item 
\emph{chaotic behavior with regular switching}, characterized by regular, predictable switching between which cavity is dominant; and
\item 
\emph{chaotic behavior with intermittent regular and chaotic switching}, which features an irregular alternation of epochs with regular and with chaotic  switching  between which cavity is dominant. 
\end{itemize}

These results rely on state-of-the-art computational techniques that have been developed only quite recently. The computations of global bifurcations presented here are implemented in and performed with the pseudo-arclength continuation package \textsc{Auto} \cite{Doe1,Doe2} and its extension \textsc{HomCont} \cite{san2}.  More specifically, global manifolds are computed with a two-point boundary value problem setup \cite{Call1, Kra2}, and homoclinic and heteroclinic orbits are found with an implementation of Lin's method \cite{KraRie1,Kirk2}. The sweeping in the parameter plane to determine kneading sequences and Lyapunov exponents is done with the software package \textsc{Tides}~\cite{TIDES2012}. Finally, visualization and post-processing of the data are performed with \textsc{Matlab}\textsuperscript{\textregistered}.

The paper is organized as follows. In \cref{sec:Background}, we first discuss the symmetries of the semiclassical Bose--Hubbard dimer model and present the local bifurcation analysis of the symmetric equilibria. Here we provide in \cref{sec:LocBifSymm} analytical expressions for the saddle-node and pitchfork bifurcations as parameterized curves in the $(f,\delta)$-plane, which also yield expressions for associated bifurcations of higher codimension, while \cref{sec:OtherLocBifs} discusses Hopf bifurcation and properties of bifurcating periodic orbits. \Cref{sec:kappap2} is devoted to the identification of the different bifurcations that lead to the emergence of an asymmetric chaotic attractor and its subsequent symmetry increasing bifurcation. We first present one-parameter bifurcation diagrams in $f$ at different chosen  $\delta$-values for fixed $\kappa = 2$, and then continue the codimension-one bifurcations thus found to provide an initial bifurcation diagram in the $(f,\delta)$-plane parameter plane. The latter clearly identifies the homoclinic flip bifurcation and the Bykov T-point as organising centers for the curves of bifurcations thus obtained.

The kneading sequence, associated with the one-dimensional unstable manifold of a symmetric saddle focus, is introduced in \cref{sec:KneadSeq}. We then systematically construct the bifurcation diagram near the Bykov T-point by computing successive Shilnikov bifurcation curves that emanate from it and bound associated regions with finite kneading sequence. In \cref{sec:EtoPKneading}, we extend the bifurcation diagram near the Bykov T-point by computing curves of codimension-one heteroclinic orbits between a saddle equilibrium and different saddle periodic orbits, which we refer to as EtoP connections. Here \cref{sec:IsolasGamma_o} and \cref{sec:IsolasGamma_symm} concern EtoP connections to different types of periodic orbits, and we clarify their role in bounding certain regions with constant kneading sequences in the $(f,\delta)$-plane; in particular, we identify a new type of symmetric periodic orbit that, after a symmetry breaking bifurcation, results in a period-doubling route to new types of chaotic attractors. 

\Cref{sec:TangCycSadd} concerns degenerate singular cycles to a saddle focus; the associated tangency bifurcation of its three-dimensional stable manifold with the two-dimensional unstable manifolds of different saddle periodic orbits are discussed in \cref{sec:KneadSeqGamma_o} and \cref{sec:KneadSeqGamma_symm}, respectively. Here we show that associated curves of such tangency bifurcations not only delimit regions in the $(f,\delta)$-plan where Shilnikov bifurcations can be found, but are also responsible for symmetry increasing bifurcations and interior crises of the repesctive chaotic attractors. Moreover, we identify points of codimension-two degenerate singular cycles \cite{Kirk1,Lohr1}. This type of analysis is extended in \cref{sec:TanChainCycles} to tangencies between saddle periodic orbits, where the three-dimensional stable manifold and the two-dimensional unstable manifold of two saddle periodic orbits intersect tangentially. We show that these tangencies delimit regions in the $(f,\delta)$-plane where curves of codimension-one EtoP connections can be found; additionally, we find codimension-two points of degenerate singular cycles, but now involving two periodic orbits, which explain observed accumulations of the curves of EtoP connections. 

In \cref{sec:RegionDegSin}, we complement our analyses with plots of the $(f,\delta)$-plane that show the maximum Lyapunov exponent associated with the attractor that is approached by the one-dimensional unstable manifold of the saddle focus that also defines the kneading sequence. In this way, we identify parameter regions where different chaotic behavior arises, which highlights an intriguing overall structure of global bifurcations that explains the existence of and transition between different, large chaotic regions. Particularly, we find that a codimension-two degenerate singular cycle point from \cref{sec:TangCycSadd} is responsible for additional tangencies of homoclinic bifurcations of saddle periodic orbits that explain a boundary crisis transition of different chaotic attractors. We present a local unfolding of this point in the vector field~\cref{eq:Couplednondim}, whose unfolding has been theorized and numerically studied in a two-dimensional diffeomorphism model in \cite{Kirk1}; in particular, when only one parameter is varied close to this point, we find a snaking curve of homoclinic orbits to a saddle periodic orbit, which is a phenomenon reminiscent of the snaking curve of periodic orbits approaching a Shilnikov bifurcation \cite{Glendinning1984,Shil5}.

In the concluding \cref{sec:conclusions}, we summarize our results and present a final figure that shows how different curves of global bifurcations emerge from the $\Z_2$-equivariant homoclinic flip bifurcation point and the Bykov T-point, respectively, and cross each other transversely. Indeed, this illustrates succinctly that the overall organization of the rather complicated bifurcation diagram in the $(f,\delta)$-plane for positive coupling $\kappa$ is generated by a flip bifurcation meeting a T-point in a four-dimensional phase space. Some avenues for future research are also discussed briefly.

%%%%%%%%%%%%%%%%%%%%%%%%%%%%%%%%%%%%%%%%%%%%%%%%%%%%%%%%
\section{Background and local bifurcations of symmetric equilibria}
\label{sec:Background}
We now consider different representations of system~\eref{eq:Couplednondim}, and derive conditions and formulas for the local bifurcations. Throughout this section we do not fix $\kappa$, that is, we consider it to be an arbitrary real number. To start, it is convenient to write system~\eref{eq:Couplednondim} as a real vector field.  This can be achieved in two ways. Firstly, splitting the complex variables $A$ and $B$ into their real and complex parts $A=A_r+i A_c$ and $B=B_r+i B_c$ gives the vector field 
\begin{equation} \label{eq:CoupledRC}
X_{rc}:
\begin{cases} 
    \dot{A}_r & =  -(\delta + A_r^2 + A_c^2)A_c - A_r - \kappa B_c + f,\\[2mm]
    \dot{A}_c & =  {\color{white} +}(\delta + A_r^2 + A_c^2)A_r - A_c + \kappa B_r ,\\[2mm]
    \dot{B}_r & =  -(\delta + B_r^2 + B_c^2)B_c - B_r  - \kappa A_c + f,\\[2mm]
    \dot{B}_c & =  {\color{white} +}(\delta + B_r^2 + B_c^2)B_r - B_c  + \kappa A_r ,
\end{cases}
\end{equation}
for the real variables $(A_r,A_c,B_r,B_c) \in \mathbb{R}^4$. Its Jacobian is
\begin{equation} \label{eq:CoupledRCJac}
DX_{rc} = 
 \left(\begin{array}{cccc} -2\,A_r\,A_c-1 & 
                                                -{A_r}^2-3\,{A_c}^2-\delta \
         &  0  & -\kappa \\
         3\,{A_r}^2+{A_c}^2+\delta \  & 2\,A_r\,A_c-1 & \kappa  &
                                                                        0
         \\ 0  & -\kappa  &
                                                      -2\,B_r\,B_c-1
                                 & -{B_r}^2-3\,{B_c}^2-\delta \ \\
         \kappa  & 0  &
                                                  3\,{B_r}^2+{B_c}^2+\delta \
                                 & 2\,B_r\,B_c-1 \end{array}\right),
\end{equation} 
and it follows that $\text{div}(X_{rc}) = \text{trace}(DX_{rc}) \equiv -4$ . Hence, the flow of system~\eref{eq:CoupledRC} is uniformly volume contracting at all points of phase space and for all values of the parameters. It follows that the sum of the eigenvalues of any equilibrium and the sum of the Lyapunov exponents along any trajectory of system~\eref{eq:CoupledRC} is always equal to $-4$. 

Alternatively, with $A=R_Ae^{i\phi_A}$ and $B=R_Be^{i\phi_B}$,  system~\eref{eq:Couplednondim} can be written in polar coordinates as
\begin{equation} \label{eq:CoupledAP}
X_{\rm pol}:
\begin{cases} 
    \dot{R}_A & =  -R_A + \kappa R_B \sin(\phi_A-\phi_B) + f\cos(\phi_A), \\[2mm]
    \dot{\phi}_A & =  {\color{white} +}R_A^2 + \delta + \dfrac{\kappa R_B \cos(\phi_A-\phi_B) -f\sin(\phi_A)}{R_A},  \\[2mm]
    \dot{R}_B & =  -R_B + \kappa R_A \sin(\phi_B-\phi_A) +   f\cos(\phi_A), \\[2mm]
    \dot{\phi}_B & = {\color{white} +}R_B^2 + \delta + \dfrac{\kappa R_A \cos(\phi_B-\phi_A) -f\sin(\phi_B)}{R_B}.
\end{cases}
\end{equation}
for the real variables $(R_A,\phi_A,R_B,\phi_B) \in \mathbb{R}^4$. We remark that here $\mathbb{R}^4$ is actually the covering space of the phase space $\mathbb{R} \times \mathbb{S} \times \mathbb{R} \times \mathbb{S} \equiv \mathbb{R}^2 \times \mathbb{T}^2$ of $X_{\rm pol}$. 

The $\Z_2$-equivariance of system~\eref{eq:Couplednondim} manifests itself as the invariance of systems~\eref{eq:CoupledRC} and~\eref{eq:CoupledAP} under the linear transformation 
\begin{eqnarray*}
\eta: & (A_r, A_c, B_r, B_c) & \mapsto (B_r, B_c, A_r, A_c), \\
& (R_A, \phi_A,R_B,\phi_B) & \mapsto (R_B, \phi_B,R_A,\phi_A).
\end{eqnarray*}
Hence, if $\psi(t)$ is a trajectory of system~\eref{eq:CoupledRC} or~\eref{eq:CoupledAP} then $\eta{(\psi)}$ is also a trajectory. The \emph{fixed-point subspace} of \eref{eq:CoupledRC} under the group $<\eta> \ = \Z_2$ is given by
\begin{equation*}
\text{Fix}(\eta) = \{(A_r,A_c,B_r,B_c) \in \R^4: A_r=B_r \text{ 
  and } A_c=B_c\},
\end{equation*}
and that of \eref{eq:CoupledAP} by
\begin{equation*}
\text{Fix}(\eta) = \{(R_A\phi_A,R_B,\phi_B) \in \R^4: R_A=R_B \text{ 
  and } \phi_A= \phi_B\}.
\end{equation*}
We refer to (sets of) solutions that are mapped to themselves by $\eta$ as \emph{symmetric}; otherwise, we refer to them as \emph{asymmetric} and they then come in pairs. In particular, solutions in the two-dimensional invariant plane $\text{Fix}(\eta)$ are symmetric solutions that have identical intensities $R_A(t)^2 = |A(t)|^2= |B(t)|^2 = R_B(t)^2$  and phase $\phi_A(t)=\phi_B(t)$ for all time.

%%%%%%%%%%%%%%%%%%%%%%%%%%%%%%%%%%%%%%%%%%%%%%%%%%%%%%%%
\subsection{Parameterizations of local bifurcation loci}
\label{sec:LocBifSymm}

From the polar form~\eref{eq:CoupledAP} it follows that symmetric equilibria in $\text{Fix}(\eta)$ satisfy
\begin{equation} \label{eq:zeroProblem}
\begin{aligned}
    f\cos (\phi )- r &=& 0, \\
    r^2-\frac{ f \sin (\phi )}{r}+\delta +\kappa &=& 0, 
\end{aligned}
\end{equation}
where $r=R_A=R_B$ and $\phi=\phi_A=\phi_B$. It follows that these equilibria are the roots of the cubic equation 
\begin{equation} \label{eq:zeror2}
    I^3 + 2(\delta +\kappa) I^2 + [(\delta +\kappa)^2+1]I   - f^2 = 0 
\end{equation}
for the intensity $I=r^2$. Hence, there exist generically either one or three equilibria in $\text{Fix}(\eta)$, and their only codimension-one bifurcations are saddle-node and pitchfork bifurcations. They can be found analytically as parametrized curves \textbf{S} and \textbf{P} in the $(f,\delta)$-plane, which also gives expressions for other associated bifurcations of higher codimension. 

\begin{proposition}
\label{lem:local}
{\rm [Local bifurcations in $(f,\delta)$-plane of equilibria in $\text{Fix}(\eta)$]}
\begin{itemize}
\item[(i)]
The curve \textbf{S} of saddle-node bifurcation is parameterized by the phase $\phi$ as
\begin{equation} 
\label{eq:parProblem}
\left(f_{\rm \mathbf{S}} (\phi) ,\delta_{\rm \mathbf{S}}(\phi) \right) 
= \left( \sqrt{ \frac{- 1}{2 \sin (\phi ) \cos^3(\phi)}}, 
 \frac{1}{2} (3 \tan (\phi )+\cot (\phi )) - \kappa \right),
\end{equation}
where $\phi \in \left(-\frac{\pi}{2},0\right)$. 
\item[(ii)]
A cusp bifurcation point $\mathbf{CP}$ is found for $\phi = -\frac{\pi}{6}$ at 
\begin{equation}
\label{eq:cusp}
\left(f _{\mathbf{CP}},\delta_{\mathbf{CP}} \right) = \left (\frac{2 \sqrt{2}}{3^{3/4}}, -\sqrt{3} - \kappa\right).
\end{equation}
In particular, a saddle-node bifurcation can only occur for $f>2\frac{\sqrt{2}}{3^{3/4}}$ and $\delta< -\sqrt{3}-\kappa$.  
\item[(iii)]
The curve \textbf{P} of pitchfork bifurcation is parameterized by the phase $\phi$ as
\begin{equation} \label{eq:parProblemPitch}
\begin{aligned}
\left(f_{\rm \mathbf{P}} (\phi) ,\delta_{\rm \mathbf{P}}(\phi) \right) 
= \left(\frac{\sqrt{\cos^2(\phi)+(2\kappa\cos(\phi)-\sin(\phi))^2}}{\cos^2(\phi)\sqrt{4 \kappa -2  \tan (\phi )}}, \frac{10 \kappa  \tan (\phi ) - 8 \kappa ^2 - 1 - 3 \tan ^2(\phi )}{4 \kappa -2   \tan (\phi )} \right),
\end{aligned}
\end{equation}
where $\phi \in \left( -\frac{\pi}{2}, \arctan(2\kappa)\right)$. A necessary condition for a pitchfork bifurcation to occur is $\delta< -\sqrt{3}+\kappa$.  
\item[(iv)]
A codimension-two saddle-node-pitchfork bifurcation point $\mathbf{SP}$ is found at
\begin{equation} \label{eq:cod2SP}
    (f_{\mathbf{SP}}(\kappa),\delta_{\mathbf{SP}}(\kappa)) =\left(\sqrt{2(1+\kappa^2)(-\kappa+\sqrt{1+\kappa^2})},-2\sqrt{1+\kappa^2}\right).
\end{equation}
\item[(v)]
The curves $\mathbf{S}$ and $\mathbf{P}$ coincide for $\kappa=0$. However, the points $\mathbf{CP}$ and $\mathbf{SP}$ do not coincide for general $\kappa$, except at the codimension-three point $\mathbf{CSP}$ given by 
\begin{equation} \label{eq:cod3}
\left(f_{\mathbf{CSP}},\delta_{\mathbf{CSP}}, \kappa_{\mathbf{CSP}} \right)
= \left (\frac{2 \sqrt{2}}{3^{3/4}}, \dfrac{- 4\sqrt{3}}{3}, \dfrac{\sqrt{3}}{3} \right).
\end{equation}
\end{itemize}
\end{proposition}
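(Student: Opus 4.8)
\emph{Proof idea.} The plan is to exploit that the symmetric equilibria of~\eref{eq:Couplednondim} admit the clean parametrization $(r,\phi)\in\R_{>0}\times(-\tfrac{\pi}{2},\tfrac{\pi}{2})\mapsto(f,\delta)=\bigl(r\sec\phi,\ \tan\phi-r^{2}-\kappa\bigr)$, obtained by solving~\eref{eq:zeroProblem} for $(f,\delta)$; the symmetric equilibria at a given $(f,\delta)$ are then exactly the preimages under this map. Each of items (i)--(iv) amounts to intersecting the singular locus of an appropriate linearization with this two-parameter family and re-parametrizing by $\phi$, while (v) is a direct comparison of the resulting formulas.

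For (i), I would use that a saddle-node bifurcation of a symmetric equilibrium is a fold of the planar vector field obtained by restricting $X_{\rm pol}$ to the invariant plane $\text{Fix}(\eta)$ (equivalently, a double root of the cubic~\eref{eq:zeror2}; a transverse zero eigenvalue is a pitchfork by equivariance), and hence occurs precisely where the Jacobian $\partial(f,\delta)/\partial(r,\phi)$ of the map above is singular. Computing this $2\times2$ determinant gives $\sec^{3}\phi+2r^{2}\sec\phi\tan\phi$; setting it to zero yields $r^{2}=-1/(2\sin\phi\cos\phi)$, hence $f^{2}=-1/(2\sin\phi\cos^{3}\phi)$, which is positive exactly for $\phi\in(-\tfrac{\pi}{2},0)$, and substituting back into $\delta=\tan\phi-r^{2}-\kappa$ together with the identity $\tfrac{1}{2}(3\tan\phi+\cot\phi)=\tan\phi+1/(2\sin\phi\cos\phi)$ will produce~\eref{eq:parProblem}. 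For (ii), the cusp point $\mathbf{CP}$ is the singular point of the plane curve $\mathbf{S}$, where the tangent vector $(f_{\mathbf{S}}'(\phi),\delta_{\mathbf{S}}'(\phi))$ vanishes (equivalently, a triple root of~\eref{eq:zeror2}); one checks that both components vanish iff $\tan^{2}\phi=\tfrac{1}{3}$, i.e.\ $\phi=-\tfrac{\pi}{6}$, and substitution gives~\eref{eq:cusp}. The necessary condition in (ii) I would obtain from a short monotonicity analysis on $(-\tfrac{\pi}{2},0)$: both $\delta_{\mathbf{S}}'(\phi)=\tfrac{1}{2}(3\sec^{2}\phi-\csc^{2}\phi)$ and $\tfrac{d}{d\phi}f_{\mathbf{S}}^{2}$ have the sign of $\cos^{2}\phi-3\sin^{2}\phi$, so along $\mathbf{S}$ the value of $\delta$ is maximal and that of $f$ minimal at $\phi=-\tfrac{\pi}{6}$, with the values in~\eref{eq:cusp}.

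For (iii), the point is that, by $\Z_2$-equivariance, the Jacobian at a symmetric equilibrium block-diagonalizes into a block tangent to $\text{Fix}(\eta)$ and a block on the antisymmetric eigenspace of $\eta$, and a pitchfork is a simple zero eigenvalue of the latter. The efficient route is to linearize the complex form~\eref{eq:Couplednondim} about $A=B=A_{0}$ with $A_{0}=re^{i\phi}$, write a perturbation as $A=A_{0}+a$, $B=A_{0}+b$, and pass to the antisymmetric combination $u=a-b$, which satisfies $\dot u=\bigl(-1+i(\delta+2I-\kappa)\bigr)u+iA_{0}^{2}\bar u$ with $I=r^{2}$; as a real linear map $u\mapsto\lambda u+\mu\bar u$ on $\C\cong\R^{2}$ it has trace $-2$ and determinant $|\lambda|^{2}-|\mu|^{2}=1+(\delta+2I-\kappa)^{2}-I^{2}$, so on the pitchfork locus the transverse eigenvalues are $0$ and $-2$. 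Setting this determinant to zero and eliminating $\delta+\kappa=\tan\phi-I$ gives $I=\bigl(1+(2\kappa-\tan\phi)^{2}\bigr)/\bigl(2(2\kappa-\tan\phi)\bigr)$, which is positive exactly for $\tan\phi<2\kappa$; substituting into $f^{2}=I\sec^{2}\phi$ and $\delta=\tan\phi-I-\kappa$ and simplifying will yield~\eref{eq:parProblemPitch}, while the bound $\delta<-\sqrt{3}+\kappa$ follows since $\delta_{\mathbf{P}}'(\phi)$ has the sign of $3(\tan\phi-2\kappa)^{2}-1$, so $\delta_{\mathbf{P}}$ attains its maximum $\kappa-\sqrt{3}$ at $\tan\phi=2\kappa-1/\sqrt{3}$. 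For (iv), $\mathbf{SP}$ is the point where $\mathbf{S}$ and $\mathbf{P}$ meet at the \emph{same} equilibrium, i.e.\ at a common value of $\phi$; equating the two expressions for $I=r^{2}$ from (i) and (iii) and clearing denominators reduces (for $\kappa\neq0$) to $\tan^{2}\phi-2\kappa\tan\phi-1=0$, whose admissible root is $\tan\phi=\kappa-\sqrt{1+\kappa^{2}}$, and back-substitution gives $I=\sqrt{1+\kappa^{2}}$, $\delta=-2\sqrt{1+\kappa^{2}}$ and $f^{2}=2(1+\kappa^{2})(\sqrt{1+\kappa^{2}}-\kappa)$, i.e.~\eref{eq:cod2SP}.

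Finally, for (v): at $\kappa=0$ the two parametrizations~\eref{eq:parProblem} and~\eref{eq:parProblemPitch}, both over $\phi\in(-\tfrac{\pi}{2},0)$, simplify to the identical pair of functions, so $\mathbf{S}$ and $\mathbf{P}$ coincide; and $\mathbf{CP}$ and $\mathbf{SP}$ coincide precisely when their $\delta$-coordinates agree, i.e.\ $\sqrt{3}+\kappa=2\sqrt{1+\kappa^{2}}$, which squares to $(\sqrt{3}\,\kappa-1)^{2}=0$ and forces $\kappa=\sqrt{3}/3$; checking that the $f$-coordinates then also agree yields $\mathbf{CSP}$ as in~\eref{eq:cod3}, and since $\mathbf{CP}$ and $\mathbf{SP}$ are each of codimension two, requiring their coincidence is one further condition, so $\mathbf{CSP}$ has codimension three. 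I expect the one step needing genuine care to be (iii): correctly isolating the antisymmetric block of the linearization and computing its determinant. Everything else is algebraic simplification, together with the standard nondegeneracy checks --- that the fold is nondegenerate on $\mathbf{S}\setminus\{\mathbf{CP}\}$, that the relevant cubic coefficient of the equivariant reduction is nonzero along $\mathbf{P}$, and that the cusp, fold--pitchfork and codimension-three normal-form coefficients do not vanish --- which amount to evaluating a few derivatives of~\eref{eq:zeror2} and of the $2\times2$ determinant above.
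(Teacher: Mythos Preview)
Your proposal is correct and ultimately lands on the same bifurcation conditions as the paper, but you reach them by somewhat different (and more self-contained) routes. For (i), the paper works with the first factor of $\det(DX_{rc})$ in~\eref{eq:DetFull}, substitutes $I=r^{2}$ from~\eref{eq:zeroProblem}, and solves for $f$ and $\delta$; you instead characterize the saddle-node as a fold of the equilibrium map $(r,\phi)\mapsto(f,\delta)$ and read off the condition from the singularity of its $2\times2$ Jacobian. These are equivalent --- your Jacobian condition $\sec^{2}\phi+2I\tan\phi=0$ is exactly what the paper's first factor becomes after substituting $\delta+\kappa=\tan\phi-I$ --- but your route avoids computing the $4\times4$ determinant and makes the parametrization immediate. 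For (iii), the paper extracts the transverse block as the second factor of~\eref{eq:DetFull}; you obtain the same determinant $1+(\delta-\kappa)^{2}+4(\delta-\kappa)I+3I^{2}$ directly by linearizing the complex form and passing to the antisymmetric variable $u=a-b$, which is cleaner and makes the equivariant block structure explicit. For (ii), (iii) and (iv) you also supply the monotonicity and elimination arguments that the paper delegates to ``a computer algebra package'', so your write-up is more transparent at the cost of a bit more algebra. Item (v) is handled identically in both, except that you match $\delta$-coordinates rather than $f$-coordinates to isolate $\kappa=\sqrt{3}/3$; either works.
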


\begin{proof}
The determinant of \eref{eq:CoupledRCJac} is
\begin{equation}
\label{eq:DetFull}
\det{\left( DX_{rc} \right)} = \left( 1+(\delta + \kappa)^2+4(\delta + \kappa)I +3I^2 \right)  \left( 1 +(\delta - \kappa)^2 + 4(\delta - \kappa)I + 3I^2 \right).
\end{equation}
The first factor is the determinant of the two-dimensional restriction of system~\eref{eq:CoupledRC} to the fixed-point subspace $\text{Fix}(\eta)$. Hence, its roots with $I$ given by \eref{eq:zeror2} correspond to saddle-node bifurcation inside $\text{Fix}(\eta)$, which provides an implicit equation for the locus \textbf{S}. To obtain a parameterization, one solves \cref{eq:zeroProblem} for $r^2$ as a function of $\phi$ to replace $I=r^2$ in the first factor of \cref{eq:DetFull}; solving for $f$ and $\delta$ as functions of $\phi$ gives the expressions in (i). This and subsequent quite intricate calculations have been performed with a computer algebra package.

Considering roots of the derivatives with respect to $\phi$ shows that $f_{\rm  \mathbf{S}}(\phi)$ has a unique minimum and $\delta_{\rm \mathbf{S}}(\phi)$ has a unique maximum, simultaneously for $\phi=-\frac{\pi}{6} \in[-\frac{\pi}{2},0]$; this is therefore the cusp point $\mathbf{CP}$ and evaluating $(f_{\rm  \mathbf{S}}(-\frac{\pi}{6}),\delta_{\rm \mathbf{S}}(-\frac{\pi}{6}))$ gives (ii).

The second factor of \eref{eq:DetFull} contains the information about the stability in the directions transverse to $\text{Fix}(\eta)$, and its roots with $I$ given by \eref{eq:zeror2} correspond to pitchfork bifurcations; this provides an implicit equation for the locus \textbf{P}. Its parameterization in (iii) is obtained by using instead $I=r^2$ as obtained from \cref{eq:zeroProblem} in the second factor of \cref{eq:DetFull}, and subsequently solving for $f(\phi)$ and $\delta(\phi)$. Note that the parametrisation in \eref{eq:parProblemPitch} is valid for any $\kappa \in \R$ and that the range of $\phi$ follows from $4 \kappa -2  \tan (\phi )>0$, that is, the requirement that the quantity inside the square root be positive. Fixing $\delta_{\rm \mathbf{P}}(\phi)=\delta$ in \eref{eq:parProblemPitch} and solving for $\tan(\phi)$ gives
$\tan(\phi)= (\delta+5\kappa \mp \sqrt{-3+(\delta-\kappa)^2})/3$. The requirement that the quantity inside the square root be positive gives the stated bound on $\delta$, and one checks that then $4 \kappa -2  \tan (\phi )$ is always positive.

Statement (iv) follows from equating the parameterizations~\eref{eq:parProblem} and \eref{eq:parProblemPitch}. Moreover, setting $\kappa=0$ and simplifying shows that these parameterizations for the curves \textbf{S} and \textbf{P} are identical for this special case. On the other hand, for $\kappa=0$ in \cref{eq:cusp} and \cref{eq:cod2SP} the respective points $\mathbf{CP}$ and $\mathbf{SP}$ are still different in general; the condition $\kappa = \sqrt{3}/3$ follows from equating them when $f=2\sqrt{2} / 3^{3/4}$, which completes (v).
\end{proof}

The case $\kappa = 0$ when the curves \textbf{S} and \textbf{P} coincide is special: the two resonators are actually uncoupled, and the equations, hence, describe a single passive Kerr cavity --- a system that has been studied extensively in both the classical and quantum regimes \cite{BinMahmud, DrummondWalls1980}. According to \cref{lem:local}, depending on whether the detuning $\delta$ is above or below the value $\delta_{\mathbf{CP}} = -\sqrt{3}$ of the curve \textbf{S}, one either finds for $\kappa = 0$ a unique stable equilibrium for all values of the forcing $f$, or a hysteresis loop with a region of bistability between the values of $f$ corresponding to the two branches of the curve \textbf{S}; this agrees with \cite{DrummondWalls1980} where the existence of bistability  was established.

In the subsequent sections, we will discuss the bifurcations and dynamics found for positive coupling $\kappa$. Notice from the parametrisation \cref{eq:parProblem} that changing $\kappa$ simply translates the saddle-node bifurcation curve \textbf{S} in the $(f,\delta)$-plane along the $\delta$-axis.

%%%%%%%%%%%%%%%%%%%%%%%%%%%%%%%%%%%%%%%%%%%%%%%%%%%%%%%%
\subsection{Hopf bifurcation and bifurcating periodic orbits}
\label{sec:OtherLocBifs}

There cannot be any symmetric periodic orbit $\Gamma$ contained in $\text{Fix}(\eta)$ because the restriction of system~\eref{eq:CoupledRC} in this space is area contracting (it has divergence $-2$). This implies that symmetric equilibria cannot have a Hopf bifurcation. 

As we will see, Hopf bifurcations can be found for asymmetric equilibria.
Given a periodic orbit $\Gamma \not\subset \text{Fix}(\eta)$ with (minimal) period $P$ there are two different cases \cite{Kuz1}.
\begin{enumerate}
\item If $\Gamma$ is mapped to itself as a set under the symmetry transformation $\eta$. Here, $\eta$ acts as the symmetry $\eta(\Gamma)(t) = \Gamma (t+P/2)$, and $\Gamma$ is called an S-invariant periodic orbit.
\item Otherwise we refer to $\Gamma$ as an asymmetric periodic orbit, and denote its symmetric counterpart $\Gamma^* = \eta(\Gamma)$.
\end{enumerate}

While it is possible to determine local bifurcations of asymmetric equilibria analytically, the resulting expressions for saddle-node and Hopf bifurcations are very unwieldy and impractical. In what follows, we compute them by means of numerical continuation techniques implemented in the software package \textsc{Auto} \cite{Doe2}. Solving for these (only implicitly given) local bifurcations in this way allows us to readily switch to finding and continuing bifurcating periodic orbits and their different types of global bifurcations (which do not have analytic expressions to begin with).

%%%%%%%%%%%%%%%%%%%%%%%%%%%%%%%%%%%%%%%%%%%%%%%%%%%%%%%%
\section{Bifurcation structure for positive coupling}
\label{sec:kappap2}

Photonic crystals with a positive value of $\kappa$ can be fabricated and studied experimentally \cite{Hamel2015}. We consider from now on the value of $\kappa=2$ to illustrate the dynamics displayed by the two microresonators for this coupling configuration. Since it is common in experiments to keep the detuning $\delta$ fixed and ramp the forcing $f$ up and down, we start by presenting in \cref{fig:bifSetPlus} one-parameter bifurcation diagrams in $f$ of system~\eref{eq:Couplednondim} for four (decreasing) values of $\delta$. Branches of equilibria and periodic solutions are presented here in terms of their (maximal) values of the two intensities $|A|^2$ and $|B|^2$. 

%%%%%%%%%%%%%%%%%%%%%%%%%%%%%%%%%%%%%%%%%%%%%%%%%%%%%%%%
\begin{figure}
\centering
\includegraphics[scale=0.95]{./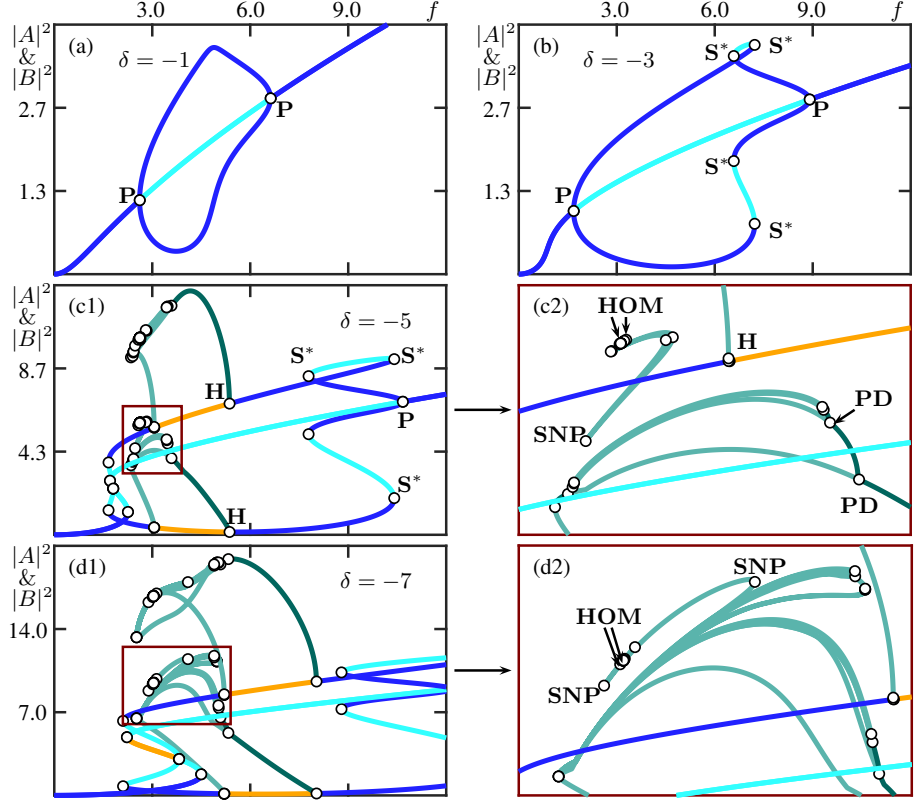}
\caption{One-parameter bifurcation diagrams in $f$ of system~\eref{eq:Couplednondim} for $\kappa=2$ at different values of $\delta$ as stated, where solutions are represented by the intensities $|A|^2$ and $|B|^2$. Shown are stable equilibria (blue), saddle equilibrium with a single unstable eigenvalue (cyan) and with two unstable eigenvalues (orange), stable periodic solutions (dark green), and saddle periodic solution with one unstable Floquet multiplier (light green). Regions in red frames are enlarged in corresponding subpanels.} 
\label{fig:bifSetPlus} 
\end{figure} 
%%%%%%%%%%%%%%%%%%%%%%%%%%%%%%%%%%%%%%%%%%%%%%%%%%%%%%%%

For sufficiently large positive detuning $\delta$ there is a single branch of stable symmetric equilibria. As $\delta$ is decreased sufficiently, as in \cref{fig:bifSetPlus}(a) for $\delta = -1$, one first finds that two pitchfork bifurcation points $\mathbf{P}$ give rise to a pair of branches of asymmetric equilibria over a growing and considerably large $f$-range. When $\delta$ is decreased further, as in \cref{fig:bifSetPlus}(b) for $\delta = -3$, two points of saddle-node bifurcations of asymmetric equilibria, labelled  $\mathbf{S^*}$, give rise to a growing hysteresis loop of asymmetric equilibria. For yet smaller values of $\delta$, as in panels~(c) and (d), there are two points $\mathbf{H}$ of Hopf bifurcation of the asymmetric equilibria, which are connected by a branch of periodic orbits. The right-most point $\mathbf{H}$ is supercritical. It generates stable periodic orbits (for lower values of $f$) that then become a saddle peridic orbit by undergoing a period-doubling bifurcation at the point $\mathbf{PD}$; see the enlargement panel~(c2). The branch then has a fold at a point $\mathbf{SNP}$ of saddle-node bifurcation of periodic orbits at $f \approx 2.2$ and then connects back to the left-most Hopf bifurcation point $\mathbf{H}$, which is subcritical. Notice further that there are many coexisting periodic orbits, which arise from subsequent period doublings; moreover, there is an isola of periodic solutions in the range $f \in [2.3, 3.3]$, which is bounded by further points $\mathbf{SNP}$ of saddle-node bifurcation of periodic orbits. As $\delta$ is decreased even further, as in \cref{fig:bifSetPlus}(d) for $\delta = -7$, this complicated picture of multiple periodic orbits persists, while the isola now appears to be connected to the original branch of periodic solutions arising from the Hopf bifurcation points; see the enlargement panel~(d2).

%%%%%%%%%%%%%%%%%%%%%%%%%%%%%%%%%%%%%%%%%%%%%%%%%%%%%%%%
\begin{figure}
\centering
\includegraphics[scale=0.95]{./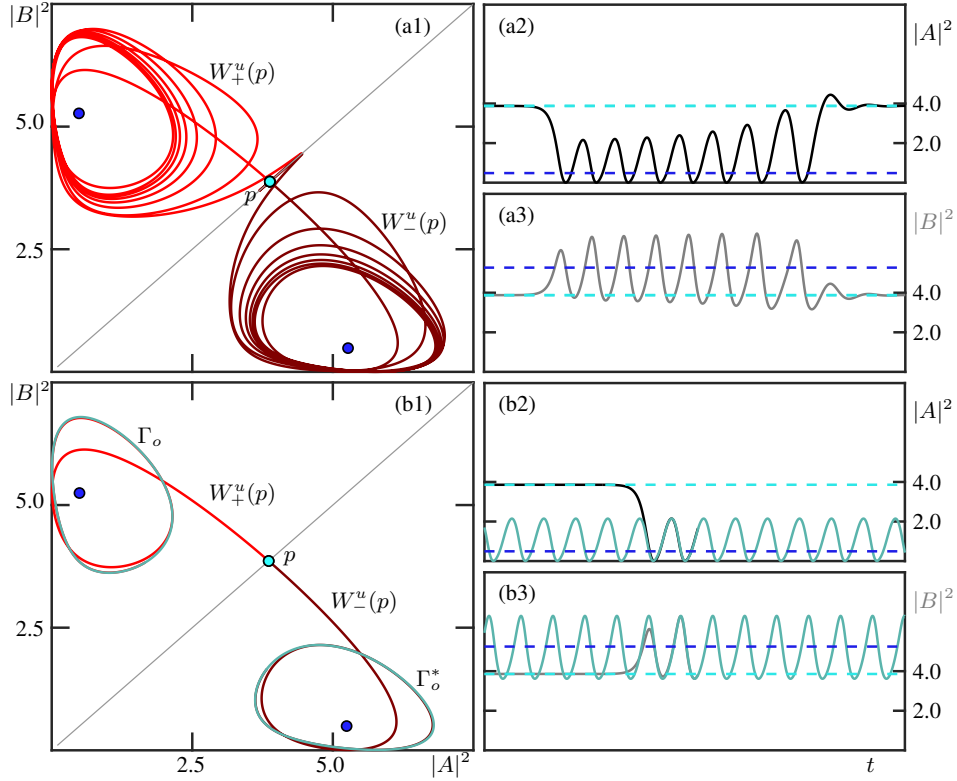}
\caption{Pairs of connecting orbits of system~\eref{eq:Couplednondim} for $\kappa = 2$ and $\delta=-5$, shown in the $(|A|^2, |B|^2)$-plane and as time series of $|A|^2$ and $|B|^2$, respectively. Panels~(a) show the Shilnikov homoclinic orbits of the symmetric saddle equilibrium $p$ at $f \approx 2.6227$, labeled $\mathbf{HOM}$ in \cref{fig:bifSetPlus}(c1); and panels~(b) show the nearby heteroclinic EtoP connections from $p$ to two asymmetric saddle periodic orbits at $f \approx 2.5900$.} 
\label{fig:homOrbits}  
\end{figure} 
%%%%%%%%%%%%%%%%%%%%%%%%%%%%%%%%%%%%%%%%%%%%%%%%%%%%%%%%

A feature in both panels~(c2) and~(d2) of \cref{fig:bifSetPlus} is the existence of points $\mathbf{HOM}$ of homoclinic bifurcation on a branch of periodic orbits. The corresponding limiting situation in phase space for $\delta=-5$ and $f \approx 2.6227$ in \cref{fig:homOrbits}(a) shows that there is a pair of Shilnikov orbits formed by the two branches $W_+^u(p)$ and $W_-^u(p)$ of the one-dimensional unstable manifold $W^u(p)$ of a symmetric saddle equilibrium $p$. Examination of the eigenvalues of $p$ shows that this Shilnikov bifurcation is of chaotic type, meaning that there must be further Shilnikov bifurcations of $p$ nearby \cite{Shil5}. Observe in \cref{fig:homOrbits}(a1) how each homoclinic orbit oscillates eight times around an asymmetric equilibrium before spiraling back into the saddle $p$; moreover, panels~(a2) and (a3) suggest that the homoclinic orbit to $p$ comes very close to a periodic orbit of saddle type. This suggests that there is a codimension-one bifurcation, which we refer to as an EtoP connection, where $W_+^u(p)$ and $W_-^u(p)$ form heteroclinic connections to a pair of saddle periodic orbit $\Gamma_o$ and $\Gamma_o^*$ with three-dimensional stable manifolds $W^s(\Gamma_o)$ and $W^s(\Gamma^*_o)$. We found the pair of EtoP connections with a numerical implementation of Lin's method \cite{KraRie1} at $\delta=-5$ and $f \approx 2.590037$; we refer to this bifurcation as $\mathbf{Hep}$ and it is shown in phase space in \cref{fig:homOrbits}(b). It has been shown in \cite{And2, Krauskopf_2004, RADEMACHER2005390} that an EtoP connection is an accumulation point in parameter space of sequences of $n$-homoclinic bifurcations for increasing $n$, which return to the equilibrium only after $n-1$ near passes. Hence, system~\eref{eq:Couplednondim} features infinitely many Shilnikov bifurcations with an increasing number of loops, which agrees with the fact that the homoclinic bifurcation at $\mathbf{HOM}$ and at nearby points is of chaotic type. 

Note that the connecting orbits shown in \cref{fig:homOrbits} are characterized by the fact that the two branches $W_+^u(p)$ and $W_-^u(p)$ of the one-dimensional manifold $W^u(p)$ remain in the region of the $(|A|^2, |B|^2)$-plane where $|A|^2 < |B|^2$ and where $|B|^2 < |A|^2$, respectively. Since any pair of Shilnikov homoclinic orbits is to the single symmetric equilibrium $p$, they are a mechanism for creating S-invariant periodic orbits and associated highly complicated dynamics, including chaotic attractors; these will be discussed in later sections.

%%%%%%%%%%%%%%%%%%%%%%%%%%%%%%%%%%%%%%%%%%%%%%%%%%%%%%%%
\begin{figure}
\centering
\includegraphics[scale=0.95]{./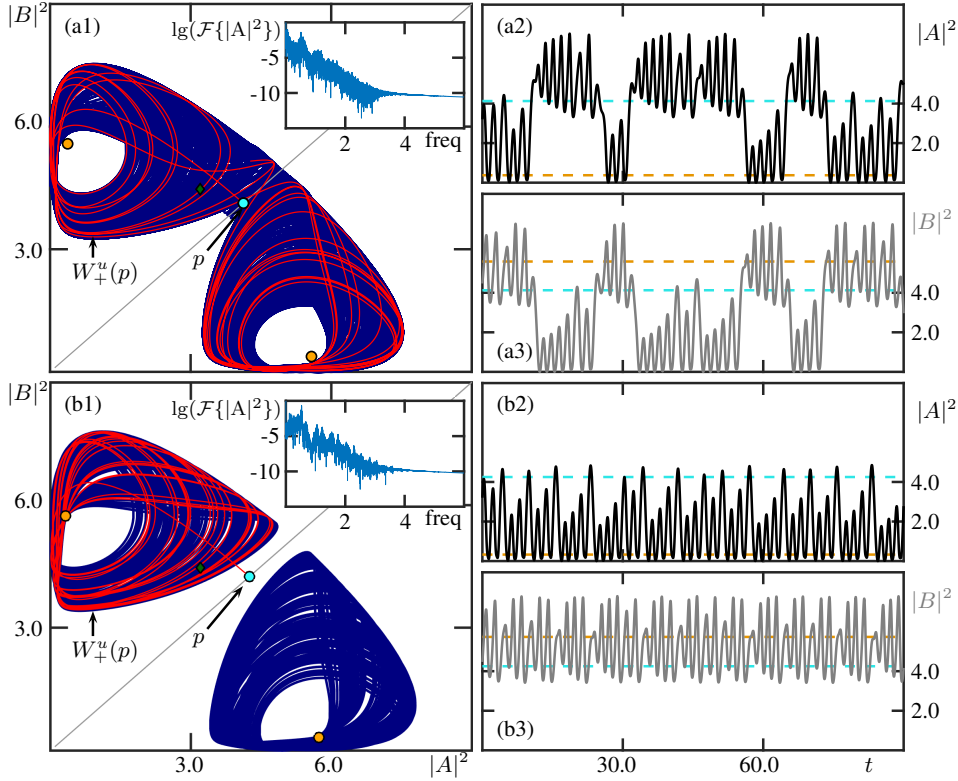}
\caption{Chaotic attractors (blue curves) of system~\eref{eq:Couplednondim} for $\kappa = 2$ and $\delta=-5$ at $f=3.06$ (a) and at $f=3.3$ (b), shown in the $(|A|^2, |B|^2)$-plane together with the branch $W_+^u(p)$ (red curve) and as time series of $|A|^2$ and $|B|^2$, respectively; the insets show the respective power spectrum on a logarithmic scale.}
\label{fig:chaos} 
\end{figure} 
%%%%%%%%%%%%%%%%%%%%%%%%%%%%%%%%%%%%%%%%%%%%%%%%%%%%%%%%

For $\delta=-5$ and $f=3.06$ we find a symmetric chaotic attractor; see \cref{fig:chaos}(a). Here the attractor is represented in panel~(a1) by a long computed trajectory, the power spectrum of the amplitude $A$ of which is shown in the inset; also shown is the branch $W_+^u(p)$ of the symmetric saddle equilibrium $p$, which can clearly be seen to switch across the symmetry line given by $|A|^2 = |B|^2$. Indeed, the time series of $|A|^2$ and $|B|^2$ in panels~(a2) and~(a3) show that the intensity switches irregularly between being higher in cavity A as opposed to cavity B and vice versa. We refer to this type of dynamics as \emph{chaotic behavior with chaotic switching}. For the larger value of $f=3.3$, one finds the pair of chaotic attractors shown in \cref{fig:chaos}(b). Note from panel~(b1) of \cref{fig:chaos} that the branch $W_+^u(p)$ converges to one of the two attractors and, hence, is now confined to the region where $|A|^2 < |B|^2$. The time series of $|A|^2$ and $|B|^2$ in panels~(b2) and~(b3) do not switch between the two cavities; we refer to this chaotic behavior as \emph{non-switching chaotic behavior}.  On the other hand, they have similar short-term characteristics to those in panels~(a2) and~(a3), which explains why the two cases cannot be distinguished by their power spectra; compare the insets of panels~(a1) and~(b1). The transition from \cref{fig:chaos}(a) to \cref{fig:chaos}(b) for increasing $f$ can be interpreted as a symmetry breaking of a chaotic attractor. However, it is more common to think of the transition for decreasing $f$ as a symmetry increasing bifurcation \cite{Chossat1988} of chaotic attractors as is discussed in \cref{sec:TangCycSadd}. Conditions when these may arise have been studied in, for example, \cite{Ben-Tal2002,Dellnitz1995,Heinrich2008, King1992,Melbourne1993}, and such a bifurcation has been shown experimentally to occur in electrical circuits \cite{Ashwin1990, Matsumoto1987}.

%; compare with \cref{fig:ShilChaos}
%%%%%%%%%%%%%%%%%%%%%%%%%%%%%%%%%%%%%%%%%%%%%%%%%%%%%%%%
\begin{figure}
\centering
\includegraphics[scale=0.95]{./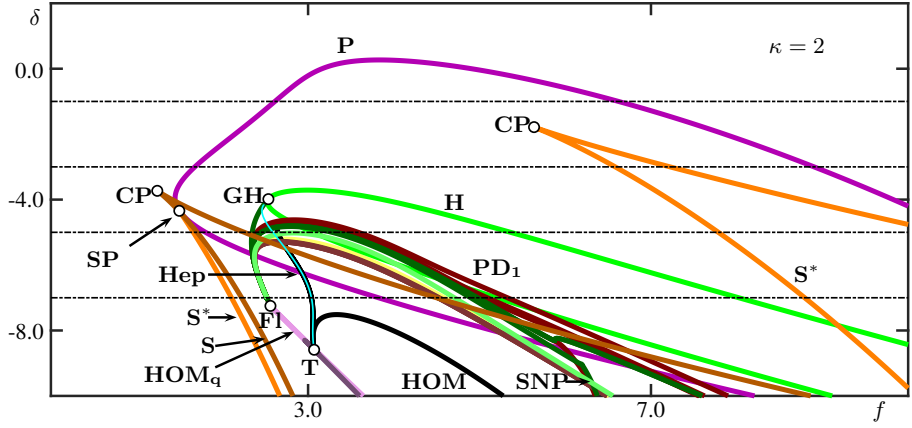}
\caption{Bifurcation diagram in the $(f,\delta)$-plane of system~\eref{eq:Couplednondim} for $\kappa=2$. Shown are curves of saddle-node bifurcations of symmetry equilibria $\mathbf{S}$ (brown) and asymmetric equilibria $\mathbf{S^*}$ (orange), pitchfork $\mathbf{P}$ (purple) and Hopf $\mathbf{H}$ (green) bifurcations, saddle-node bifurcations of periodic solutions $\mathbf{SNP}$ (dark-green), period-doubling bifurcations $\mathbf{PD}$ (red), Shilnikov bifurcations $\mathbf{Hom}$ to symmetric focus equilibria (black), homoclinic bifurcation to real symmetric equilibria (lilac and dark-lilac), and EtoP connections (cyan); also shown are codimension-two points of cusp $\mathbf{CP}$, saddle-node-pitchfork $\mathbf{SP}$, generalised Hopf $\mathbf{GH}$, Bykov T-point $\mathbf{T}$ and homoclinic flip $\mathbf{Fl}$ bifurcations; the horizontal dashed lines correspond to the panels of \cref{fig:bifSetPlus}.} 
\label{fig:bifSetKappa+2} 
\end{figure} 
%%%%%%%%%%%%%%%%%%%%%%%%%%%%%%%%%%%%%%%%%%%%%%%%%%%%%%%%

The fact that the unstable manifold $W^u(p)$ changes quite dramatically between the two cases shown in \cref{fig:chaos}(a) to \cref{fig:chaos}(b) already hints that this transition is associated with different types of connecting orbits involving the symmetric saddle equilibrium $p$. Indeed, families of different types of codimension-one phenomena organize the $(f, \delta)$-plane, as will be discussed in considerable detail in the sections that follow. As a starting point for this discussion, we present in \cref{fig:bifSetKappa+2} a bifurcation diagram in the $(f, \delta)$-plane for $\kappa=2$ with a focus on the loci of the bifurcations that we considered so far; in particular, the slices for fixed $\delta$ shown in \cref{fig:bifSetPlus} are indicated by horizontal dashed lines. In \cref{fig:bifSetKappa+2} the curves $\mathbf{S}$ of saddle-node and $\mathbf{P}$ of pitchfork bifurcation of symmetric equilibria are rendered from the formula in \cref{lem:local}, and the remaining curves have been obtained by numerical continuation of the respective bifurcations. 
%%%%%%%%%%%%%%%%%%%%%%%%%%%%%%%%%%%%%%%%%%%%%%%%%%%%%%%%
\begin{figure}
\centering
\includegraphics[scale=0.95]{./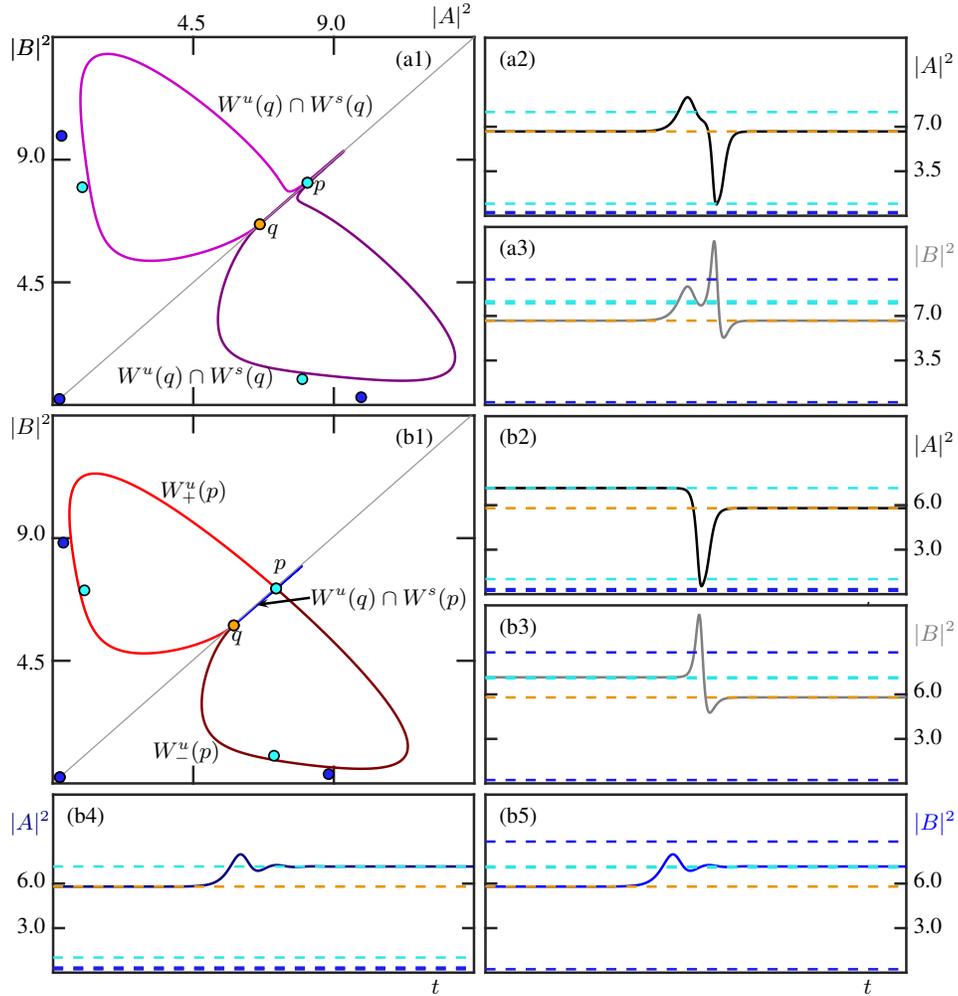}
\caption{Pairs of connecting orbits of system~\eref{eq:Couplednondim} for $\kappa = 2$, represented as in \cref{fig:homOrbits}. Panels~(a) show the homoclinic orbits to the real symmetric saddle $q$ at $\delta=-9.5$ and $f \approx 3.4139$; and panels~(b) show the heteroclinic cycle at the Bykov T-point $\mathbf{T}$ at $\delta \approx -8.5888$ and $f \approx 3.0705$, consisting of a pair of codimension-two connections from $p$ to $q$ (see (b2) and (b3) for the time series) and a single structurally stable connection in $\text{Fix}(\eta)$ from $q$ to $p$ (see (b4) and (b5) for the time series).} 
\label{fig:bykov} 
\end{figure} 
%%%%%%%%%%%%%%%%%%%%%%%%%%%%%%%%%%%%%%%%%%%%%%%%%%%%%%%%

The curve $\mathbf{P}$ in \cref{fig:bifSetKappa+2} encloses a large region in the $(f, \delta)$-plane, and it is the first bifurcation curve that is encountered in one-parameter slices in $f$ for fixed and decreasing $\delta$. The curve $\mathbf{P}$ is tangent at a codimension-two saddle-node pitchfork point $\mathbf{SP}$ to the curve $\mathbf{S}$ of saddle-node bifurcations of symmetric equilibria, which also has a cusp point $\mathbf{CP}$. Within the region bounded by $\mathbf{P}$, one finds the curve $\mathbf{S^*}$ of saddle-node and $\mathbf{H}$ of Hopf bifurcations of asymmetric equilibria. The curve $\mathbf{S^*}$ has a cusp point $\mathbf{CP}$ and on the curve $\mathbf{H}$ there is a codimension-two point $\mathbf{GH}$ of generalized Hopf bifurcation, from which a curve $\mathbf{SNP}$ of saddle-node bifurcations of periodic orbits emerges. Also shown in \cref{fig:bifSetKappa+2} are the curves $\mathbf{SNP}$ and $\mathbf{PD}$ obtained by continuing the saddle-node and period-doubling bifurcations of the different periodic orbits identified in \cref{fig:bifSetPlus}(c) and~(d). Notice that all these curves end up, together with the curve $\mathbf{SNP}$ emerging from $\mathbf{GH}$, at the point labeled $\mathbf{Fl}$ in \cref{fig:bifSetKappa+2}, which is a point of flip bifurcation on the curve $\mathbf{HOM_q}$ where one finds a pair of homoclinic orbits to a symmetric saddle equilibrium $q$ with real eigenvalues. 

A pair of homoclinic orbits along $\mathbf{HOM_q}$ of the saddle $q$, which has two-dimensional stable and unstable manifolds, is shown in \cref{fig:bykov}(a). Note that the connecting orbits pass very close to the symmetric saddle equilibrium $p$. Also shown in \cref{fig:bifSetKappa+2} are the curves $\mathbf{HOM}$ and $\mathbf{Hep}$ of the connecting Shilnikov and EtoP orbits from \cref{fig:homOrbits}, which end up at the point on $\mathbf{HOM_q}$ that is labeled $\mathbf{T}$. This codimension-two point is called a Bykov T-point, and it represents the moment where there exists a heteroclinic cycle between two saddle equilibria \cite{Glendinning1984,HomSan,Knobloch2013}. The heteroclinic cycle is shown in  \cref{fig:bykov}(b), and it consists of a pair of codimension-two connections from $p$ to $q$, the time series of which are presented in panels (b2) and (b3), and of a single structurally stable connection from $q$ to $p$, which lies in $\text{Fix}(\eta)$ and whose time series are presented in panels (b4) and (b5). As the bifurcation diagram in \cref{fig:bifSetKappa+2} already hints at, the codimension-two global bifurcation points $\mathbf{Fl}$ and $\mathbf{T}$ emerge as the main organizing centers for global bifurcations. The next sections are devoted to the study of associated families of different types of homoclinic and heteroclinic bifurcations that organize the overall dynamics of system~\eref{eq:Couplednondim} in the $(f, \delta)$-plane for $\kappa = 2$. As we will see, they are intimately related to transitions between chaotic attractors with different symmetry properties.

%%%%%%%%%%%%%%%%%%%%%%%%%%%%%%%%%%%%%%%%%%%%%%%%%%%%%%%%
\section{Kneading sequences and curves of Shilnikov bifurcations in the $(f, \delta)$-plane}
\label{sec:KneadSeq}

%%%%%%%%%%%%%%%%%%%%%%%%%%%%%%%%%%%%%%%%%%%%%%%%%%%%%%%%
\begin{figure}
\centering
\includegraphics[scale=0.95]{./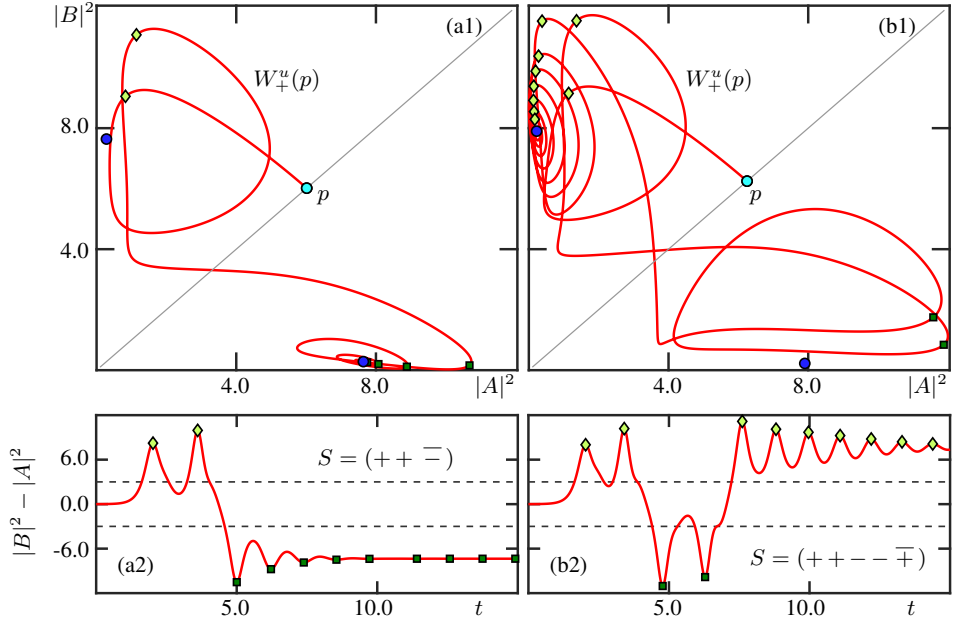} 
\caption{The branch $W^u_+(p)$ (red curve) of the symmetric equilibrium $p$ of system~\eref{eq:Couplednondim} for $\kappa = 2$ and $\delta = -7$ as it spirals to one of a pair of stable asymmetric equilibria for $f = 3.5$ (a) and for $f = 4.0$ (b), shown in the $(|A|^2,|B|^2)$-plane (top row) as temporal trace of $|B|^2-|A|^2$ (bottom row). The maxima (light green rhombi) and minima (dark green squares) of this time series define the respective shown kneading sequence $S$.} 
\label{fig:kneadTrajSample} 
\end{figure} 
%%%%%%%%%%%%%%%%%%%%%%%%%%%%%%%%%%%%%%%%%%%%%%%%%%%%%%%%
We now combine the computation and continuation of a representative number of different types of codimension-one global bifurcations with a parameter sweeping technique that determines regions in the $(f,\delta)$-plane where a topological invariant is constant. More specifically, we consider here the kneading sequence defined by the itinerary of the positive branch $W^u_+(p)$ of the saddle symmetric equilibrium $p$ of system~\eref{eq:Couplednondim} that is involved in (pairs of) Shilnikov homoclinic orbits and EtoP connections of codimension one, as well as in forming the codimension-two connections at the Bykov T-point $\mathbf{T}$. Indeed, it is known that the point $\mathbf{T}$ is responsible for the creation of infinitely many Shilnikov bifurcations and, thus, generates a complicated arrangement of global bifurcations in a parameter plane \cite{Shil3,Shil1}.  In light of the role of the one-dimensional manifold $W^u(p)$ in this, it is natural to consider a symbolic or kneading sequence $S = S(f,\delta)$ that records subsequent maxima and minima of the time series of $|B|^2-|A|^2$ generated by the specific trajectory $W^u_+(p)$. In other words, the kneading sequence $S$ records and distinguishes local oscillations with a dominant intensity of either $|B|^2$ or $|A|^2$ for every point in the $(f,\delta)$-plane. 

More specifically, every maximum with $|A|^2< |B|^2$ of the time series of $|B|^2-|A|^2$ generated by $W^u_+(p)$ is recorded as $+$, and every minimum with $|B|^2<|A|^2$ is recorded as a $-$, thus, generating (generically) a kneading sequence 
\begin{equation}
S = (s_1 \, s_2 \, s_3 \, \ldots) \in \Sigma = \{+,-\}^\N.
\end{equation}
The map from the $(f,\delta)$-plane to the kneading sequence $S = S(f,\delta)$ is well defined and locally constant close to a T-point $\mathbf{T}$. Moreover, it follows from the choice of $W^u_+(p)$ as the generating object that the kneading sequence $S$  always starts with the symbol $+$, that is, $s_1=+$. Indeed, one could consider instead the kneading sequence of the other branch $W^u_-(p)$ which has the opposite symbols due to the reflectional symmetry $\eta$. As is common, we denote by $\overline{+}$ and $\overline{-}$ the infinite repetition of the respective symbol. 

This definition is illustrated in \cref{fig:kneadTrajSample} with two examples where $W^u_+(p)$ converges to an attracting equilibrium in the region with $|A|^2< |B|^2$ in panels~(a) and with $|B|^2<|A|^2$ in panels~(b).  Specifically in panel~(a), $W_+^u(p)$ has two maxima with $|A|^2<|B|^2$, after which the trajectory switches to and remains in the region with $|B|^2<|A|^2$, yielding the kneading sequence $S= (++\overline{-})$. For the case shown in \cref{fig:kneadTrajSample}(b), on the other hand, the trajectory switches to the region with $|B|^2<|A|^2$, has two minima there and then switches back and converges to the attracting equilibrium with $|A|^2<|B|^2$, giving $S= (++--\overline{+})$. 

Notice that these two kneading sequences agree up to including the fourth symbol $s_4$ and then start to differ from their fifth symbol $s_5$. Hence, one may suspect that there are global bifurcations in between the two values $f = 3.5$ and $f = 4.0$ that generate this change in the kneading sequence. As we will see below, this is indeed the case: there is a Shilnikov bifurcation where $W_+^u(p)$ has maxima and minima leading to the finite sequence $S=(++--)$ before it returns to the point $p$.

More generally, the $(f,\delta)$-plane is divided into open regions of given kneading sequences, and a change of kneading sequence requires a global bifurcation involving the point $p$. In particular, any Shilnikov bifurcation of $p$ provides a mechanism that changes the kneading sequence. When system~\eref{eq:Couplednondim} exhibits a Shilnikov bifurcation, there is a finite number of extrema of the time series before $W_+^u(p)$ returns back close to $p$. To avoid picking up tiny extrema near the fixed-point subspace as $W^u_+(p)$ converges back to $p$, we consider and detect only maxima and minima of $|B|^2-|A|^2$ in the trajectory of $W_+^u(p)$ that also satisfy $3 < | |A|^2-|B|^2 |$, that is, are sufficiently far from the dividing case $ |A|^2 = |B|^2$. In this way, we are able to detect the relevant and sufficiently large maxima and minima that determine a finite kneading sequence at the respective homoclinic bifurcation; for notational convenience, we represent such a (nongeneric) finite kneading sequence of length $k$ by $S = (s_1 \, s_2 \, s_3 \, \ldots \, s_k )$, which we complete (in a slight abuse of notation) with an infinite string of zeros as $S = (s_1 \, s_2 \, s_3 \, \ldots \, s_k \, \overline{0})$. This allows us to define consistently for any kneading sequence $S$ the associated kneading invariant
\begin{equation}
\label{eq:kneadingInv}
K = \sum_{i=1}^{\infty}  s_i\,\frac{1}{2^i},
\end{equation} 
where the $s_i$ represent the sign of the corresponding term in the sum. Since $s_1=+$ the kneading invariant $K$ takes values in the interval $[0,1]$. For any given point in the $(f,\delta)$-plane, we only ever consider its kneading sequence up to $n$ symbols for a given $n \in \N$. The finite kneading sequences $S^n \in \Sigma^n = \{+,-\}^n$ define the $n$-cylinders, consisting of all infinite sequences starting with $S^n$. We refer to the kneading invariant of $S^n$ as $K^n$; by construction, the numbers $K^n$ divide the interval $[0,1]$ into $2^{(n-1)}$ subintervals of length $2^{-(n-1)}$, each of which represents a different symbol sequence of length $n$ that is assigned a specific color, as represented in the figures that follow by a discrete color bar.

Increasing $n$ refines the symbol sequences and subdivides this color scheme. We will use this fact to build up an increasingly complex picture of the division of the $(f,\delta)$-plane into regions of detected finite kneading sequences in $\Sigma^n$. More specifically, we perform parameter sweeps with a $1000 \times 1000$ grid in the parameter range shown and determine $S^n$ from the time series of $|B(t)|^2-|A(t)|^2$ of the trajectory $W_+^u(p)$ as computed by numerical integration from a point near $p$ that lies in its one-dimensional unstable eigenspace. We consider here kneading sequences of up to length $n = 12$ with a corresponding color scheme of $2^{11}$ colors defined by $K^n$. To perform these computation we use the software package \textsc{Tides}~\cite{TIDES2012}, which is able to compute $W_+^u(p)$ to high-precision and find the relevant maxima and minima of $|B(t)|^2-|A(t)|^2$ as $t$ increases. 

We remark that parameter sweeping techniques of topological invariants are a quite common way of detecting regions in a parameter plane with different qualitative behavior.  For example, parameter sweeping of kneading invariants has been used to showcase the complicated bifurcation structure near Bykov T-points in the Lorenz and Shimizu-–Morioka systems~\cite{Shil3,Shil1}, and to understand complicated bursting behavior in the Hindmarsh--Rose model \cite{Shil4}. Indeed, the package \textsc{Tides} has been explicitly designed to enable such parameter sweeps. 

An important aspect of the work presented here is that we use parameter sweeps for increasing $n$ to inform us which global bifurcations arise at the increasingly many boundaries between neighboring regions in the $(f,\delta)$-plane. Different families of Shilnikov bifurcations and EtoP connections are then detected and continued as curves. For this purpose, we represent connecting orbits as solutions of suitably defined boundary value problems, which are implemented and solved within the continuation package \textsc{Auto} \cite{Doe2}. In particular, we use a numerical implementation of Lin's method to detect connecting orbits between saddle periodic orbits and equilibria; see \cite{KraRie1} an entry point to this quite general and efficient approach. This combined approach has also been used in \cite{And2}, specifically, to characterize the two-parameter bifurcation diagram of the most complicated case of a homoclinic flip bifurcation known as case~\textbf{C}. This particular case involves infinite families of homoclinic orbits, as well as chaotic dynamics of a vector field model with a three-dimensional phase space \cite{HomSan}. The work presented here is in the same spirit. It uses parameter sweeping and systematic computations of relevant curves of global bifurcations in a complementary and interactive way to build up a comprehensive picture in the $(f,\delta)$-plane of the overall behavior of system~\eref{eq:Couplednondim}.

%%%%%%%%%%%%%%%%%%%%%%%%%%%%%%%%%%%%%%%%%%%%%%%%%%%%%%%%
\subsection{Kneading sequences for increasing $n$}

%%%%%%%%%%%%%%%%%%%%%%%%%%%%%%%%%%%%%%%%%%%%%%%%%%%%%%%%
\begin{figure}
\centering
\includegraphics[scale=0.95]{./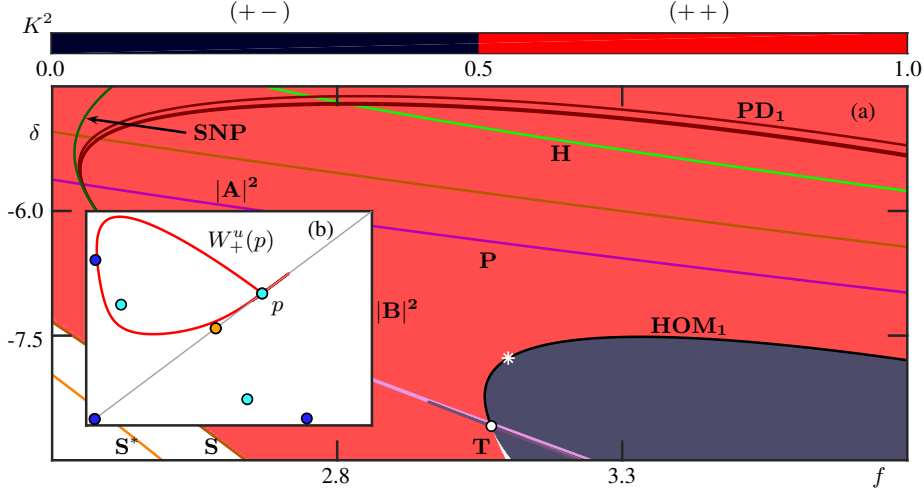} 
\caption{Coloring by finite kneading sequences $S^2$ and associated Shilnikov bifurcation near the Bykov T-point $\mathbf{T}$. Panel~(a) shows the $(f,\delta)$-plane of system~\eref{eq:Couplednondim} with the curves of local bifurcation of equilibria and periodic orbits from \fref{fig:bifSetKappa+2} (not labelled) and additionally the Shilnikov bifurcation curve $\mathbf{HOM_1}$ (black curve). The inset~(b) shows $W^u_+(p)$ in the $(|A|^2,|B|^2)$-plane at $\mathbf{HOM_1}$ for $(f,\delta) \approx (3.1000,-7.7740)$, as indicated by the white asterisk in panel~(a).} 
\label{fig:Knead2} 
\end{figure} 
%%%%%%%%%%%%%%%%%%%%%%%%%%%%%%%%%%%%%%%%%%%%%%%%%%%%%%%%

\Fref{fig:Knead2} shows the result of parameter sweeping with the identification of the two kneading sequences of length two. The $(f,\delta)$-plane in panel~(a) features two regions: that with $S^2=\left(\mathrel{+} \, \mathrel{+} \right)$ and that with $S^2=\left(\mathrel{+} \, \mathrel{-} \right)$, as indicated by the color coding of $K^2$. Superimposed we show the codimension-one local bifurcation curves of equilibria and periodic orbit from \fref{fig:bifSetKappa+2} as well as the codimension-one Shilnikov bifurcation curve $\mathbf{HOM_1}$. It has been obtained by continuation of the homoclinic orbit of the symmetric equilibrium $p$ with one loop, which is shown in panel~(b) for the specific parameter point indicated by the white asterisk in panel~(a). The kneading sequence along the curve $\mathbf{HOM_1}$ is $S=(\mathrel{+} \, \overline{0})$, which indeed implies that it bounds the two colored regions of different kneading sequences. Note that the separation of the $(f,\delta)$-plane into exactly two colored regions provides numerical evidence that no further Shilnikov bifurcations of $p$ with a single loop exist near the Bykov T-point $\mathbf{T}$.

%%%%%%%%%%%%%%%%%%%%%%%%%%%%%%%%%%%%%%%%%%%%%%%%%%%%%%%%
\begin{figure}
\centering
\includegraphics[scale=0.95]{./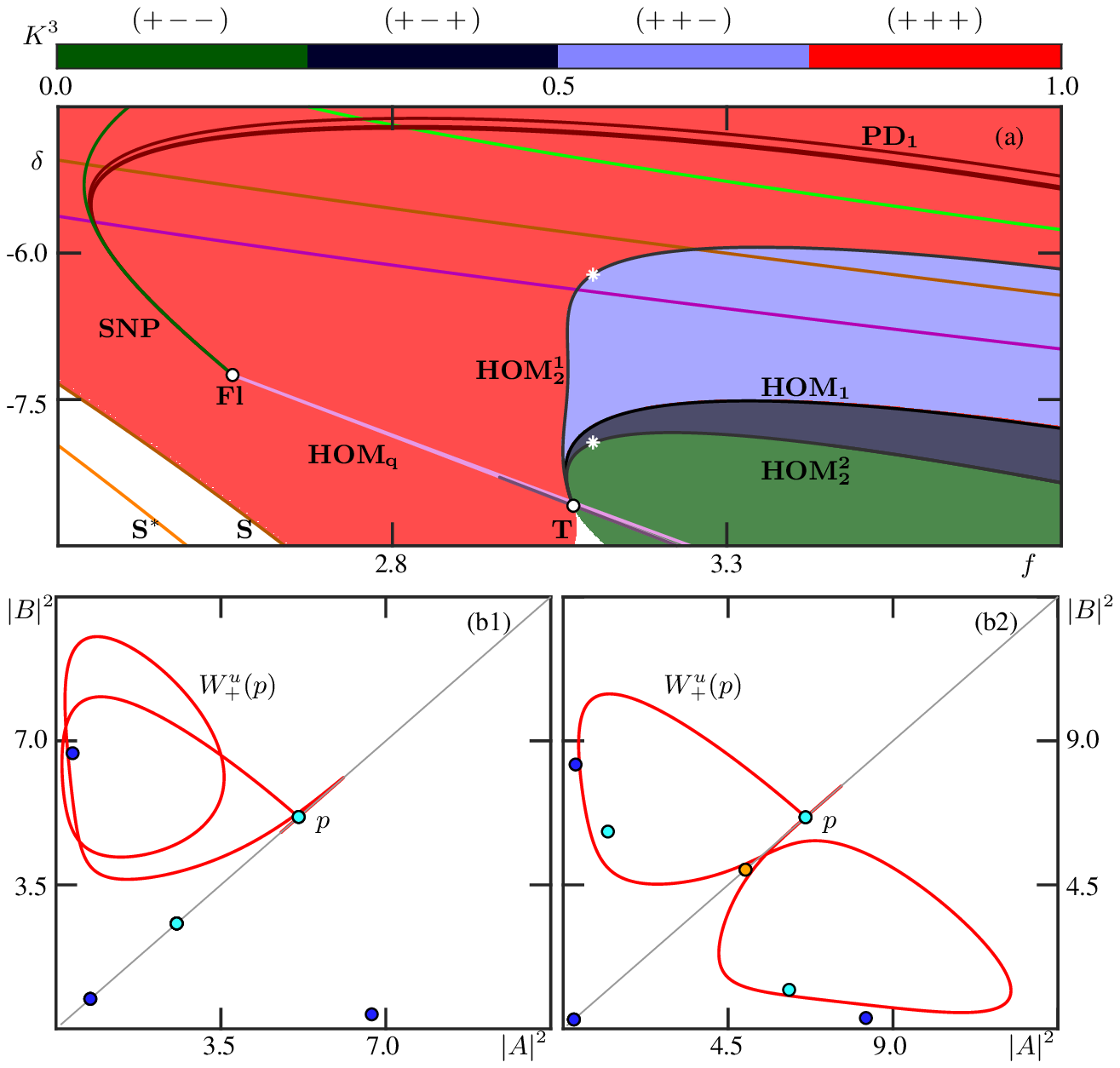} 
\caption{Coloring by finite kneading sequences $S^3$ and associated Shilnikov bifurcations near the Bykov T-point $\mathbf{T}$. Panel~(a) shows the $(f,\delta)$-plane of system~\eref{eq:Couplednondim} with the bifurcation curves from \fref{fig:Knead2} and additionally the Shilnikov bifurcation curves $\mathbf{HOM^1_2}$ and $\mathbf{HOM^2_2}$ (grey curves). The insets~(b1) and~(b2) show $W^u_+(p)$ in the $(|A|^2,|B|^2)$-plane at $\mathbf{HOM^1_2}$ for $(f,\delta) \approx (3.1000,-6.2153)$ and at $\mathbf{HOM^2_2}$ for $(f,\delta) \approx (3.10,-7.9449)$, respectively, as indicated by the white asterisks in panel~(a).} 
\label{fig:Knead2sub2} 
\end{figure} 
%%%%%%%%%%%%%%%%%%%%%%%%%%%%%%%%%%%%%%%%%%%%%%%%%%%%%%%%

%%%%%%%%%%%%%%%%%%%%%%%%%%%%%%%%%%%%%%%%%%%%%%%%%%%%%%%%
\begin{figure}
\centering
\includegraphics[scale=0.95]{./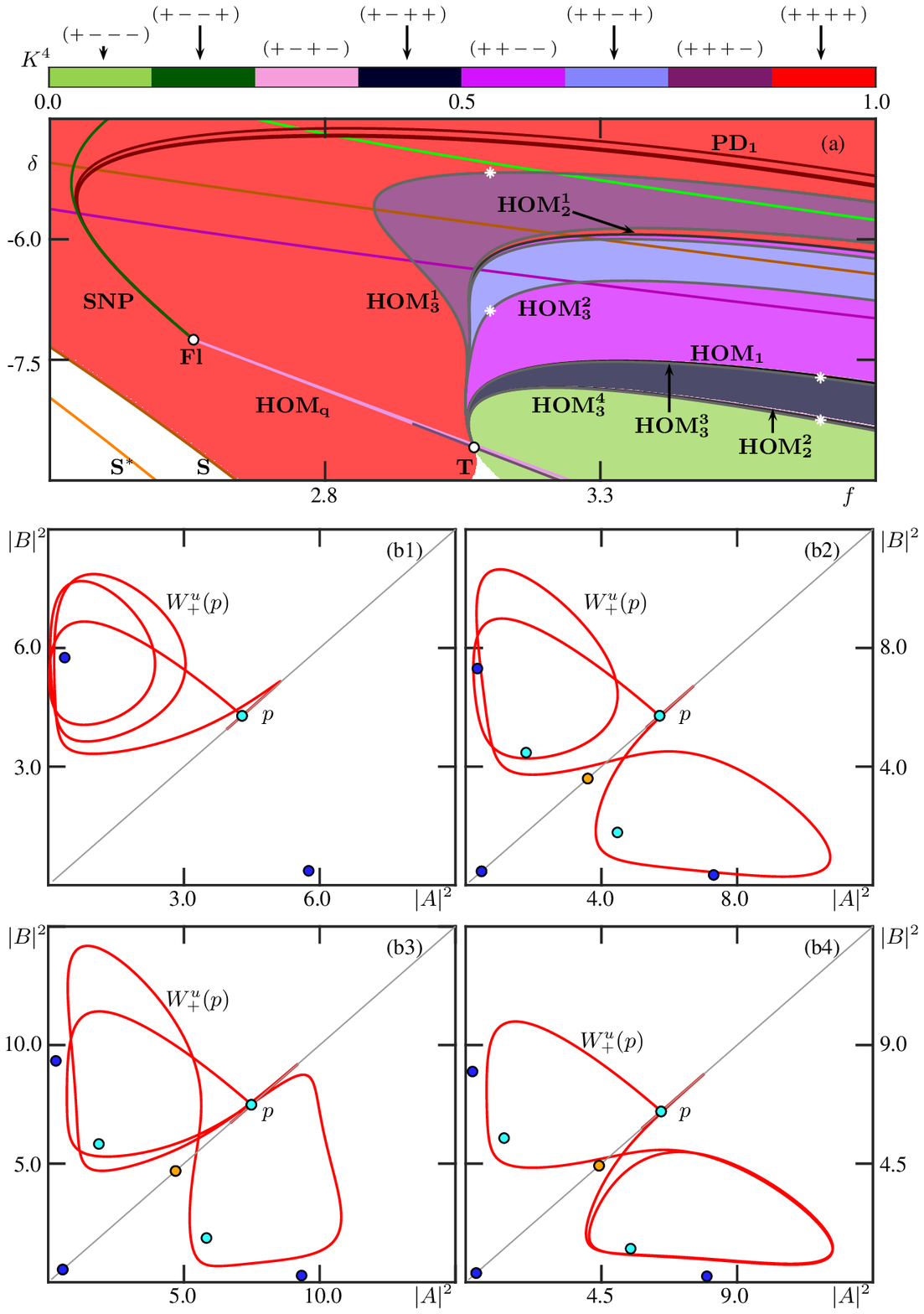} 
\caption{Coloring by finite kneading sequences $S^4$ and associated Shilnikov bifurcations near the Bykov T-point $\mathbf{T}$. Panel~(a) shows the $(f,\delta)$-plane of system~\eref{eq:Couplednondim} with the bifurcation curves from \fref{fig:Knead2sub2} and additionally the Shilnikov bifurcation curves $\mathbf{HOM^1_3}$ to $\mathbf{HOM^4_3}$ (grey curves). The insets~(b1) and~(b4) show $W^u_+(p)$ in the $(|A|^2,|B|^2)$-plane at $\mathbf{HOM^1_3}$ to $\mathbf{HOM^4_3}$, for $(f,\delta) \approx (3.1000,-5.1716)$, $(f,\delta) \approx (3.1000,-6.8865)$, $(f,\delta) \approx (3.7000,-7.7238)$ and $(f,\delta) \approx (3.7000, -8.2547)$, respectively, as indicated by the white asterisks in panel~(a).} 
\label{fig:Knead4}
\end{figure} 
%%%%%%%%%%%%%%%%%%%%%%%%%%%%%%%%%%%%%%%%%%%%%%%%%%%%%%%%

\Fref{fig:Knead2sub2} shows that, when finite kneading sequence $S^3$ are considered, the regions corresponding to $S^2$ are each split up into two subregions. The red region of $S^2=\left(\mathrel{+} \, \mathrel{+} \right)$ in \fref{fig:Knead2}(a) is split in \fref{fig:Knead2sub2}(a) into the regions with $S^3=\left(\mathrel{+} \, \mathrel{+} \, \mathrel{+} \right)$ (red) and $S^3=\left(\mathrel{+} \, \mathrel{+} \, \mathrel{-} \, \right)$ (purple), which are separated by a Shilnikov bifurcation curve $\mathbf{HOM^1_2}$ (dark grey). The branch $W^u_+(p)$ in the $(|A|^2,|B|^2)$-plane at $\mathbf{HOM^1_2}$ in panel~(b1) shows that along this curve one indeed finds the finite kneading sequence $S=\left(\mathrel{+} \, \mathrel{+} \, \overline{0} \right)$. Similarly, the grey region corresponding to $S^2=\left(\mathrel{+} \, \mathrel{-} \right)$ in \fref{fig:Knead2}(a) is split in \fref{fig:Knead2sub2}(a) into the regions with $S^3=\left(\mathrel{+} \, \mathrel{-} \, \mathrel{+} \right)$ (grey) and $S^3=\left(\mathrel{+} \, \mathrel{-} \, \mathrel{-} \right)$ (green) by the Shilnikov bifurcation curve $\mathbf{HOM^2_2}$ (dark grey), which has the finite kneading sequence $S^2=\left(\mathrel{+} \, \mathrel{-} \, \overline{0} \right)$ as is illustrated in \fref{fig:Knead2sub2}(b2).

%%%%%%%%%%%%%%%%%%%%%%%%%%%%%%%%%%%%%%%%%%%%%%%%%%%%%%%%
\begin{figure}
\centering
\includegraphics[scale=0.95]{./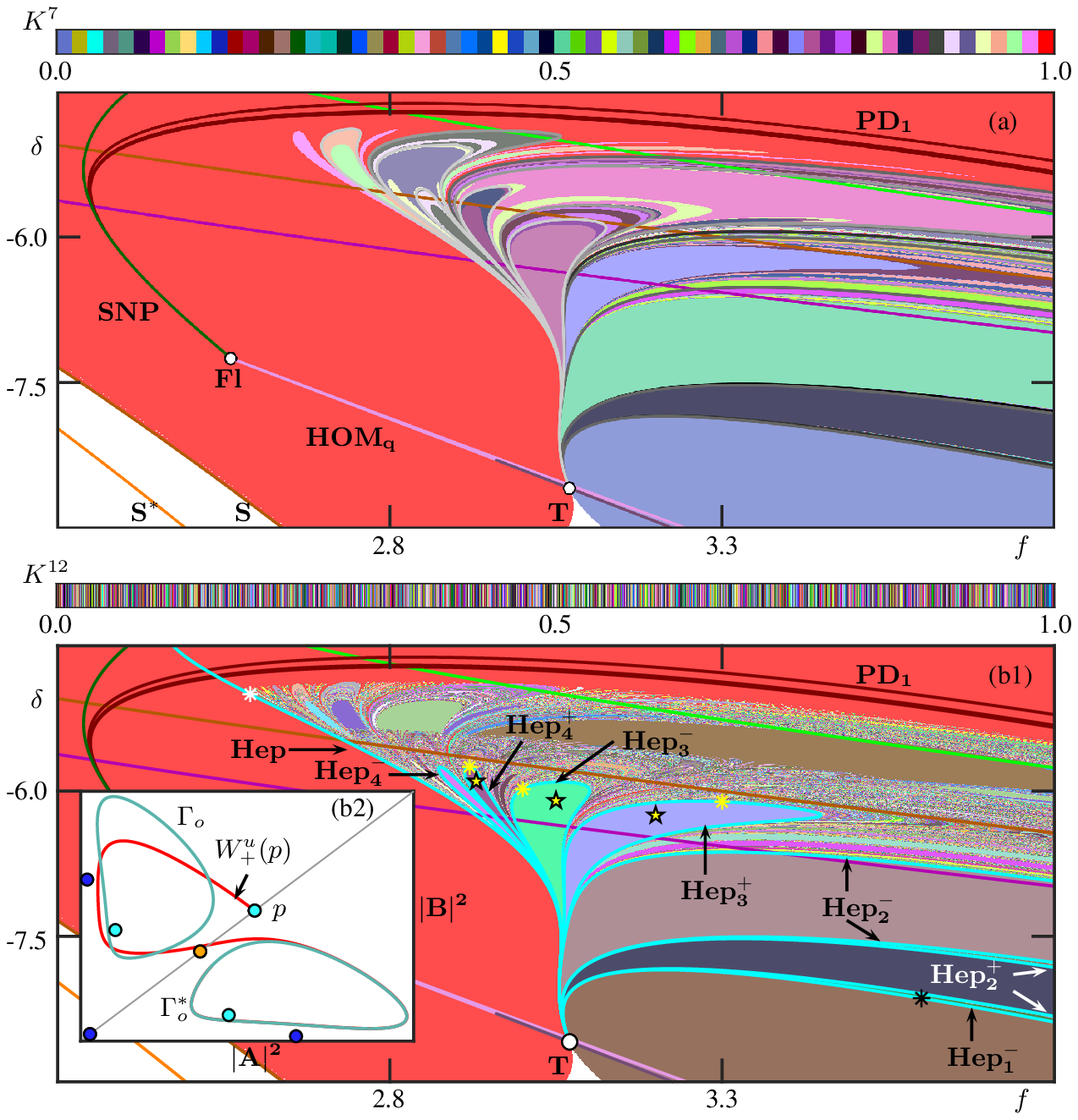} 
\caption{Coloring by finite kneading sequences $S^n$ and representative curves of Shilnikov bifurcations (grey and unlabelled) in the $(f,\delta)$-plane near the Bykov T-point $\mathbf{T}$ for $n=7$ (a) and for $n=12$ (b1); compare with \fref{fig:Knead2}(a) to \fref{fig:Knead4}(a). Panel~(b1) shows additionally curves $\mathbf{Hep}$,  $\mathbf{Hep_1^-}$ and $\mathbf{Hep^{+/-}_i}$, $\mathbf{i} = \mathbf{2} ,..,\mathbf{4}$ of codimension-one heteroclinic EtoP connections between $p$ and the orientable saddle periodic orbits $\Gamma_o$ and $\Gamma_o^*$. In panel~(b1) parameter values are indicated for which the curves $W^u_+(p)$, $\Gamma_o$ and $\Gamma_o^*$ are shown in separate figures, namely: for the white asterisk along the curve $\mathbf{Hep}$ see \fref{fig:homOrbits}(b); and for the yellow asterisks along the curves $\mathbf{Hep_3^+}$,  $\mathbf{Hep_3^-}$ and $\mathbf{Hep_4^+}$ and the yellow stars inside the respective bounded regions with constants kneading sequence see \fref{fig:EtoPSample}. The inset~(b2) shows $W^u_+(p)$ in the $(|A|^2,|B|^2)$-plane at $\mathbf{Hep^-_1}$ at the parameter values indicated by the black asterisk in panel~(b1).} 
\label{fig:BifEtoP}
\end{figure} 
%%%%%%%%%%%%%%%%%%%%%%%%%%%%%%%%%%%%%%%%%%%%%%%%%%%%%%%%

It cannot be seen clearly on the scale of \fref{fig:Knead2sub2}(a), but the purple region with kneading sequence $S^3=\left(\mathrel{+} \, \mathrel{+} \, \mathrel{-} \right)$ is an \emph{isola} within the region with $S^3=\left(\mathrel{+} \, \mathrel{+} \, \mathrel{+} \right)$ (red). That is, there exists another homoclinic bifurcation very close to the curve $\mathbf{HOM_1}$ with the same kneading sequence as $\mathbf{HOM^1_2}$.  Moreover, the region with kneading sequence $S^3=\left(\mathrel{+} \, \mathrel{-} \, \mathrel{+} \right)$ is also an isola that lies within the region with $S^3=\left(\mathrel{+} \, \mathrel{-} \, \mathrel{-} \right)$. In fact, isolas of given finite kneading sequences are a phenomenon that we will encounter repeatedly for increasing $n$; see already \fref{fig:BifEtoP}. 

When considering finite kneading sequence $S^4$, as in \fref{fig:Knead4}, we find again that each subregion of $S^3$ in \fref{fig:Knead2sub2}(a) is split up into two subregions each. This happens again via the transition through additional Shilnikov bifurcation curves, of which there are now four: the grey curves $\mathbf{HOM^1_3}$ to $\mathbf{HOM^4_3}$. As can be checked from panels~(b1)--(b4), the respective Shilnikov bifurcation indeed has the finite symbol sequence given by the three leading symbols of its two neighboring regions. In \fref{fig:Knead4} the curve $\mathbf{HOM^1_3}$ is a single curve that starts and ends at the Bykov T-point $\mathbf{T}$ and bounds the isola with $S^4=\left(\mathrel{+} \, \mathrel{+} \, \mathrel{+} \, \mathrel{-} \right)$ (dark purple) within the region with $S^4=\left(\mathrel{+} \, \mathrel{+} \, \mathrel{+} \, \mathrel{+} \right)$. Similarly, the single curve $\mathbf{HOM^2_3}$ bounds the isola with $S^4=\left(\mathrel{+} \, \mathrel{+} \, \mathrel{-} \, \mathrel{+} \right)$ within the region with $S^4=\left(\mathrel{+} \, \mathrel{+} \, \mathrel{-} \, \mathrel{-} \right)$. Notice that the regions with $S^4=\left(\mathrel{+} \, \mathrel{-} \, \mathrel{-} \, \mathrel{+} \right)$ and $S^4=\left(\mathrel{+} \, \mathrel{-} \, \mathrel{+} \, \mathrel{+} \right)$ are not discernible in \fref{fig:Knead4}(a). Nevertheless, these regions exist and are bounded by the curves $\mathbf{HOM^3_3}$ and $\mathbf{HOM^4_3}$ of Shilnikov bifurcations that allow the corresponding transition between the regions that differ in the fourth symbol of the kneading sequence; see panels~(b3) and (b4).

When the number of kneading symbols $n$ is increased, we find a repetition of the process of previous regions of constant kneading sequences being subdivided by additional curves of Shilnikov bifurcations with finite kneading sequences of length $n-1$. This is illustrated in \fref{fig:BifEtoP}(a), where the respective coloring in the $(f,\delta)$-plane near the Bykov T-point $\mathbf{T}$ for $S^7$; also shown are representative bounding curves of Shilnikov bifurcations as identified with Lin's method and then continued in parameters. \Fref{fig:BifEtoP}(a) gives an impression of how additional isolas of constant kneading sequence arise and how they are organized in the parameter plane. One can clearly observe accumulation processed where the space outside and between isolas is being filled with new isolas as the number $n$ of symbols increases, such as to $S^{12}$ in \Fref{fig:BifEtoP}(b) which is discussed in more detail below. Notice, in particular, how quite a few of such isolas reach a vertical maximum at the top of $(f,\delta)$-plane, around $\delta=-5$.

%%%%%%%%%%%%%%%%%%%%%%%%%%%%%%%%%%%%%%%%%%%%%%%%%%%%%%%%
\section{Kneading sequences and curves of EtoP connections in the $(f,\delta)$-plane}\label{sec:EtoPKneading}

Understanding the different accumulation processes in the organization of the $(f,\delta)$-plane near the Bykov T-point is a considerable challenge that requires the study of additional global bifurcations of different kinds. A natural starting point is the study of codimension-one heteroclinic or EtoP connections from the symmetric saddle equilibrium $p$ to different saddle periodic orbits with a single unstable Floquet multiplier.

%%%%%%%%%%%%%%%%%%%%%%%%%%%%%%%%%%%%%%%%%%%%%%%%%%%%%%%%
\subsection{Isolas bounded by EtoP connections from $p$ to $\Gamma_o$ and $\Gamma^*_o$}
\label{sec:IsolasGamma_o}
 
The first type of EtoP connection we consider is that to the pair of basic saddle periodic orbits $\Gamma_o$ and $\Gamma^*_o$ that bifurcate from the subcritical branch of the curve $\mathbf{H}$ of Hopf bifurcation. These symmetry-related periodic orbits are orientable, and each has a three-dimensional stable manifold and a two-dimensional unstable manifold. Hence, any connection from $p$ to $\Gamma_o$ (and $\Gamma^*_o$ by symmetry) is indeed of codimension one, while the connections back from $\Gamma_o$ and $\Gamma^*_o$ to $p$ are generically structurally stable (if they exists). \Fref{fig:BifEtoP}(b1) already shows a number of curves of such codimension-one EtoP connections, which are labelled  $\mathbf{Hep}$, $\mathbf{Hep_1^-}$ and $\mathbf{Hep^{+/-}_i}$ with $\mathbf{i} = \mathbf{1} ,..,\mathbf{4}$. Here, the ``$+$'' superscript means that the EtoP connection is from $p$ to $\Gamma_o$, while ``$-$'' superscript corresponds to the EtoP connection with $\Gamma^*_o$

In fact, we already encountered in \fref{fig:homOrbits}(b) the basic EtoP connection, where $W^u_+(p)$ stays in the region with $|A|^2 < |B|^2$ and connects directly to $\Gamma_o$; we continue to refer to this global bifurcation as $\mathbf{Hep}$ from now on. It arose as the limiting case of Shilnikov homoclinic orbits with an increasing number of loops of $W^u_+(p)$ also in the region with $|A|^2 < |B|^2$ and, hence, finite kneading sequences consisting of $n$ repetitions of the symbol $+$; see \fref{fig:homOrbits}(a). Accordingly, we find in \fref{fig:BifEtoP}(b1) that the isolas with kneading sequence $S^n=\left(\mathrel{+} \, \cdots \, \mathrel{+} \, \mathrel{-} \right)$  accumulate on the curve $\mathbf{Hep}$ of this basic EtoP connection. The curve $\mathbf{Hep}$ emerges from the Bykov T-point $\mathbf{T}$ and can be continued to the generalized Hopf bifurcation point $\mathbf{GH}$, outside the range shown in \fref{fig:BifEtoP}(b1), where the criticality of the Hopf bifurcation $\mathbf{H}$ changes compare with \fref{fig:bifSetKappa+2}.

Notice that during the transition through $\mathbf{Hep}$, or through any other EtoP connection from $p$ to $\Gamma_o$ and $\Gamma^*_o$, the one-dimensional unstable manifold $W^u(p)$ moves from the inside to the outside of the three-dimensional hypercylinders $W^s(\Gamma_o)$ and $W^s(\Gamma^*_o)$. Since the equilibrium $p$ is outside these hypercylinders, a Shilnikov bifurcation with a given finite kneading sequence can only occur after the corresponding EtoP connection has occurred that ensures that $W^u(p)$ is also outside $W^s(\Gamma_o)$ and $W^s(\Gamma^*_o)$. This explains why all  Shilnikov bifurcations lie to one side of the curve $\mathbf{Hep}$ in the $(f,\delta)$-plane. The other bounding curve in \fref{fig:BifEtoP}(b1) for the existence of Shilnikov bifurcations is the curve $\mathbf{Hep_1^-}$ for which the branch $W_+^u(p)$ has a single maximum in $||B|^2 - |A|^2|$ before moving to the region of phase space with $|B|^2 < |A|^2$ to accumulate on $W^s(\Gamma^*_o)$, see \fref{fig:BifEtoP}(b2). 

%%%%%%%%%%%%%%%%%%%%%%%%%%%%%%%%%%%%%%%%%%%%%%%%%%%%%%%%
\begin{figure}[t!]
\centering
\includegraphics[scale=0.95]{./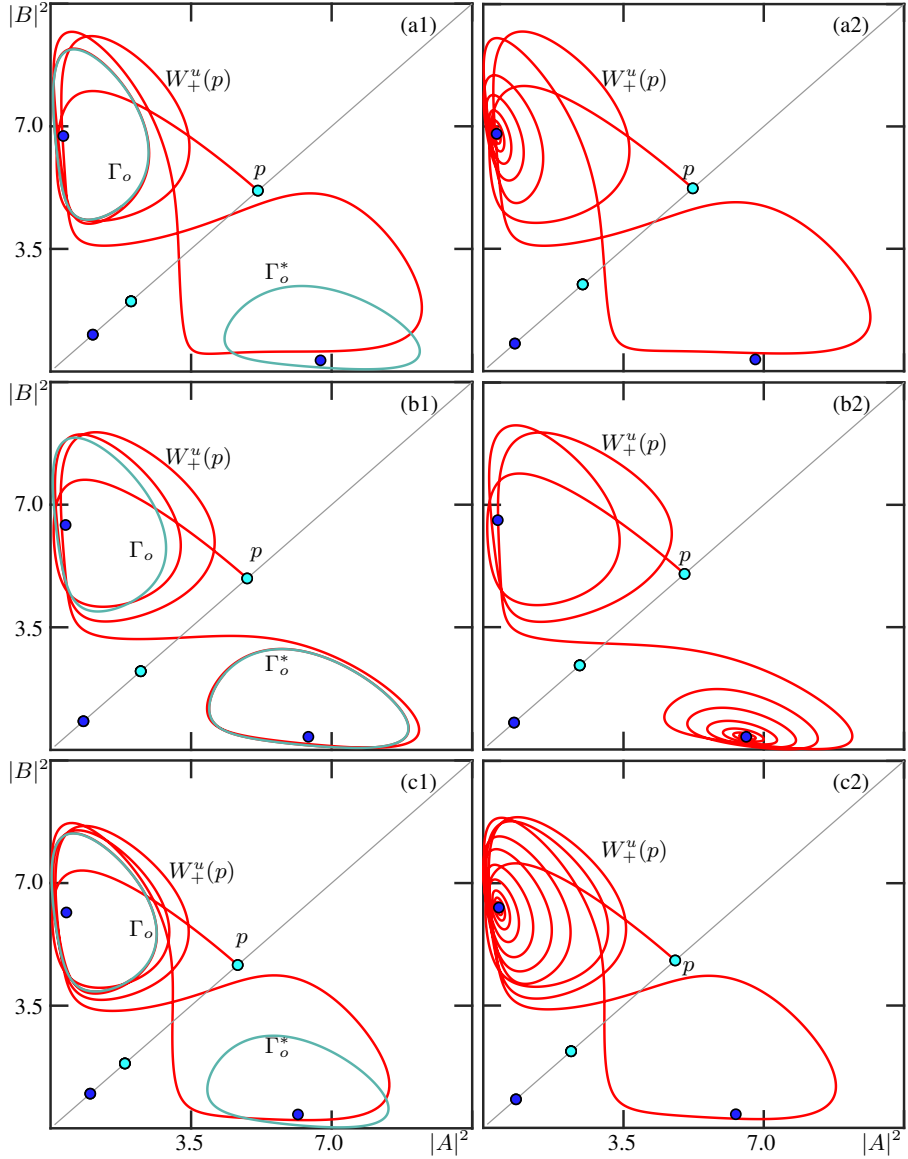} 
\caption{Heteroclinic orbits along the curves $\mathbf{Hep_3^+}$,  $\mathbf{Hep_3^-}$ and $\mathbf{Hep_4^+}$ (left column) and the behavior of $W^u_+(p)$ inside the respective enclosed regions of constant kneading sequence (right column); shown in the $(|A|^2,|B|^2)$-plane are $W^u_+(p)$ (red curve), $\Gamma_o$ and $\Gamma_o^*$ (cyan curves), and the different equilibria. Panels (a1), (b1) and (c1) are for $(f,\delta) \approx (3.3000, -6.1106)$ on $\mathbf{Hep_3^+}$, $(f,\delta) \approx (3.0000,-5.9816)$ on $\mathbf{Hep_3^-}$, and  $(f,\delta) \approx (2.9200, -5.7506)$ on $\mathbf{Hep_4^+}$, respectively; panels (a2), (b2) and (c2) are for $(f,\delta) = (3.2, -6.25)$, $(f,\delta) = (3.05, -6.1)$ and $(f,\delta) = (2.93, -5.9)$, respectively.}  \label{fig:EtoPSample}
\end{figure} 
%%%%%%%%%%%%%%%%%%%%%%%%%%%%%%%%%%%%%%%%%%%%%%%%%%%%%%%%

The argument above also implies that other types of EtoP connections are accumulated by Shilnikov bifurcations as well.  In \fref{fig:BifEtoP}(b1) we show examples of curves $\mathbf{Hep^{+/-}_i}$ of EtoP connections.  Each of these curves encloses an isola of constant kneading sequence in the $(f,\delta)$-plane, inside which no Shilnikov bifurcations occur; moreover, these boundary curves of EtoP connections are accumulated from the outside by curves of Shilnikov bifurcations. These properties, which we also found for the basic EtoP connection curve $\mathbf{Hep}$, can be understood by considering the kneading sequence of $W_+^u(p)$ at the moment of respective EtoP connection and within the enclosed isola. For the points highlighted by yellow asterisks and yellow stars in \fref{fig:BifEtoP}(b1), the branch $W_+^u(p)$ is shown in \fref{fig:EtoPSample} for points on the bifurcation curves $\mathbf{Hep_3^+}$,  $\mathbf{Hep_3^-}$  and $\mathbf{Hep_4^+}$ in the left column, and for the corresponding enclosed region of constant kneading sequence in the right column. By comparing panels (a1), (b1) and (c1) with panels (a2), (b2) and (c2) of \fref{fig:EtoPSample}, one observes that each curve $\mathbf{Hep^{+/-}_i}$ bounds a region with the same constant symbol sequence, characterized by the fact that $W_+^u(p)$ now reaches the respective attracting equilibrium inside the topological hypercylinder formed by $W^s(\Gamma_o)$ or $W^s(\Gamma_o^*)$, respectively. The chosen curves $\mathbf{Hep^{+/-}_i}$ and the associated isolas are characterized by the kneading sequences:
\begin{itemize}
\item 
$S=\left(\mathrel{+} \, \cdots \, \mathrel{+} \, \mathrel{-} \, \overline{\mathrel{+}} \right)$ for $\mathbf{Hep^{+}_i}$ and the region bounded by it;
\item 
$S=\left(\mathrel{+} \, \cdots \, \mathrel{+}  \, \mathrel{+} \, \overline{\mathrel{-}} \right)$ for $\mathbf{Hep^{-}_i}$ and the region bounded by it.
\end{itemize}
For both families, the subindex $\mathbf{i}$ denotes the number of symbols before the sequence reaches its repeating symbol. Notice in \fref{fig:BifEtoP}(b1) that for these two specific families of EtoP connections the respective curves and regions accumulate on the curve $\mathbf{Hep}$, which is compatible with the fact that the kneading sequence at the EtoP connection $\mathbf{Hep}$, and in the region to its left, is $S=\left(\mathrel{+} \overline{\mathrel{+}} \right)$. 

Isolas and associated (families of) EtoP connections from $p$ to $\Gamma_o$ and $\Gamma_o^*$ with initial kneading sequences other than $\mathrel{+} \, \cdots \, \mathrel{+}$ exist and can be found and continued in the same way. We remark that we have found this intricate interplay between homoclinic bifurcations and EtoP connections previously in the bifurcation diagram of an inclination flip bifurcation of case~\textbf{C}, where the region of homoclinic bifurcations (to a real saddle and not a focus in this case) is similarly organised by the stable manifold of codimension one of an orientable saddle periodic orbit \cite{And2}.

%%%%%%%%%%%%%%%%%%%%%%%%%%%%%%%%%%%%%%%%%%%%%%%%%%%%%%%%
\subsection{Isolas bounded by EtoP connections from $p$ to $\Gamma^s_o$}
\label{sec:IsolasGamma_symm}

%%%%%%%%%%%%%%%%%%%%%%%%%%%%%%%%%%%%%%%%%%%%%%%%%%%%%%%%
\begin{figure}
\centering
\includegraphics[scale=0.95]{./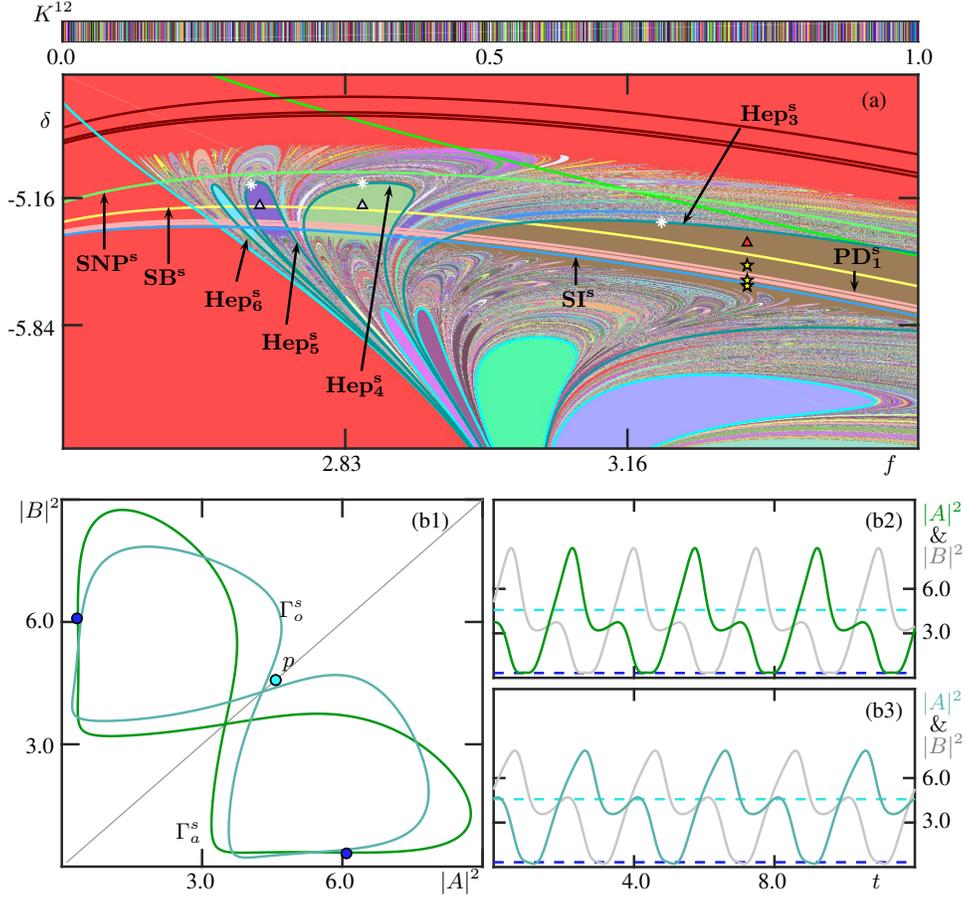} 
\caption{Bifurcations associated with S-invariant periodic orbits. Panel~(a) shows the coloring of the $(f,\delta)$-plane by finite kneading sequences $S^{12}$ with associated curves of saddle node of periodic orbit bifurcation $\mathbf{SNP^s}$ (light-green) that creates the orientable S-invariant periodic orbits $\Gamma^s_a$ and $\Gamma^s_o$, of symmetry breaking bifurcation $\mathbf{SB^s}$ (yellow) of $\Gamma^s_a$, of successive period-doubling bifurcations $\mathbf{PD_i^s}$ (brown), of symmetry increasing bifurcation $\mathbf{SI^s}$, and of EtoP connections $\mathbf{Hep^{s}_i}$ with $\mathbf{i} = \mathbf{3} ,\ldots,\mathbf{6}$ (dark cyan) from $p$ to $\Gamma^s_o$. The periodic orbits $\Gamma^s_a$ (green) and $\Gamma^s_o$ (cyan) at the red triangle with $(f,\delta) = (3.3, -5.4)$ are shown in the $(|A|^2,|B|^2)$-plane in panel~(b1) and as time series of $|A|^2$ and $|B|^2$ in panels~(b2) and~(b3), respectively. In panel~(a) parameter values are indicated for which the curves $W^u_+(p)$, $\Gamma^s_a$ and $\Gamma^s_o$ are shown in separate figures, namely: for the white triangles and the white asterisks along the curves $\mathbf{Hep_3^s}$, $\mathbf{Hep_4^s}$ and $\mathbf{Hep_5^s}$ see \fref{fig:EtoPSymSample}; and for the yellow stars see \fref{fig:TrueSymChaos}.} 
\label{fig:BifEtoPSym}
\end{figure} 
%%%%%%%%%%%%%%%%%%%%%%%%%%%%%%%%%%%%%%%%%%%%%%%%%%%%%%%%

%%%%%%%%%%%%%%%%%%%%%%%%%%%%%%%%%%%%%%%%%%%%%%%%%%%%%%%%
\begin{figure}
\centering
\includegraphics[scale=0.95]{./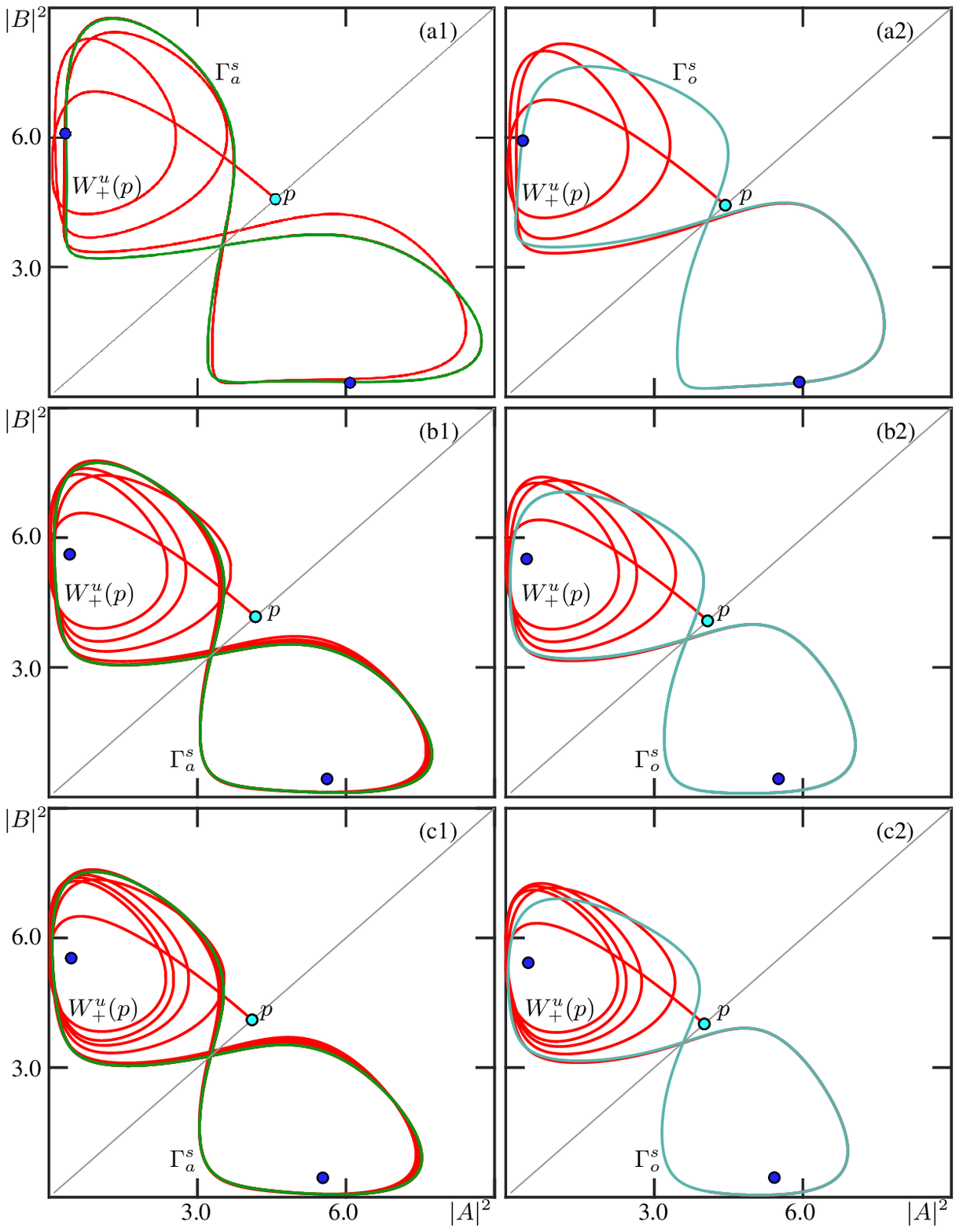} 
\caption{The behavior of $W^u_+(p)$ at the white triangles in \fref{fig:BifEtoPSym}(a) inside regions of constant kneading sequence (left column), and the associated heteroclinic orbits along the bounding curves $\mathbf{Hep_3^s}$ to $\mathbf{Hep_5^s}$ (right column); shown in the $(|A|^2,|B|^2)$-plane are $W^u_+(p)$ (red curve), the 
%%orientable S-invariant attracting 
periodic orbits $\Gamma^s_a$ (green curve) 
%%, the orientable S-invariant saddle periodic orbit 
and $\Gamma^s_o$ (cyan curve), and the different equilibria. Panels (a1), (b1) and (c1) are for $(f,\delta) = (3.30, -5.4)$, $(f,\delta) = (2.85, -5.2)$ and $(f,\delta)= (2.73, -5.2)$, respectively; panels (a2), (b2) and (c2) are for $(f,\delta) \approx (3.2000, -5.2902)$ on $\mathbf{Hep_3^s}$, $(f,\delta) \approx (2.8500, -5.0840)$ on $\mathbf{Hep_4^s}$, and $(f,\delta) \approx (2.7200, -5.0877)$ on $\mathbf{Hep_5^s}$, respectively.} 
\label{fig:EtoPSymSample} 
\end{figure} 
%%%%%%%%%%%%%%%%%%%%%%%%%%%%%%%%%%%%%%%%%%%%%%%%%%%%%%%%

There are other larger and similar regions of constant kneading sequences in  the $(f,\delta)$-plane of \fref{fig:BifEtoP}(b1) for smaller values of the detuning $\delta$ around $\delta \approx -5.0$. \Fref{fig:BifEtoPSym} focuses on the associated bifurcations and objects. The bifurcation diagram in panel~(a) is an enlargement, where three of these regions feature points marked by triangles. At the triangles, one finds a pair of S-invariant periodic orbits, which are created (for decreasing $\delta$) at the saddle-node of periodic orbit bifurcation curve $\mathbf{SNP^s}$. The two S-invariant periodic orbits at the red triangle, the attractor $\Gamma^s_a$ and the orientable saddle periodic orbit $\Gamma^s_o$, are shown in \fref{fig:BifEtoPSym}(b). The image in the $(|A|^2,|B|^2)$-plane in panel~(b1) is accompanied by time series of $\Gamma^s_a$ and $\Gamma^s_o$ that show $|A|^2$ and $|B|^2$ in panels~(b2) and (b2), respectively, which clearly show the phase shift over half a period between the two cavities that is characteristic for S-invariant periodic orbits. 

%%%%%%%%%%%%%%%%%%%%%%%%%%%%%%%%%%%%%%%%%%%%%%%%%%%%%%%%
\begin{figure}[th!]
\centering
\includegraphics[scale=0.95]{./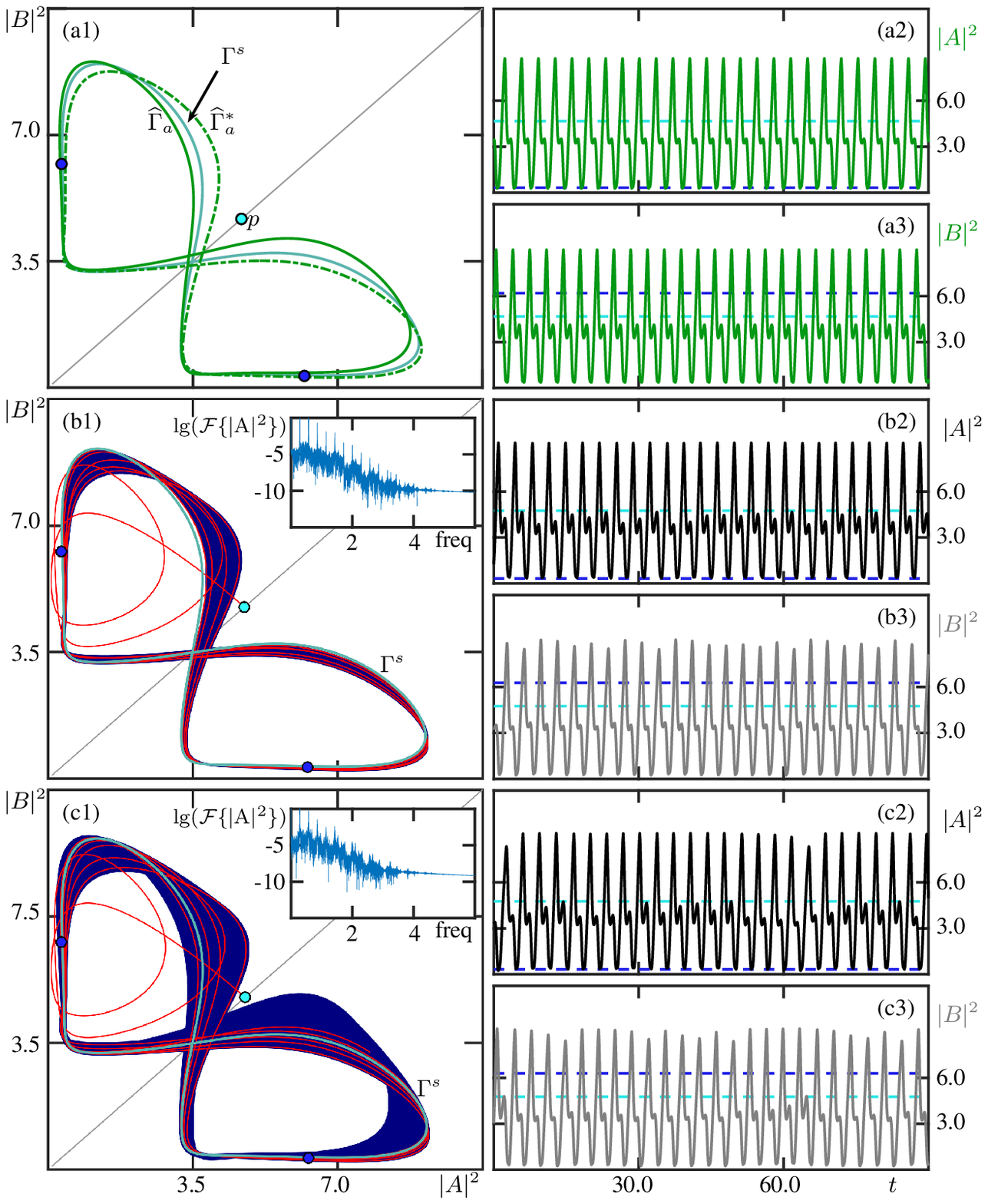} 
\caption{Attracting symmetry-broken periodic orbits $\widehat{\Gamma}_a$ (green curve) and chaotic attractors (blue curves) of system~\eref{eq:Couplednondim} for $\kappa = 2$, $f=3.3$, and $\delta = -5.52$ in row (a), $\delta = -5.61$ in row (b) and $\delta = -5.63$ in row (c); shown in the $(|A|^2, |B|^2)$-plane with the branch $W_+^u(p)$ (red curve) and as time series of $|A|^2$ and $|B|^2$, respectively; the insets show the power spectra of the chaotic attractors.}
\label{fig:TrueSymChaos}
\end{figure} 
%%%%%%%%%%%%%%%%%%%%%%%%%%%%%%%%%%%%%%%%%%%%%%%%%%%%%%%%

\Fref{fig:EtoPSymSample} shows the branch $W^u_+(p)$ at parameter points inside and at the boundary of these regions of constant kneading sequences with the triangles. As is shown in panels~(a2), (b2) and (c2) for the parameter points at the triangles inside each region, $W^u_+(p)$ accumulates on the S-invariant attracting periodic orbit $\Gamma^s_a$, after initially making three, four and five loops in the region with $|A|^2 < |B|^2$, respectively. Panels~(a2), (b2) and (c2) of \fref{fig:EtoPSymSample} show that each of these regions is associated with an EtoP connection that connects $p$ with the orientable S-invariant saddle periodic orbit $\Gamma^s_o$, whose stable manifold $W^s(\Gamma^s_o)$ bounds the basin of attraction of  $\Gamma^s_a$. We refer to this family of EtoP connections as $\mathbf{Hep^{s}_i}$, owing to the fact that they and the regions they enclose are characterized by the kneading sequences
\begin{itemize}
\item 
$S=\left(\mathrel{+} \, \cdots \, \mathrel{+} \, \overline{\mathrel{-}\,\mathrel{+}} \right)$ where the initial finite sequence is of length $\mathbf{i}$.
\end{itemize}
The associated curves $\mathbf{Hep^{s}_i}$ for $\mathbf{i} = \mathbf{3} ,\ldots,\mathbf{6}$ are shown in \fref{fig:BifEtoPSym}(a), where the parameter points of the EtoP connections shown in \fref{fig:EtoPSymSample}(a2)--(c2) are marked by white asterisks. The curves $\mathbf{Hep^{s}_i}$ of EtoP connections to $\Gamma^s_o$ start and end at the Bykov T-point $\mathbf{T}$ (outside the range shown) and, thus, delimit isolas in the $(f,\delta)$-plane. Inside each of these isolas the unstable manifold $W^u(p)$ accumulates on the attractor with basin boundary $W^s(\Gamma^s_o)$.

For sufficiently large $\delta$, the attractor inside the region bounded by $\mathbf{Hep^s_i}$ is the attracting S-invariant periodic orbit $\Gamma^s_a$, which is the periodic orbit $\Gamma^s_a$ in \fref{fig:EtoPSymSample}(a1), (b1) and (c1). \Fref{fig:BifEtoPSym}(a) also shows associated bifurcation curves, namely those of symmetry breaking bifurcation $\mathbf{SB}^s$ of $\Gamma^s_a$, of period-doubling bifurcations $\mathbf{PD}_i^s$ of nonsymmetric attracting periodic orbits, and of a subsequent symmetry increasing bifurcation $\mathbf{SI^s}$; the three stars are the parameter points for which the respective attractors are shown in \fref{fig:TrueSymChaos}. The S-invariant periodic orbit $\Gamma^s_a$ from \fref{fig:BifEtoPSym}(b) becomes a saddle periodic orbit, denoted $\Gamma^s$, at the symmetry-breaking bifurcation $\mathbf{SB}^s$. As is shown in \fref{fig:TrueSymChaos}(a1) for the top yellow star in \fref{fig:BifEtoPSym}(a), it creates for lower values of $\delta$ the pair of nonsymmetric periodic orbits $\widehat{\Gamma}_a$ and $\widehat{\Gamma}^*_a$, which are each others counterparts under the reflection $\eta$. This pair coexists with the S-invariant saddle periodic orbit $\Gamma^s$ (the continuation of $\Gamma_a^s$ past $\mathbf{SB}^s$). Panels~(a2) and~(a3) for $\widehat{\Gamma}_a$ illustrate the broken S-invariance of $\widehat{\Gamma}_a$: the time series of $|A|^2$ and $|B|^2$ are indeed no longer phase shifts over half a period of one another. The boundary between the basins of attraction of the symmetry-broken attractors $\widehat{\Gamma}_a$ and $\widehat{\Gamma}^*_a$ is the three-dimensional stable manifold $W^s(\Gamma^s)$ of $\Gamma^s$. Note from \fref{fig:EtoPSymSample}(a1) that the branch of $W_+^u(p)$ accumulates on $\widehat{\Gamma}_a$. 

When the (cascade of) subsequent period-doubling bifurcations in \fref{fig:BifEtoPSym}(a) is crossed, $\widehat{\Gamma}_a$ and $\widehat{\Gamma}^*_a$ successively period-double and a pair of chaotic attractors is created. \Fref{fig:TrueSymChaos}(b1) shows the chaotic attractor, associated with $\widehat{\Gamma}_a$, on which $W_+^u(p)$ accumulates. Notice that this pair of chaotic attractors is very different from the pair shown in \fref{fig:chaos}(b), which are each restricted to one of the two cavities and hence never show switching between cavities. Rather, as panels~(b2) and~(b3) show, this attractor is characterized by the consistent alternation between the two cavities, that is, the regions with  $|A|^2 < |B|^2$ and with $|B|^2 < |A|^2$. In other words, the kneading sequence generated by $W_+^u(p)$ remains unchanged, which is represented in \fref{fig:BifEtoPSym}(a) by the fact that the corresponding parameter point lies in the same region of constant kneading sequence. We refer to this chaotic behavior as \emph{chaotic behavior with regular switching}.  The boundary between the two symmetry-related attractors is still $W^s(\Gamma^s)$, and they merge into a single symmetric chaotic attractor soon after they are created at a codimension-one homoclinic tangency where two-dimensional unstable manifold $W^u(\Gamma^s)$ becomes tangent to the three-dimensional stable manifold $W^s(\Gamma^s)$ at the bifurcation curve $\mathbf{SI^s}$. This is a symmetry-increasing bifurcation of chaotic attractors, where the two chaotic attractors collide with their common basin boundary $W^s(\Gamma^s)$ to become a single chaotic attractor. \Fref{fig:TrueSymChaos}(c1) illustrates that this symmetric chaotic attractor does still not contain the saddle point $p$. As panels~(c2) and~(c3) show, any trajectory on this symmetric attractor, which exists in the same region of constant kneading sequence, still alternates between the two cavities. In other words, the kneading sequence generated by $W_+^u(p)$ remains unchanged throughout the entire transition from $\Gamma^s_a$ via $\widehat{\Gamma}_a$ and the two separate chaotic attractors to this single chaotic attractor. There are also no discernible changes in the time evolutions of the intensities compared to the situation in panels~(b).

In contrast to the isolas bounded by the curves $\mathbf{Hep^{+/-}_i}$ of EtoP connections from $p$ to the symmetric pair of saddle periodic orbits $\Gamma_o$ and $\Gamma_o^*$, inside the isolas bounded by the curves $\mathbf{Hep^{s}_i}$ in \fref{fig:BifEtoPSym}(a) the kneading sequence is not constant throughout. Rather, there are infinitely many further isolas with different kneading sequences inside each of these isolas for values of $\delta$ below what appears to be a well-defined curve. In particular, notice that these include the isolas bounded by $\mathbf{Hep^{+/-}_i}$; for example, the isola of $\mathbf{Hep^s_3}$ clearly contains the two sub-isolas bounded by  $\mathbf{Hep^-_3}$ and $\mathbf{Hep^+_4}$. As we will discuss in the next section, such sudden changes of the kneading sequence are due to a global bifurcation involving the symmetric saddle equilibrium $p$.

%%%%%%%%%%%%%%%%%%%%%%%%%%%%%%%%%%%%%%%%%%%%%%%%%%%%%%%%
\section{Tangency bifurcation of $W^s(p)$ and degenerate singular cycles to a saddle focus}\label{sec:TangCycSadd}

Up to now, we considered different codimension-one Shilnikov bifurcations of the symmetric equilibrium $p$, as well as EtoP connections between $p$ and saddle periodic orbits with different symmetry properties. In combination with determining the kneading sequences generated by $W^u(p)$, we found an emerging picture of infinitely many curves of such global bifurcations in the $(f,\delta)$-plane that accumulate on one another in a complicated way. We now address two important questions regarding the observed phenomena:\\[-3mm]
\begin{enumerate}
\item Why and when do curves of Shilnikov bifurcations accumulate on curves of EtoP connections?\\[-4mm]
\item What determines different observed upper boundaries of isolas in the $(f,\delta)$-plane?\\[-3mm]
\end{enumerate}
The answers to these two questions are intimately related and concern the existence of heteroclinic cycles involving the point $p$. For such heteroclinic cycles to exist, apart from the codimension-one connection defining the respective curve, there must also exit structurally stable connections between $p$ and the respective saddle objects \cite{ HomSan, KraRie1, RADEMACHER2005390,RADEMACHER2010305}. These additional connections are structurally stable heteroclinic orbits, which are created at codimension-one (generically quadratic) tangencies or folds between two global invariant manifolds.

As we saw in \cref{sec:KneadSeq}, the three-dimensional stable manifolds of orientable saddle periodic orbits, such as $\Gamma_o$ and $\Gamma^s_o$, are separatrices, and their relative positions with $W^u(p)$ determines whether $W^u(p)$ and $W^s(p)$ can intersect or not, that is, have a Shilnikov bifurcation or not. The fact that a Shilnikov bifurcation is possible, however, is not sufficient to explain the accumulation process of Shilnikov bifurcations onto the respective EtoP connection. Such an accumulation requires additionally that, at the codimension-one heteroclinic EtoP connection, the separatrix $W^s(p)$ comes arbitrary close to the stable manifold(s) of the respective saddle periodic orbit(s). Indeed, in this case, an intersection between $W^u(p)$ and $W^s(p)$ and, hence, a Shilnikov bifurcation can be created by an arbitrarily small perturbation in parameters. 

%%%%%%%%%%%%%%%%%%%%%%%%%%%%%%%%%%%%%%%%%%%%%%%%%%%%%%%%
\subsection{First heteroclinic tangency between $W^s(p)$ and $W^u(\Gamma_o)$ and $W^u(\Gamma_o^*)$}
\label{sec:KneadSeqGamma_o}

For the observed accumulation of Shilnikov bifurcations onto the curve $\mathbf{Hep}$ of the basic EtoP connection between $p$ and the pair of cycles $\Gamma_o$ and $\Gamma_o^*$ there must exist structurally stable heteroclinic orbits that converge backwards in time to $\Gamma_o$ and forwards in time to $p$ (and similarly for $\Gamma_o^*$). This means that the intersection set of the two-dimensional unstable manifold $W^u(\Gamma_o)$ and the three-dimensional stable manifold $W^s(p)$ is non-empty and transversal, so that there is a pair of heteroclinic cycles. The transversality of this intersection implies, as a consequence of the $\lambda$-Lemma \cite{Palis1,Wigg1}, that $W^s(p)$ is indeed arbitrary close to $W^s(\Gamma_o)$ and $W^s(\Gamma_o^*)$ near $p$; see also \cite{And1, And2} where this general phenomenon is demonstrated close to homoclinic flip bifurcations.  

%%%%%%%%%%%%%%%%%%%%%%%%%%%%%%%%%%%%%%%%%%%%%%%%%%%%%%%%
\begin{figure}
\centering
\includegraphics[scale=0.95]{./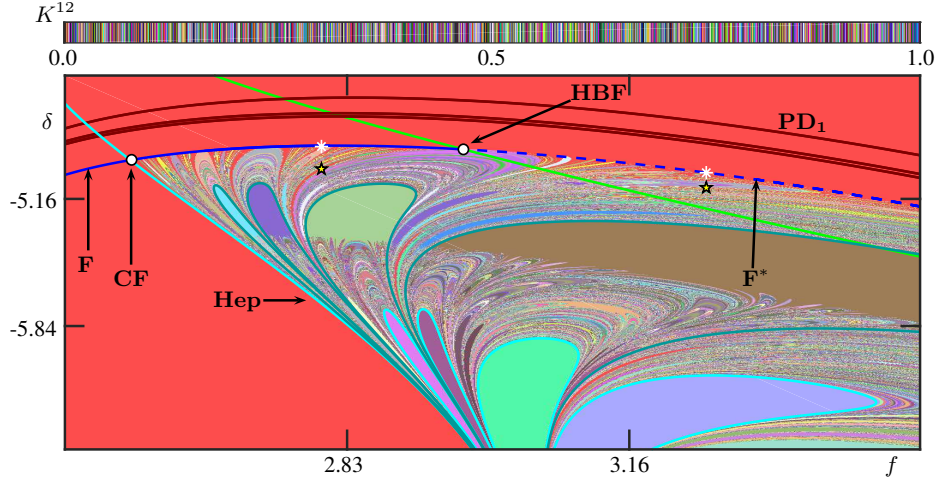} 
\caption{Bifurcations associated with the symmetric attractors arising from $\Gamma_a$ and $\Gamma^*_a$. Shown is the coloring of the $(f,\delta)$-plane by finite kneading sequences $S^{12}$ with curves of period-doubling bifurcations $\mathbf{PD_i}$ (brown), of Hopf bifurcation $\mathbf{H}$ where the saddles $\Gamma_o$ and $\Gamma_o^*$ bifurcate with the two stable equilibria to create the pair of saddle equilibria $q$ and $q^*$, and of folds $\mathbf{F}$ (blue) and $\mathbf{F^*}$ where $W^s(p)$ is tangent to $W^u(\Gamma_o)$ and $W^u(\Gamma_o^*)$, and to $W^u(q)$ and $W^u(q^*)$, respectively. Also labelled are the codimension-two points $\mathbf{CF}$ where $\mathbf{F}$ intersects the curve $\mathbf{Hep}$, and $\mathbf{HBF}$ where the tangency of $W^s(p)$ coincides with the Hopf bifurcation. The stars and asterisks indicate the parameter points chosen for
the panels in \fref{fig:BifTransSub2}.} 
\label{fig:BifTrans}
\end{figure} 
%%%%%%%%%%%%%%%%%%%%%%%%%%%%%%%%%%%%%%%%%%%%%%%%%%%%%%%%

%%%%%%%%%%%%%%%%%%%%%%%%%%%%%%%%%%%%%%%%%%%%%%%%%%%%%%%%
\begin{figure}
\centering
\includegraphics[scale=0.95]{./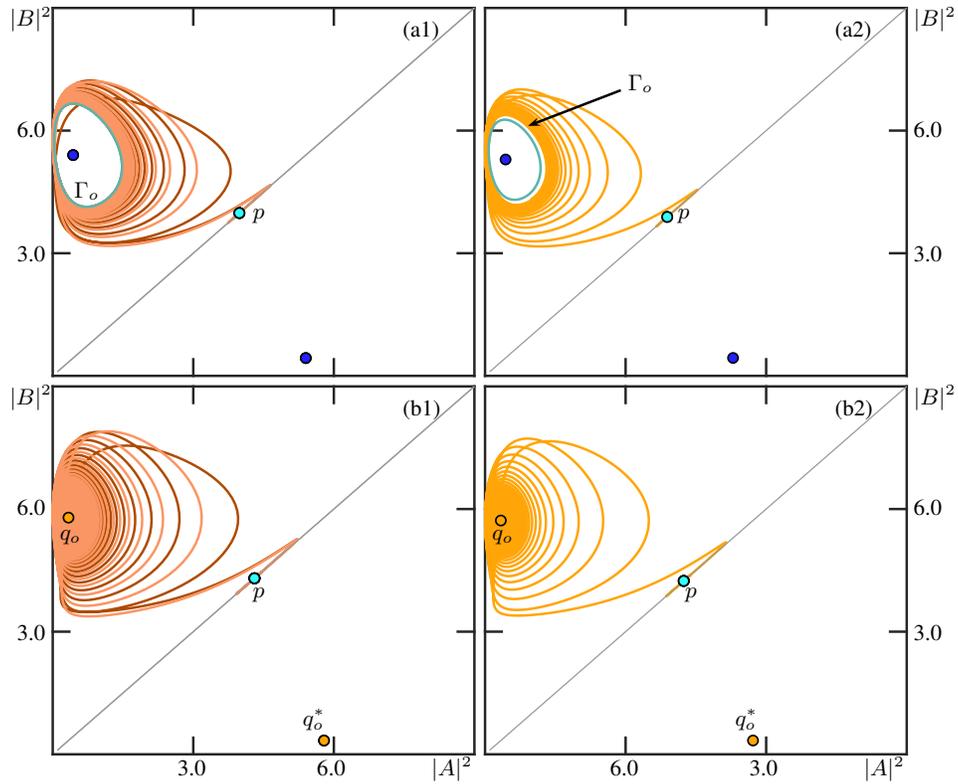} 
\caption{Heteroclinic orbits in the $(|A|^2, |B|^2)$-plane between the symmetric saddle equilibrium $p$ and the periodic orbit $\Gamma^s_o$ (a) and the asymmetric saddle equilibrium $q_o$ (b). Panels~(a1) and~(b1) show pairs of structurally stable heteroclinic orbits at the stars in \fref{fig:BifTrans}, and panels~(a2) and~(b2) show the respective tangent heteroclinic orbits at the asterisks on the curves $\mathbf{F}$ and $\mathbf{F^*}$; here $(f,\delta) = (2.80, -5.0)$ in (a1), $(f,\delta) \approx (2.800, -4.8752)$ in (a2), $(f,\delta) = (3.25, -5.10)$ in (b1), and $(f,\delta) \approx (3.2500, -5.0192)$ in (b2).} 
\label{fig:BifTransSub2}
\end{figure} 
%%%%%%%%%%%%%%%%%%%%%%%%%%%%%%%%%%%%%%%%%%%%%%%%%%%%%%%%

We indeed find a pair of structurally stable heteroclinic orbits between $p$ and $\Gamma_o$ with Lin's method in the region to the right of the curve $\mathbf{Hep}$ of the basic EtoP connection; their continuation in $\delta$ then allows us to detect their codimension-one fold bifurcation, which we subsequently continue to obtain the curve $\mathbf{F}$ in the $(f,\delta)$-plane that is shown in \fref{fig:BifTrans}. For these particular heteroclinic orbits, $\mathbf{F}$ represents the situation where the three-dimensional stable manifold of $p$ has a quadratic tangency with the two-dimensional unstable manifold of $\Gamma_o$. Note that the curve $\mathbf{F}$ crosses $\mathbf{Hep}$ at a point $\mathbf{CF}$ and ends on the curve $\mathbf{H}$ at the point $\mathbf{HBF}$. Along $\mathbf{H}$ the pair of saddle periodic orbits $\Gamma_o$ and $\Gamma_o^*$ undergoes a subcritical Hopf bifurcation with the nonsymmetric attracting equilibria to create the pair of saddle focus equilibria $q_o$ and $q_o^*$. The two-dimensional unstable manifold $W^u(q_o)$ is tangent to $W^s(p)$ along the curve $\mathbf{F^*}$ (and likewise for $q_o^*$), which therefore should be seen as the continuation of the curve $\mathbf{F}$ past the Hopf bifurcation. The respective heteroclinic connections in the $(|A|^2, |B|^2)$-plane at the two stars and the two asterisks are shown in \fref{fig:BifTransSub2}. Panel~(a1) shows the pair of heteroclinic orbits that are the transverse intersections of $W^s(p)$ and $W^u(\Gamma_o)$, which emerge from the nearby fold curve $\mathbf{F}$; the associated single nontransverse heteroclinic orbit in the tangential intersections of $W^s(p)$ and $W^u(\Gamma_o)$ is shown in panel~(a2). \Fref{fig:BifTransSub2}(b1) shows the heteroclinic orbit that form the transverse intersections of $W^s(p)$ and $W^u(q_o)$ near the fold curve $\mathbf{F^*}$, where one finds the associated single nontransverse heteroclinic orbit in panel~(b2). Note that \fref{fig:BifTransSub2} illustrates the fact that heteroclinic orbits between $p$ and $q_o$ in panels~(b1) and~(b2) are topological continuations through the Hopf bifurcation curve $\mathbf{H}$ of those in panels~(a1) and~(a2).

\Fref{fig:BifTrans} shows that the heteroclinic fold curves $\mathbf{F}$ and $\mathbf{F^*}$ bound from above the region of the $(f,\delta)$-plane where homoclinic bifurcations of the symmetric saddle equilibrium $p$ exists. Moreover, the curves $\mathbf{F}$ and $\mathbf{F^*}$ are also responsible for the symmetry increasing/decreasing bifurcation of the two symmetric chaotic attractors in \fref{fig:chaos}. This is so because the unstable manifolds $W^u(\Gamma_o)$ and $W^u(\Gamma^*_o)$, which initially accumulate on the periodic attractors $\Gamma_a$ and $\Gamma^*_a$ for sufficiently large $\delta$ in \fref{fig:BifTrans}, also accumulate on the two separate chaotic attractors that are created for decreasing $\delta$ by the transition through the period-doubling cascade, indicated by the curves $\mathbf{PD_i}$. That is, the closure of $W^u(\Gamma_o)$  and $W^u(\Gamma^*_o)$ contain the separated chaotic attractors.

%%%%%%%%%%%%%%%%%%%%%%%%%%%%%%%%%%%%%%%%%%%%%%%%%%%%%%%%
\begin{figure}
\centering
\includegraphics[scale=0.95]{./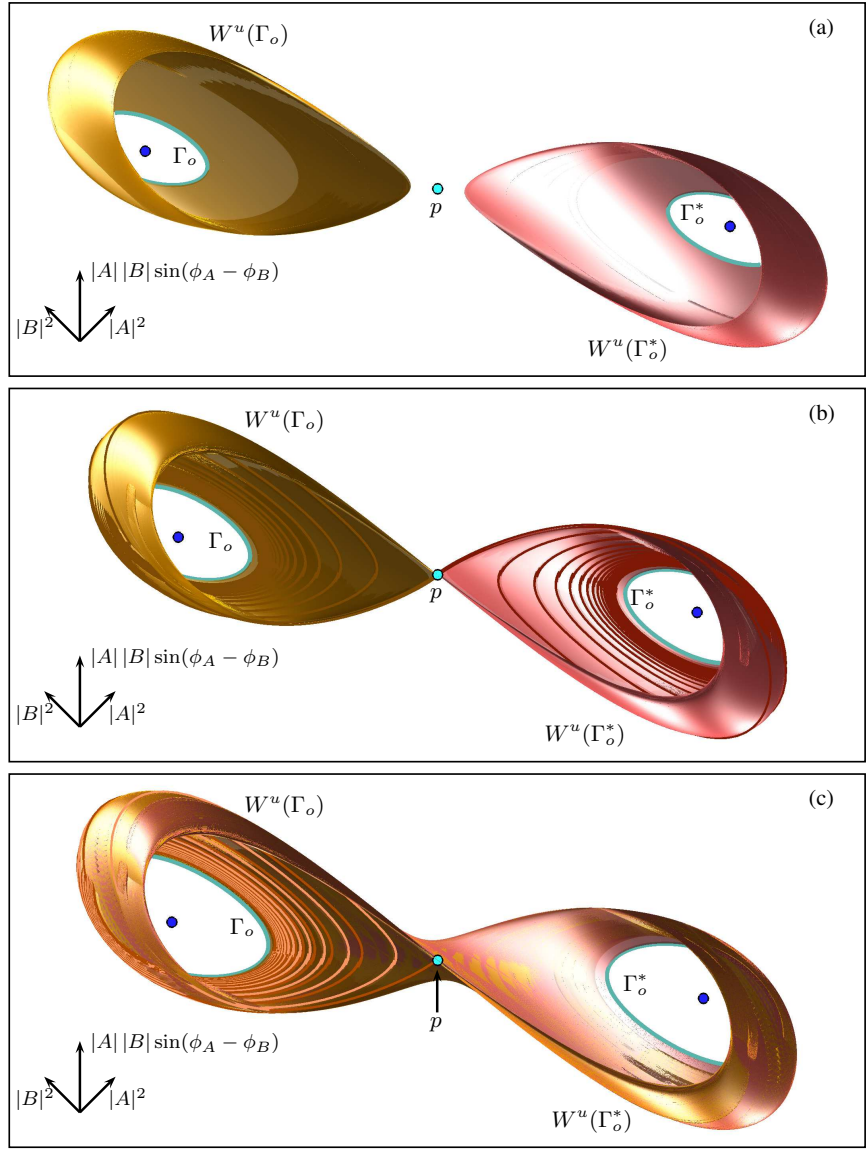} 
\caption{Symmetry increasing bifurcation of chaotic attractors at the transition of system \eref{eq:Couplednondim} through the heteroclinic fold curve $\mathbf{F}$, before $\mathbf{F}$ at $(f,\delta)=(2.8,-4.8)$ in panel~(a), at $\mathbf{F}$ at $(f,\delta)\approx (2.8000,-4.8752)$  in panel~(b), and after $\mathbf{F}$ at $(f,\delta)=(2.8,-5.0)$ in panel~(c). Shown in $(|A|^2,|B|^2,|A|\,|B|\sin(\phi_A-\phi_B))$-space are the saddle equilibrium $p$, the asymmetric attracting equilibria (blue dots), the saddle periodic orbits $\Gamma_o$ and $\Gamma^*_o$ (cyan curves) and the parts of their unstable manifolds $W^{u}(\Gamma_o)$ (gold surface) and $W^{u}(\Gamma^*_o)$ (red surface) that accumulate on the respective chaotic attractors; panel~(b) also shows the tangent heteroclinic orbits in $W^s(p) \cap W^{u}(\Gamma_o)$ (brown  curves) and $W^s(p) \cap W^{u}(\Gamma_o^*)$ (red curves), and panel~(c) also shows the pair of transversal heteroclinic orbits in $W^s(p) \cap W^{u}(\Gamma_o)$ (pink and maroon curves).} 
\label{fig:ManifoldsTrans}
\end{figure} 
%%%%%%%%%%%%%%%%%%%%%%%%%%%%%%%%%%%%%%%%%%%%%%%%%%%%%%%%

\Fref{fig:ManifoldsTrans} shows $W^u(\Gamma_o)$ and $W^u(\Gamma^*_o)$ in projection onto $(|A|^2,|B|^2,|A|\,|B|\sin(\phi_A-\phi_B))$-space during the transition through the heteroclinic fold curve $\mathbf{F}$. For clarity of illustration, we do not show the parts of $W^u(\Gamma_o)$ and $W^u(\Gamma^*_o)$ that converge to the asymmetric stable equilibria but rather only those that accumulate on the respective chaotic attractors. The shown surfaces $W^u(\Gamma_o)$ and $W^u(\Gamma^*_o)$ were computed via the continuation of a one-parameter family of orbit segments defined by a two-point boundary value problem \cite{Doe3,Kra2}.  Panel~(a) shows the situation before the heteroclinic fold bifurcation, that is, above the curve $\mathbf{F}$ in $(f,\delta)$-plane; here, the manifolds $W^u(\Gamma_o)$ and $W^u(\Gamma^*_o)$ do not come near the equilibrium $p$ as they accumulate on the two separate chaotic attractors well within the regions with $|A|^2<|B|^2$ and $|B|^2<|A|^2$, respectively. At the moment of the heteroclinic fold bifurcation $\mathbf{F}$ in \fref{fig:ManifoldsTrans}(b), the two manifolds meet at $p$ due to the existence of a single tangent heteroclinic orbit that converges backward in time to $\Gamma_o$ and forward in time to $p$. This illustration clearly showcases the moment where the symmetry-increasing bifurcation occurs. After the heteroclinic tangency, as in panel~(c), the two manifolds $W^u(\Gamma_o)$ and $W^u(\Gamma^*_o)$ are no longer separated and are each able to access both regions with $|A|^2<|B|^2$ and that with $|B|^2<|A|^2$, as they now accumulate on the symmetric chaotic attractor. Notice further that the tangent heteroclinic orbit gives rise to a pair of transverse heteroclinic orbits; this is illustrated in \fref{fig:ManifoldsTrans}(b) and (c) for the intersection set $W^s(p) \cap W^{u}(\Gamma_o)$.

%%%%%%%%%%%%%%%%%%%%%%%%%%%%%%%%%%%%%%%%%%%%%%%%%%%%%%%%
\begin{figure}
\centering
\includegraphics[scale=0.95]{./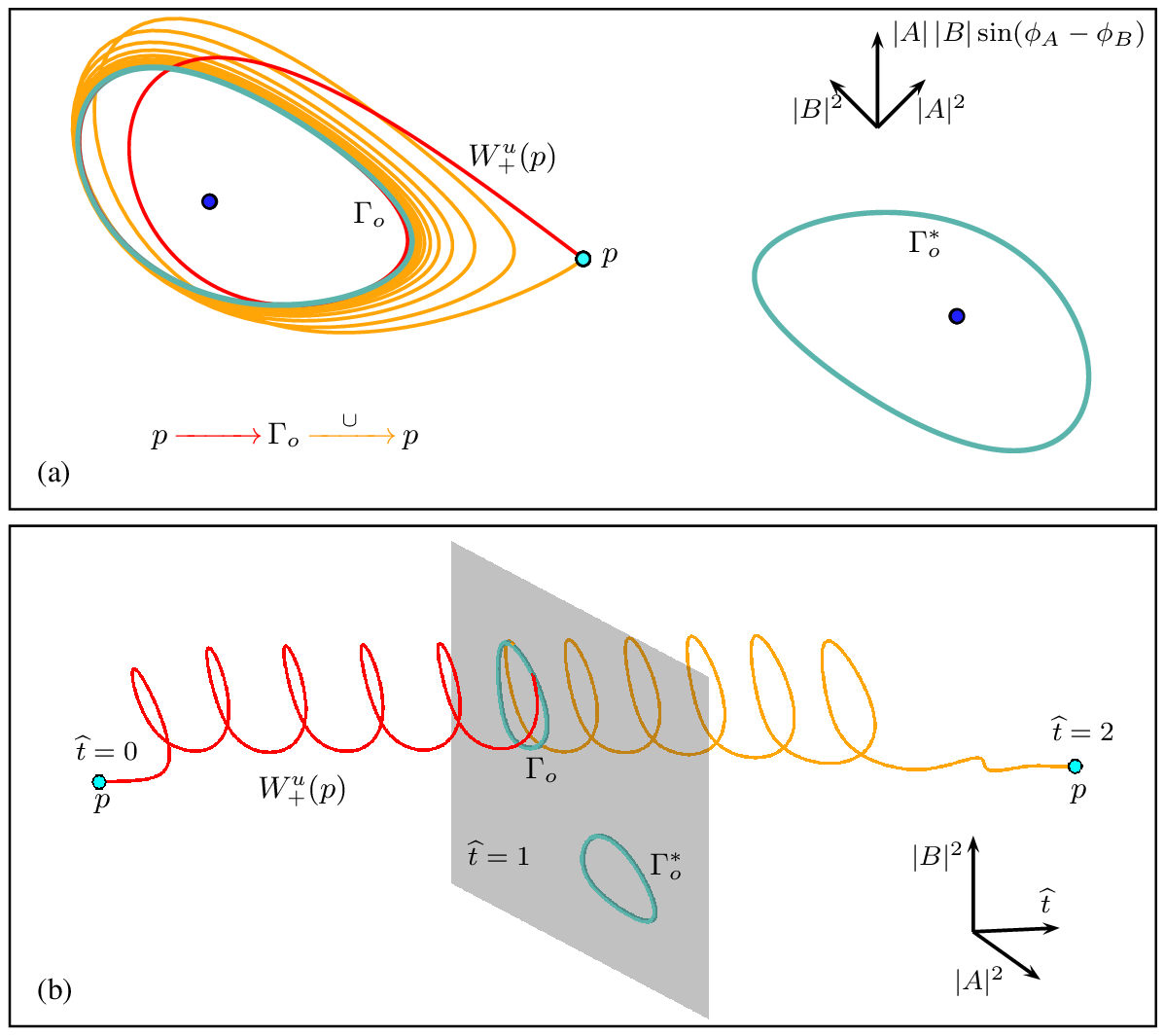} 
\caption{The singular EtoP cycle between $p$ and $\Gamma_o$ of system~\eref{eq:Couplednondim} at the codimension-two bifurcation point $\mathbf{CF}$ with $(f,\delta) \approx (2.577, -4.952)$ where the curves $\mathbf{F}$ and $\mathbf{Hep}$ intersect, shown in $(|A|^2,|B|^2,|A|\,|B|\sin(\phi_A-\phi_B))$-space in panel~(a) and as a compact time connection (CTC) plot in panel~(b). It consists of the codimension-one EtoP connection formed by $W_u^+(p)$ (red curve) and the tangent return connection (orange curve) that converges forward in time to $p$ and backward in time to $\Gamma_o$.} 
\label{fig:CodTwoHF}
\end{figure} 
%%%%%%%%%%%%%%%%%%%%%%%%%%%%%%%%%%%%%%%%%%%%%%%%%%%%%%%%

As was already mentioned, the curves $\mathbf{F}$ and $\mathbf{Hep}$ intersect in the $(f,\delta)$-plane of \fref{fig:BifTrans} at a codimension-two point denoted $\mathbf{CF}$. At this point, system~\eref{eq:Couplednondim} exhibits a symmetric pair of singular heteroclinic EtoP cycles between $p$ and the pair of orbits $\Gamma_o$ and $\Gamma^*_o$. The singular cycle involving $\Gamma_o$ is shown in \fref{fig:CodTwoHF} and consists of a codimension-one EtoP connection formed by the one-dimensional unstable manifold $W_u^+(p)$ lying in the three-dimensional stable manifold $W^s(\Gamma_o)$, as well as a single return EtoP connection in the tangential intersection between the three-dimensional stable manifold $W^s(p)$ and the two-dimensional unstable manifold $W^u(\Gamma_o)$ (and similarly for $\Gamma^*_o)$ due to symmetry; not shown). In the literature, such connections are also called degenerate singular cycles. We find it advantageous to represent this pair of EtoP cycles (and subsequent other heteroclinic cycles) symbolically, by
\begin{eqnarray*} 
p \xrightarrow{ {\color[HTML]{000000} \phantom{va} \phantom{\pitchfork} \phantom{va} }} \Gamma_o \xrightarrow{ {\color[HTML]{000000} \phantom{va} \cup \phantom{va} }} p,\\
p \xrightarrow{ {\color[HTML]{000000} \phantom{va} \phantom{\pitchfork} \phantom{va} }} \Gamma^*_o \xrightarrow{ {\color[HTML]{000000} \phantom{va} \cup \phantom{va} }} p, \end{eqnarray*} 
where the arrows represent the respective heteroclinic connections between corresponding saddle invariant objects with the direction of time. Moreover, the symbol $\cup$ above the second arrow indicates that this connection is due to a quadratic tangency. \Fref{fig:CodTwoHF}(a) shows this singular EtoP cycle in projection onto $(|A|^2,|B|^2,|A|\,|B|\sin(\phi_A-\phi_B))$-space. To illustrate the nature of the connections involved in line with their symbolic representation, panel~(b) shows the EtoP cycle between $p$ and $\Gamma_o$ in what we call a \emph{compact time connection plot}, or CTC plot for short. The idea, which is motivated by the way we compute connecting orbits as solutions to boundary value problems, is to show (a suitable part of) each connection over a unit time interval in $(|A|^2,|B|^2,\widehat{t})$-space, where $\widehat{t}$ represents the compact, truncated time as used in the computation. Here each invariant object lies in a plane with $\widehat{t}=n$ with $n \in \Z$, specifically, $n = 0,1,2$ in \fref{fig:CodTwoHF}(b). The advantage of a CTC plot, which we will use extensively in what follows, is that it illustrates more readily the nature of the individual connections compared to a projection such as that in panel~(a).

The role of the codimension-two point $\mathbf{CF}$ is that it forms the corner of the region, bounded by the respective parts of the curves $\mathbf{F}$ and $\mathbf{Hep}$ in the $(f,\delta)$-plane of \fref{fig:CodTwoHF}(a), where Shilnikov bifurcations of $p$ occur. We remark that this type of codimension-two global bifurcation has been studied in \cite{Kirk1, Lohr1} for the case of a three-dimensional vector field with a saddle equilibrium with real eigenvalues, and a conjectured unfolding at the level of a return map was proposed in \cite{Kirk1} for the saddle-focus case. Indeed it was shown that sequences of curves of Shilnikov bifurcations accumulate on the curve that we denote $\mathbf{Hep}$ and that their maxima reach the fold curve $\mathbf{F}$. Furthermore,  codimension-two singular heteroclinic cycles with a real saddle have also been encountered close to homoclinic flip bifurcations \cite{And2}.  However, to our knowledge, there is no account in the literature regarding an explicit vector field with an explicit degenerate singular cycle to a saddle focus. Moreover, the situation encountered here is quite special as the phase space is of dimension four,  and there is the additional symmetric connection due to the reflectional symmetry of system~\eref{eq:Couplednondim}.

%%%%%%%%%%%%%%%%%%%%%%%%%%%%%%%%%%%%%%%%%%%%%%%%%%%%%%%%
\subsection{Heteroclinic tangency between $W^s(p)$ and $W^u(\Gamma^s)$}
\label{sec:KneadSeqGamma_symm}

%%%%%%%%%%%%%%%%%%%%%%%%%%%%%%%%%%%%%%%%%%%%%%%%%%%%%%%%
\begin{figure}
\centering
\includegraphics[scale=0.95]{./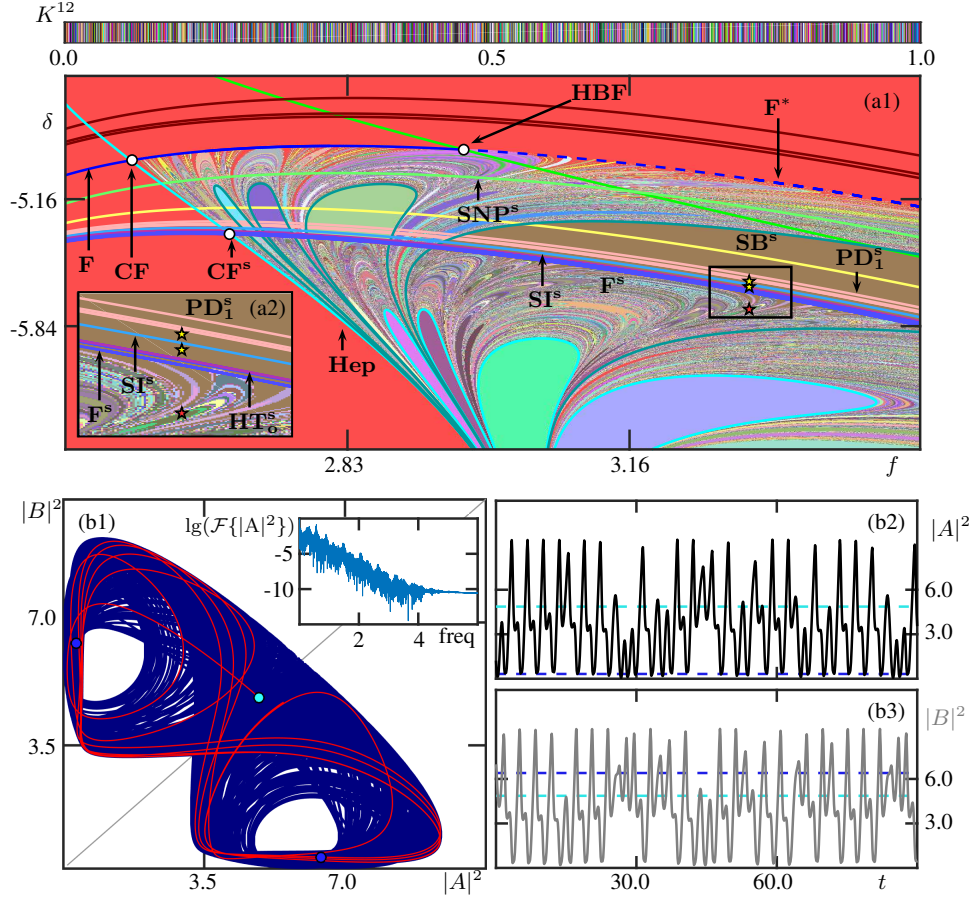} 
\caption{Bifurcations associated with the symmetric attractors arising from $\widehat{\Gamma}_a$ and $\widehat{\Gamma}^*_a$. Panel~(a1) shows the coloring of the $(f,\delta)$-plane by finite kneading sequences $S^{12}$ with the curves from \fref{fig:BifTrans} and additionally the curves $\mathbf{SB^s}$ (yellow) of symmetry breaking bifurcation, $\mathbf{PD_i^s}$ (brown) of successive period-doubling bifurcations, $\mathbf{SI^s}$ (light blue) of symmetry increasing bifurcation, $\mathbf{F^s}$ (blue) of fold bifurcation where $W^s(p)$ and $W^u(\Gamma^s)$ are tangent, and $\mathbf{HT^s_o}$ (magenta) of homoclinic tangency of $\Gamma^s_o$; see also the inset~(a2) and compare with \fref{fig:BifEtoPSym}. At the yellow stars in panels~(a) one finds the chaotic attractors from \fref{fig:TrueSymChaos}(b) and~(c), and at the red star at $(f,\delta)=(3.3,-5.75)$ there is the chaotic attractor shown in panels~(b); see also \fref{fig:ManifoldsTransSInv}.} 
\label{fig:BifFoldHFs}
\end{figure} 
%%%%%%%%%%%%%%%%%%%%%%%%%%%%%%%%%%%%%%%%%%%%%%%%%%%%%%%%

%%%%%%%%%%%%%%%%%%%%%%%%%%%%%%%%%%%%%%%%%%%%%%%%%%%%%%%%
\begin{figure}[h!]
\centering
\includegraphics[scale=0.95]{./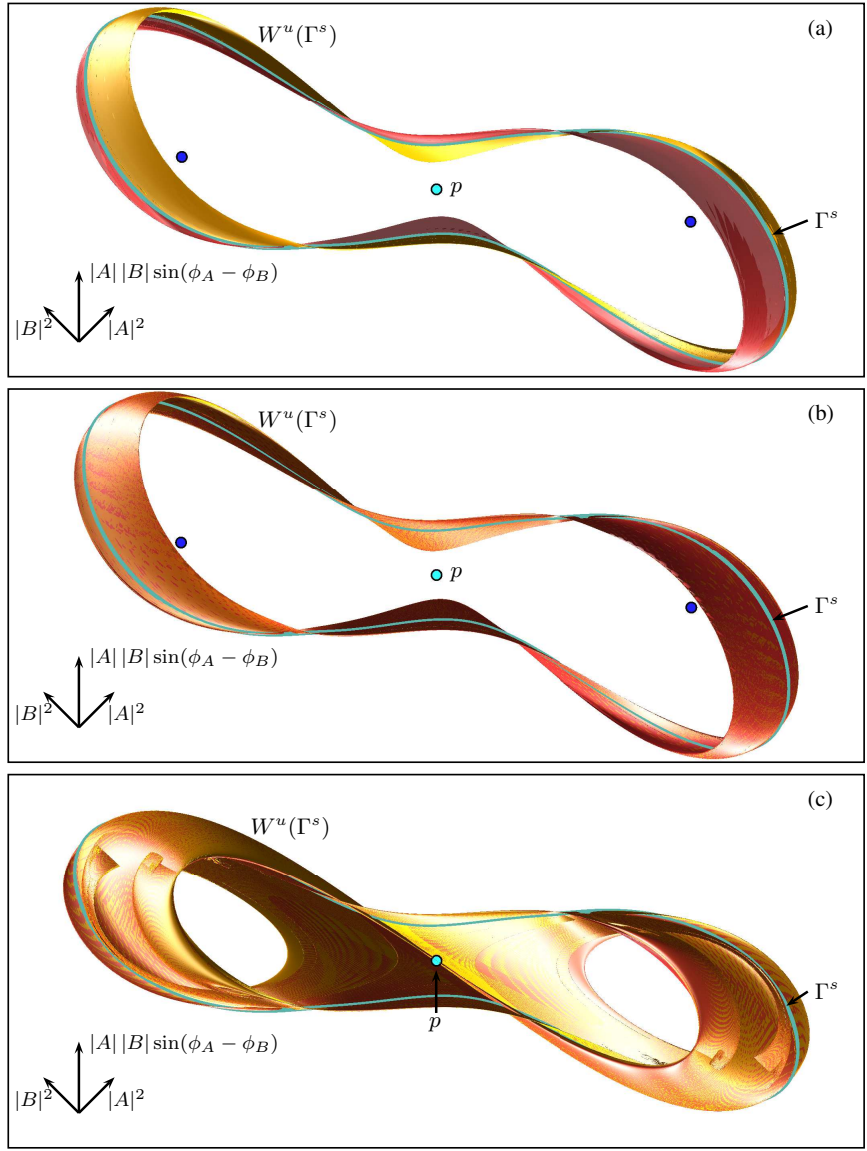} 
\caption{Symmetry increasing and interior crisis bifurcations of chaotic attractors of system \eref{eq:Couplednondim} due to crossing the curves $\mathbf{SI^s}$ and $\mathbf{F^s}$, shown before the curve $\mathbf{SI^s}$ at $(f,\delta)=(3.3,-5.61)$ in panel~(a), after $\mathbf{SI^s}$ but before $\mathbf{F^s}$ at $(f,\delta)= (3.3,-5.63)$ in panel~(b), and after $\mathbf{F^s}$ at $(f,\delta)=(3.3,-5.75)$ in panel~(c); see the stars in \fref{fig:BifFoldHFs}. Shown in $(|A|^2,|B|^2,|A|\,|B|\sin(\phi_A-\phi_B))$-space are the saddle equilibrium $p$, the asymmetric attracting equilibria (blue dots), the S-invariant saddle periodic orbit $\Gamma^s$ (cyan curves) and the two sides of its unstable manifolds $W^{u}(\Gamma^s)$ (gold and red surfaces).} 
\label{fig:ManifoldsTransSInv}
\end{figure} 
%%%%%%%%%%%%%%%%%%%%%%%%%%%%%%%%%%%%%%%%%%%%%%%%%%%%%%%%

\Fref{fig:BifFoldHFs} shows bifurcations associated with further bifurcations of the attractors arising from $\widehat{\Gamma}_a$ and $\widehat{\Gamma}^*_a$ to a chaotic attractor with reflectional symmetry. Panels~(a) show the $(f,\delta)$-plane of \fref{fig:BifTrans} with the additional curves $\mathbf{SB^s}$, $\mathbf{PD_i^s}$ and $\mathbf{SI^s}$ from \fref{fig:BifEtoPSym}, as well as a curve $\mathbf{F^s}$ of fold bifurcation where $W^s(p)$ and $W^u(\Gamma^s)$ are tangent, and a curve $\mathbf{HT^s_o}$ of homoclinic tangency where $W^s(\Gamma^s_o)$ and $W^u(\Gamma^s_o)$ are tangent. 

As we have seen in \fref{fig:TrueSymChaos}(b) and~(c), when $\delta$ is decreased, the two non-symmetric chaotic attractors born in the sequence of period-doublings then merge at the symmetry increasing bifurcation $\mathbf{SI^s}$ to form a chaotic attractor with symmetry. Recall that at $\mathbf{SI^s}$, the manifolds $W^s(\Gamma^s)$ and $W^u(\Gamma^s)$ of the S-invariant saddle periodic orbit $\Gamma^s$, created at the symmetry-breaking bifurcation $\mathbf{SB^s}$, intersect tangentially. This implies that $\Gamma^s$ gets incorporated into the chaotic attractor, which therefore becomes symmetric. \Fref{fig:ManifoldsTransSInv} illustrates what this means for the two-dimensional unstable manifold $W^u(\Gamma^s)$. Before $\mathbf{SI^s}$ as in panel~(a), each of the two sides of $W^u(\Gamma^s)$ accumulates on one of the two non-symmetric chaotic attractors. In contrast, immediately after $\mathbf{SI^s}$ as in panel~(b), both sides of $W^u(\Gamma^s)$ accumulate on the symmetric chaotic attractor, which is contained in the closure of $W^u(\Gamma^s)$. Indeed, this chaotic attractor does not include the saddle equilibrium $p \in \text{Fix}(\eta)$. We also find that the three-dimensional stable manifold $W^s(p)$ is part of the basin boundary of the symmetric chaotic attractor, while both branches of the one-dimensional unstable manifold $W^u(p)$ accumulate on it. In particular, $W^s(p)$ and $W^u(p)$ cannot intersect, which means that there are no Shilnikov bifurcations of $p$. 

When $\delta$ is decreased further, the point $p$ is already incorporated into the chaotic attractor past the curve $\mathbf{F^s}$ where $W^s(p)$ and $W^u(\Gamma^s)$ have a tangency. This curve was computed by identifying a structurally stable heteroclinic connection between $p$ and $\Gamma^s$ below $\mathbf{F^s}$ with Lin's method and subsequently detecting and continuing its fold. The resulting chaotic attractor below $\mathbf{F^s}$ is shown in \fref{fig:ManifoldsTransSInv}(c) as the closure of $W^u(\Gamma^s)$ and in \fref{fig:BifFoldHFs}(b1) as the closure of $W^u_+(p)$. These images show that crossing the fold curve $\mathbf{F^s}$ plays the same role as the fold curve $\mathbf{F}$ from the previous section, in that it incorporates the equilibrium $p$. In particular, the time series in \fref{fig:BifFoldHFs}(b2) and~(b3) clearly demonstrate that the kneading sequence generated by $W^u_+(p)$ is no longer restricted to be an alternation between the two cavities; rather the switching between cavities is now chaotic itself, which is consistent with the fact that the corresponding parameter point in the $(f,\delta)$-plane of \fref{fig:BifEtoPSym}(a) now lies in the region with further sub-isolas, bounded by Shilnikov bifurcation of $p$, within the isola bounded by the curve $\mathbf{Hep^{s}_3}$. Indeed, since it has epochs of regular switching interspersed with those of chaotic switching, we refer to this dynamics as \emph{chaotic behavior with intermittent regular and chaotic switching}. 

Since $W^u(\Gamma^s)$ accumulates on the chaotic attractor, we know that at $\mathbf{F^s}$ the chaotic attractor has already hit its basin boundary, specifically, the stable manifold $W^s(p)$. This is a type of interior crisis; the curve $\mathbf{F^s}$ in \fref{fig:BifFoldHFs}(a) appears to approximate it and, hence, the boundary of the sub-isolas quite well. However, close inspection shows that the chaotic attractor may undergo the interior crisis close to but before reaching the fold $\mathbf{F^s}$ between $W^u(\Gamma^s)$ and $W^s(p)$. This is due to the fact that there are many more saddle periodic orbits in the chaotic attractor with two-dimensional unstable manifolds that all will become tangent to $W^s(p)$ near $\mathbf{F^s}$. Moreover, the three-dimensional unstable manifold $W^s(\Gamma_o^s)$ of the S-invariant saddle periodic orbit $\Gamma_o^s$ (born in the saddle-node bifurcation $\mathbf{SNP}$) also lies in the basin boundary, and one side of its two-dimensional unstable manifold $W^u(\Gamma^s_o)$ accumulates on the chaotic attractor. Hence, a first homoclinic tangency between $W^s(\Gamma^s_o)$ and $W^u(\Gamma^s_o)$ is also a interior crisis. This fold bifurcation can also be identified and continued as the curve $\mathbf{HT^s_o}$ in \fref{fig:BifFoldHFs}(a). Taken together, the curves $\mathbf{F^s}$ and $\mathbf{HT^s_o}$, which are very close together, give a suitable impression of the locus of interior crisis, the exact details of which are well beyond the scope of what is discussed here.

%%%%%%%%%%%%%%%%%%%%%%%%%%%%%%%%%%%%%%%%%%%%%%%%%%%%%%%%
\begin{figure}
\centering
\includegraphics[scale=0.95]{./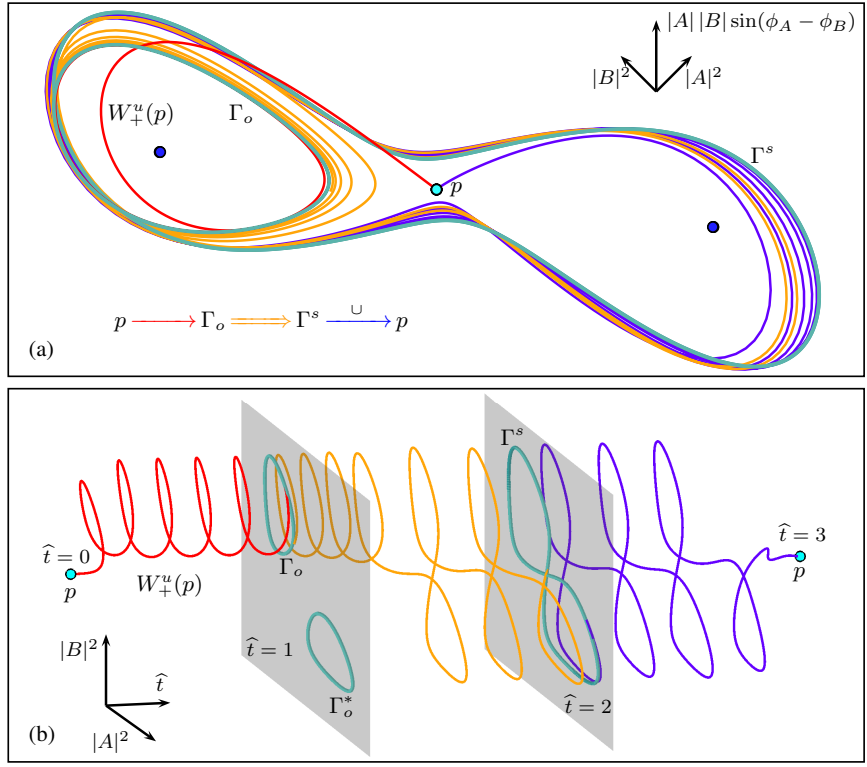} 
\caption{The singular heteroclinic cycle between $p$, $\Gamma_o$ and $\Gamma^s$ at the codimension-two bifurcation point $\mathbf{CF_1^s}$ with $(f,\delta) \approx (2.6932,-5.347)$ where the curves $\mathbf{F^s}$ and $\mathbf{Hep}$ intersect, shown in $(|A|^2,|B|^2,|A|\,|B|\sin(\phi_A-\phi_B))$-space in panel~(a) and as a CTC plot in panel~(b). It consists of the codimension-one EtoP connection from $p$ to $\Gamma_o$ (red curve), a structurally stable connection from $\Gamma_o$ to $\Gamma^s$ (orange curve), and a tangent connection from $\Gamma^s$ back to $p$ (blue curve).} 
\label{fig:CodTwoHFs}
\end{figure} 
%%%%%%%%%%%%%%%%%%%%%%%%%%%%%%%%%%%%%%%%%%%%%%%%%%%%%%%%

As was the case with the curve $\mathbf{F}$, the fold curve $\mathbf{F^s}$ intersects the curve $\mathbf{Hep}$ in a codimension-two point that we label $\mathbf{CF^s}$. Here, one finds the pair of singular heteroclinic cycles 
\begin{eqnarray*}
p \xrightarrow{ {\color[HTML]{000000} \phantom{va \pitchfork va} }} \Gamma_o & \xRightarrow{ {\color[HTML]{000000} \phantom{va  \pitchfork  va} }} & \Gamma^s \xrightarrow{ {\color[HTML]{000000} \phantom{va} \cup \phantom{va} }}  p, \\
p \xrightarrow{ {\color[HTML]{000000} \phantom{va \pitchfork va} }} \Gamma^*_o & \xRightarrow{ {\color[HTML]{000000} \phantom{va  \pitchfork  va} }} & \Gamma^s \xrightarrow{ {\color[HTML]{000000} \phantom{va} \cup \phantom{va} }}  p,
\end{eqnarray*}
where the middle connection from $\Gamma_o$ or $\Gamma_o^*$ to $\Gamma^s$ denote by double arrows is structurally stable, while the two other are of codimension one, respectively. The singular heteroclinic cycle involving $\Gamma_o$ is depicted in \fref{fig:CodTwoHFs}, where panel~(a) is its projection onto $(|A|^2,|B|^2,|A|\,|B|\sin(\phi_A-\phi_B))$-space and panel~(b) the associated CTC plot in $(|A|^2,|B|^2,\widehat{t})$-space. Notice that these heteroclinic cycles are more complicated than the EtoP cycle presented in \fref{fig:CodTwoHF} because they involve two saddle periodic orbits, that is, one additional saddle object. Nevertheless, they serve the same role in the bifurcation diagram as an organizing center for the region, bounded by the respective parts of the curves $\mathbf{F^s}$ and $\mathbf{Hep}$, where infinitely many Shilnikov bifurcation curves exist and accumulate in the $(f,\delta)$-plane.

%%%%%%%%%%%%%%%%%%%%%%%%%%%%%%%%%%%%%%%%%%%%%%%%%%%%%%%%
\section{Tangency bifurcations of periodic orbits and generalized degenerate singular cycles} \label{sec:TanChainCycles}

%%%%%%%%%%%%%%%%%%%%%%%%%%%%%%%%%%%%%%%%%%%%%%%%%%%%%%%%
\begin{figure}
\centering
\includegraphics[scale=0.95]{./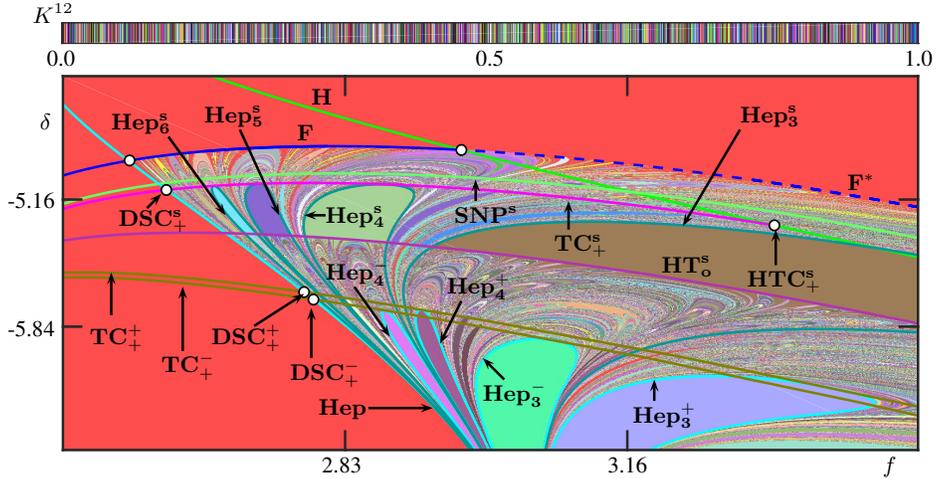} 
\caption{Bifurcation curves of first tangencies of periodic orbits in the $(f,\delta)$-plane, namely the curve $\mathbf{TC_+^s}$ of heteroclinic tangency between $W^u(\Gamma_o)$ and $W^s\Gamma^s_o)$, the curve $\mathbf{TC_+^+}$ of homoclinic tangency between $W^u(\Gamma_o)$ and $W^s\Gamma_o)$, and the curve $\mathbf{TC_+^-}$ of heteroclinic tangency between $W^u(\Gamma_o)$ and $W^s\Gamma_o^*)$; also shown are the fold curves $\mathbf{F}$, $\mathbf{F^*}$ and $\mathbf{F^s}$ from \fref{fig:BifFoldHFs}.
The intersection points of these fold curves with the curve $\mathbf{Hep}$ are denoted $\mathbf{DSC_+^s}$, $\mathbf{DSC_+^+}$ and $\mathbf{DSC_+^-}$, respectively; the curve $\mathbf{TC_+^s}$ ends at the codimension-two point $\mathbf{HTC_+^s}$ on the Hopf bifurcation curve $\mathbf{H}$.} 
\label{fig:BifFoldHFpe}
\end{figure} 
%%%%%%%%%%%%%%%%%%%%%%%%%%%%%%%%%%%%%%%%%%%%%%%%%%%%%%%%

%%%%%%%%%%%%%%%%%%%%%%%%%%%%%%%%%%%%%%%%%%%%%%%%%%%%%%%%
\begin{figure}
\centering
\includegraphics[scale=0.95]{./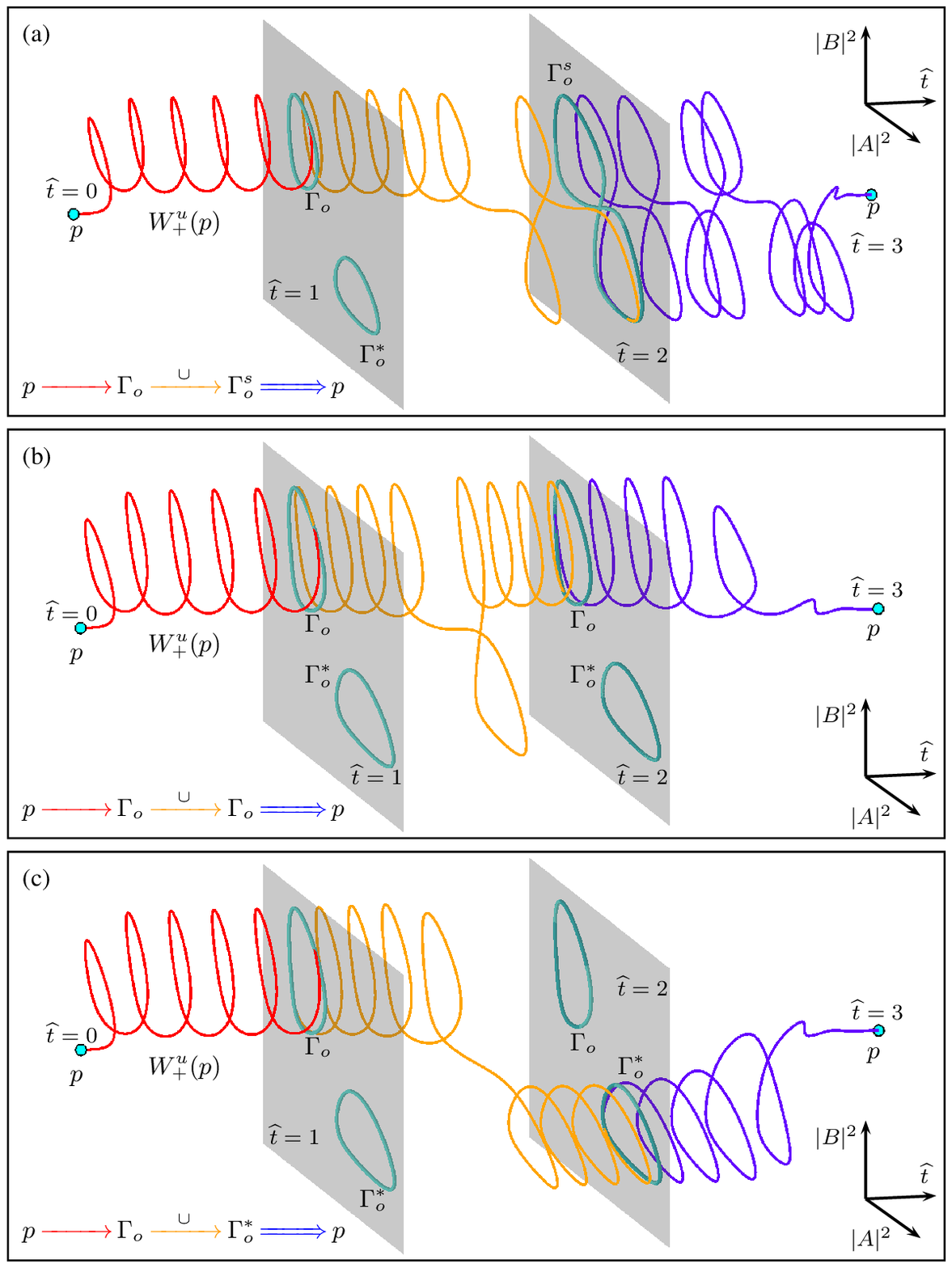} 
\caption{CTC plots of singular heteroclinic cycles on the curve $\mathbf{Hep}$ at the codimension-two points labelled in \cref{fig:BifFoldHFpe}: in panel~(a) between $p$, $\Gamma_o$ and $\Gamma_o^s$ at $\mathbf{DSC_+^s}$ with $(f,\delta) \approx (2.6219,-5.1092)$; in panel~(b) between $p$ and $\Gamma_o$ with a homoclinic loop of $\Gamma_o^s$ at $\mathbf{DSC_+^+}$ with $(f,\delta) \approx (2.7822,-5.6551)$; and in panel~(c) between $p$, $\Gamma_o$ and $\Gamma_o^*$ at $\mathbf{DSC_+^-}$  at $(f,\delta) \approx (2.7923,-5.6952)$. These cycles consists of the EtoP connection from $p$ to the first periodic orbit (red curve), a tangent connection between the two respective periodic orbits (orange curve), and a structurally stable connection from the second periodic orbit back to $p$ (blue curve).} 
\label{fig:CodTwoDSCall}
\end{figure} 
%%%%%%%%%%%%%%%%%%%%%%%%%%%%%%%%%%%%%%%%%%%%%%%%%%%%%%%%

We now identify additional fold curves that act as boundaries for other types of isolas in the $(f,\delta)$-plane and, in particular, the ones related to the bifurcation curves $\mathbf{Hep^{S}_i}$ and $\mathbf{Hep^{+/-}_i}$. In contrast to the fold curves $\mathbf{F}$, $\mathbf{F^*}$ and $\mathbf{F^s}$ these bifurcation curves do not concern tangencies of $W^s(p)$ but tangencies between global manifolds of different saddle periodic orbits. All these saddle periodic orbits have a two-dimensional unstable and three-dimensional stable manifold. Specifically, we find and show in \fref{fig:BifFoldHFpe} the curve $\mathbf{TC_+^s}$ along which there is a heteroclinic tangency between $W^u(\Gamma_o)$ and $W^s(\Gamma^s_o)$, the curve $\mathbf{TC_+^+}$ of homoclinic tangency between $W^u(\Gamma_o)$ and $W^s(\Gamma_o)$, and the curve $\mathbf{TC_+^-}$ of heteroclinic tangency between $W^u(\Gamma_o)$ and $W^s(\Gamma^*_o)$. Of course, along each of these three curves, which all involve the pair of saddle periodic orbits $\Gamma_o$ and $\Gamma^*_o$ that are born at the subcritical Hopf bifurcation $\mathbf{H}$, the respective counterparts under the reflectional symmtery exist as well. Notice that the curve $\mathbf{TC_+^s}$ ends at the codimension-two point $\mathbf{HTC_+^s}$ on the Hopf bifurcation curve $\mathbf{H}$, where $\Gamma_o$ and $\Gamma_o^s$ disappear, and the asymmetric equilibria become saddles. Similar to the continuation of the curve $\mathbf{F}$ as $\mathbf{F^*}$ past the Hopf bifurcation curve $\mathbf{H}$, beyond the point $\mathbf{HTC_+^s}$ there is a pair of tangencies between the two-dimensional unstable manifolds of these saddle with $W^s(\Gamma^s_o)$; the corresponding curve is not shown in \fref{fig:BifFoldHFpe} because it appears to lie very close to $\mathbf{H}$, which makes it quite challenging to continue numerically.

The curves $\mathbf{TC_+^s}$, $\mathbf{TC_+^+}$ and $\mathbf{TC_+^-}$ all intersect the curve $\mathbf{Het^+}$, and the respective codimension-two points $\mathbf{DSC_+^s}$, $\mathbf{DSC_+^-}$ and $\mathbf{DSC_+^+}$ are labelled in \fref{fig:BifFoldHFpe}. At these points one finds the singular heteroclinic cycles shown as CTC plots in \fref{fig:CodTwoDSCall}, which have the symbolic representations 
\begin{eqnarray*}
\mathbf{DSC_+^s}:\qquad
p \xrightarrow{ {\color[HTML]{000000} \phantom{va \pitchfork va} }} \Gamma_o \xrightarrow{ {\color[HTML]{000000} \phantom{va} \cup \phantom{va} }}  \Gamma_o^s \xRightarrow{ {\color[HTML]{000000} \phantom{va  \pitchfork  va} }} p,\\
p \xrightarrow{ {\color[HTML]{000000} \phantom{va \pitchfork va} }} \Gamma^*_o \xrightarrow{ {\color[HTML]{000000} \phantom{va} \cup \phantom{va} }}  \Gamma_o^s \xRightarrow{ {\color[HTML]{000000} \phantom{va  \pitchfork  va} }} p;\\[3mm]
\mathbf{DSC_+^+}: \qquad
p \xrightarrow{ {\color[HTML]{000000} \phantom{va \pitchfork va} }} \Gamma_o \xrightarrow{ {\color[HTML]{000000} \phantom{va} \cup \phantom{va} }}  \Gamma_o \xRightarrow{ {\color[HTML]{000000} \phantom{va  \pitchfork  va} }} p, \\
p \xrightarrow{ {\color[HTML]{000000} \phantom{va \pitchfork va} }} \Gamma^*_o \xrightarrow{ {\color[HTML]{000000} \phantom{va} \cup \phantom{va} }}  \Gamma^*_o \xRightarrow{ {\color[HTML]{000000} \phantom{va  \pitchfork  va} }} p;\\[3mm] \mathbf{DSC_+^-}:\qquad
p \xrightarrow{ {\color[HTML]{000000} \phantom{va \pitchfork va} }} \Gamma_o \xrightarrow{ {\color[HTML]{000000} \phantom{va} \cup \phantom{va} }}  \Gamma^*_o \xRightarrow{ {\color[HTML]{000000} \phantom{va  \pitchfork  va} }} p.\\
p \xrightarrow{ {\color[HTML]{000000} \phantom{va \pitchfork va} }} \Gamma^*_o \xrightarrow{ {\color[HTML]{000000} \phantom{va} \cup \phantom{va} }}  \Gamma_o \xRightarrow{ {\color[HTML]{000000} \phantom{va  \pitchfork  va} }} p.
\end{eqnarray*}
Notice that for all these singular cycles, the last connection from the respective periodic orbit back to $p$ is transversal. This is due to the fact that the respective curves in the $(f,\delta)$-plane of \fref{fig:BifFoldHFpe} lie below the fold curves where these heteroclinic connections are created. More specifically, the curves $\mathbf{TC_+^+}$ and $\mathbf{TC_+^-}$ lie below the curve $\mathbf{F}$ where heteroclinic connections from $\Gamma_o$ and $\Gamma_o^*$ to $p$ emerge. Moreover, the curve $\mathbf{TC_+^s}$ lies below the fold curve of connections from $\Gamma_o^s$ to $p$, which is really close to the $\mathbf{SNP^s}$ curve (not shown in this figure).

Observe from the bifurcation diagram in \fref{fig:BifFoldHFpe} that the curve $\mathbf{TC_+^s}$ is tangent to the isolas of constant kneading sequence bounded by
the codimension-one curves $\mathbf{Hep^{s}_i}$ of EtoP connections to $\Gamma^s_o$. Moreover, these tangency points along $\mathbf{TC_+^s}$ converge for increasing $\mathbf{i}$ to the point $\mathbf{DSC_+^s}$ on the curve $\mathbf{Hep}$; this also means that the curves $\mathbf{Hep^{S}_i}$ themselves and the isolas they bound accumulate on the segment of $\mathbf{Hep}$ below $\mathbf{DSC_+^s}$. This property of the curve $\mathbf{TC_+^s}$ is explained by the fact that the tangent heteroclinic connection on it is the one between periodic orbits in the middle of the cycle, specifically between the pair $\Gamma_o$ and $\Gamma_o^*$, respectively, and the S-invariant $\Gamma^s_o$. 

Similarly, the fold curve $\mathbf{TC_+^+}$ in \fref{fig:BifFoldHFpe} is tangent to the curves $\mathbf{Hep^{+}_i}$ of EtoP connections of $W^u_+(p)$ to $\Gamma_o$ (with a single excursion to the region of phase space with $|A|^2 < |B|^2$), with the points of tangency on $\mathbf{TC_+^+}$ accumulating on the codimension-two point $\mathbf{TC_+^+}$; this agrees with the fact that the tangent connection at $\mathbf{DSC_+^+}$ shown in \fref{fig:CodTwoDSCall}(b) from $\Gamma_o$ back to itself has a single loop with $|A|^2 < |B|^2$. Likewise, the fold curve $\mathbf{TC_+^-}$ is tangent to the curves $\mathbf{Hep^{-}_i}$ of EtoP connections of $W^u_+(p)$ to $\Gamma_o^*$ (with a single transition into the region of phase space with $|A|^2 < |B|^2$); the respective tangency points on $\mathbf{TC_+^+}$ also converge on the codimension-two point $\mathbf{TC_+^-}$, where the tangent heteroclinic connection shown in \fref{fig:CodTwoDSCall}(c) is directly from $\Gamma_o$ to its symmetric counterpart $\Gamma_o^*$. 

Overall, we observe that the codimension-two singular cycles at the points $\mathbf{DTC_+^s}$, $\mathbf{DTC_+^+}$ and $\mathbf{DTC_+^-}$ play a role for the organization of the respective families of EtoP connections similar to that of the codimension-two singular EtoP cycles at the points $\mathbf{CF}$ and $\mathbf{CF^s}$ for the respective families of Shilnikov bifurcations \cite{ Kirk1, Lohr1}. While the investigation here sheds considerable light on their role and generalizing nature of the $\mathbf{CF}$ point, determining the unfoldings of such more complicated singular cycles remains a challenge for future research.

%%%%%%%%%%%%%%%%%%%%%%%%%%%%%%%%%%%%%%%%%%%%%%%%%%%%%%%%
\section{Regions of chaotic dynamics and role of degenerate singular cycles} \label{sec:RegionDegSin}

%%%%%%%%%%%%%%%%%%%%%%%%%%%%%%%%%%%%%%%%%%%%%%%%%%%%%%%%
\begin{figure}
\centering
\includegraphics[scale=0.95]{./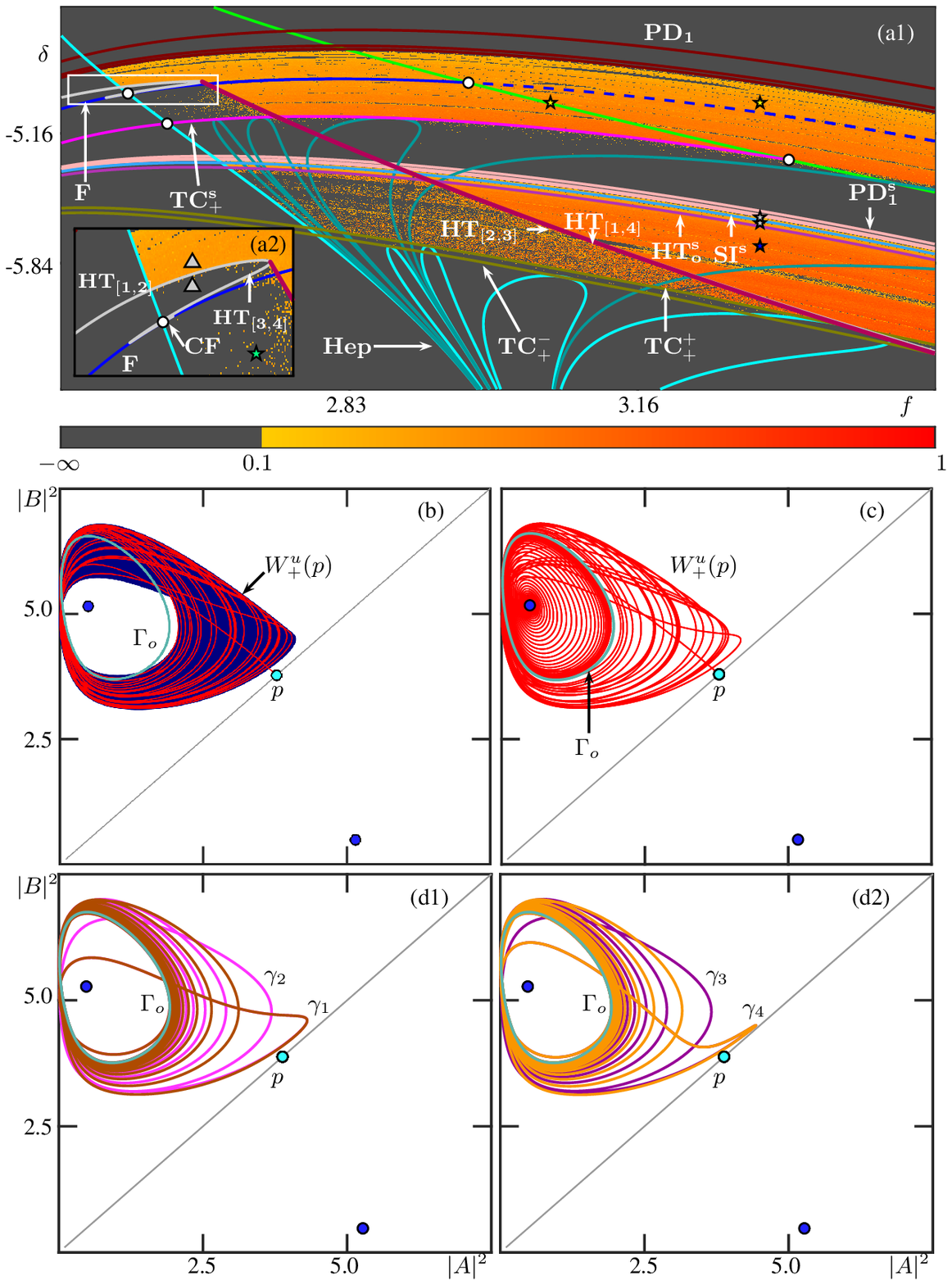} 
\caption{Coloring of the $(f,\delta)$-plane in panel~(a1) by the maximum Lyapunov exponent associated with $W^u_+(p)$, with previously shown bifurcation curves from \fref{fig:BifFoldHFs} and \fref{fig:BifFoldHFpe}; at the yellow stars one finds the chaotic attractors in \fref{fig:chaos}, and at the grey and blue stars those in \fref{fig:ManifoldsTransSInv}. Shown additionally are the curves $\mathbf{HT_{[i,j]}}$ for $\mathbf{i,j}=1,...,4$ of fold bifurcations of homoclinic orbits to $\Gamma_o$; see also the enlargement panel~(a2), where two parameter points are indicated by grey triangles: at $(f,\delta) = (2.6, -4.89)$ as in panel~(b), $W^u_+(p)$ accumulates on a (nonsymmetric) chaotic attractor, while at $(f,\delta) = (2.6, -4.91)$ as in panel~(c), $W^u_+(p)$ converges to an attracting nonsymmetric equilibrium. Panels~(d1) and~(d2) show the four transversal homoclinic orbits to $\Gamma_o$, labeled $\gamma_1$ to $\gamma_4$, that exist simulaneously for $(f,\delta)=(2.65,-4.985)$, the green star in panel~(a2).} 
\label{fig:LyapunovBifUltraZoom}
\end{figure} 
%%%%%%%%%%%%%%%%%%%%%%%%%%%%%%%%%%%%%%%%%%%%%%%%%%%%%%%%

From \fref{fig:BifFoldHFpe} one might expect chaotic dynamics in all the regions of the $(f,\delta)$-plane with many intermingling colors, representing a sensitivity in the kneading sequences generated by $W^u_+(p)$. To check whether this is the case, we now complement our analysis by computing the maximum Lyapunov exponent \cite{Lya1} associated with the trajectory $W^u_+(p)$ over a fine grid in our region of interest of the $(f,\delta)$-plane. \Fref{fig:LyapunovBifUltraZoom}(a1) shows the resulting coloring of the $(f,\delta)$-plane, together with the previous bifurcation curves $\mathbf{Hep}$, $\mathbf{H}$, $\mathbf{PD_i}$, $\mathbf{F}$, $\mathbf{F^*}$, $\mathbf{TC_+^s}$, $\mathbf{PD_i^s}$, $\mathbf{SI^s}$, $\mathbf{HT_o^s}$, $\mathbf{TC_+^+}$ and $\mathbf{TC_+^-}$ that appeared already in \fref{fig:BifFoldHFs} and \fref{fig:BifFoldHFpe}. As is to be expected from the discussion in the previous sections, the accumulation curve of the curves $\mathbf{PD_i}$ and $\mathbf{PD_i^s}$ of period doublings, as well as the curves of the $\mathbf{TC_+^s}$ and $\mathbf{TC_+^-}$ of heteroclinic tangencies involving $\Gamma_o$, are seen to bound regions where the Lyapunov exponent is quite consistently positive, which indicates chaotic dynamics. Notice that the folds associated with the symmetry-increasing bifurcations do not affect the Lyapunov exponent as expected. Indeed, the chaotic attractors from \fref{fig:chaos} and \fref{fig:ManifoldsTransSInv} lie in these regions, at the yellow stars and at the grey and blue stars, respectively. We stress that the Lyapunov exponent shown in panels~(a) is that generated by $W^u_+(p)$. Hence, a negative Lyapunov exponent implies that $W^u_+(p)$ does not accumulate on a chaotic attractor; it does not necessarily indicate that no chaotic attractor exists. Closer inspection of the spotty area in panel~(a), below the curves $\mathbf{HT_{[1,2]}}$ and $\mathbf{HT_{[1,4]}}$, shows that  $W^u(\mathbf{0})$ is attracted to a stable equilibrium or periodic orbit; thus, the positive Lyapunov exponent in this region is an artefact caused by the need for a very large integration time to determine the fate of $W^u_+(\mathbf{0})$.  Also, notice the gray regions inside the solid orange area. They correspond to periodic windows when an attracting periodic orbit is created by saddle-node bifurcations; for example, the biggest window corresponds to a period-three orbit. Interestingly, below the fold curve $\mathbf{F}$ the windows correspond to S-invariant period orbits; for example, in the biggest window one finds the attracting periodic orbit with kneading $S= (\overline{++--})$.

Somewhat surprisingly, \fref{fig:LyapunovBifUltraZoom}(a1) shows that chaotic dynamic is accessible to $W^u_+(p)$ only for sufficiently large $f$, as determined by the additional bifurcation curves labeled $\mathbf{HT_{[1,2]}}$ and $\mathbf{HT_{[1,4]}}$; see also the enlargement in panel~(a2). Notice, in particular, that the curve $\mathbf{Hep}$ represents a transition to chaotic dynamics accessed by $W^u_+(p)$ only above its intersection point with the new curve $\mathbf{HT_{[1,2]}}$. The loss of the chaotic attractor in the transition across the curve $\mathbf{HT_{[1,2]}}$ is illustrated with \fref{fig:LyapunovBifUltraZoom}(b) and~(c), which show the fate of $W^u_+(p)$ in the $(|A|^2,|B|^2)$-plane for the parameter points indicated by the two grey triangles in panel~(a2). Above $\mathbf{HT_{[1,2]}}$, as in panel~(b) for $(f,\delta) = (2.6, -4.89)$, the branch $W^u_+(p)$ accumulates on a nonsymmetric chaotic attractor; below $\mathbf{HT_{[1,2]}}$, as in panel~(c) for $(f,\delta) = (2.6, -4.91)$, on the other hand, $W^u_+(p)$ converge to the upper nonsymmetric attracting equilibrium. 

The curves $\mathbf{HT_{[1,2]}}, \mathbf{HT_{[3,4]}}, \mathbf{HT_{[1,4]}}$ and $\mathbf{HT_{[2,3]}}$ correspond to fold bifurcations between the manifolds $W^u(\Gamma_o)$ and $W^s(\Gamma_o)$ of the saddle periodic orbit $\Gamma_o$. In particular, the bifurcation $\mathbf{HT_{[1,2]}}$ generates a pair of transverse homoclinic orbits to $\Gamma_o$; they are shown in \fref{fig:LyapunovBifUltraZoom}(d1), and we refer to them as $\gamma_1$ and $\gamma_2$, respectively. When the curve $\mathbf{HT_{[3,4]}}$ in panel~(a2) is also crossed, then there is another fold bifurcation that creates a second pair of transverse homoclinic orbits to $\Gamma_o$, which we show in \fref{fig:LyapunovBifUltraZoom}(d2) and label $\gamma_3$ and $\gamma_4$, respectively. Note that all four transverse homoclinic orbits $\gamma_1$ to $\gamma_4$ coexist in the region below $\mathbf{HT_{[3,4]}}$; those in panels~(d1) and~(d2) are for the same parameter point indicated by the green star in panels~(a2). Moreover, at the two branches $\mathbf{HT_{[2,3]}}$ and $\mathbf{HT_{[1,4]}}$ the transverse homoclinic orbits $\gamma_2$ and $\gamma_3$, and $\gamma_1$ and $\gamma_4$ coalesce, respectively. Note that the curves $\mathbf{HT_{[2,3]}}$ and $\mathbf{HT_{[1,4]}}$ are very close to each other in the $(f,\delta)$-plane, which means that these homoclinic orbits to $\Gamma_o$ appear in quick succession when $f$ is increased; nevertheless, the two curves can be identified and distinguished clearly during continuation, because they concern quite different objects in phase space. Notice that the two curves $\mathbf{HT_{[1,2]}}$ and $\mathbf{HT_{[3,4]}}$ are branches of a single smooth curve of folds that intersects, at the codimension-two point $\mathbf{FF}$, a second single smooth curve which is formed by $\mathbf{HT_{[2,3]}}$ and $\mathbf{HT_{[1,4]}}$. Determining this intersection conclusively, both theoretically \cite{Kirk1} and numerically, is a challenging task. Nevertheless, we find that the point $\mathbf{FF}$ lies very near the point of largest curvature of either of the two fold curves, and our computations strongly suggest that at $\mathbf{FF}$ the branch $\mathbf{HT_{[1,2]}}$ becomes $\mathbf{HT_{[3,4]}}$ and the branch $\mathbf{HT_{[2,3]}}$ becomes $\mathbf{HT_{[1,4]}}$. 

%%%%%%%%%%%%%%%%%%%%%%%%%%%%%%%%%%%%%%%%%%%%%%%%%%%%%%%%
\begin{figure}[t!]
\centering
\includegraphics[scale=0.95]{./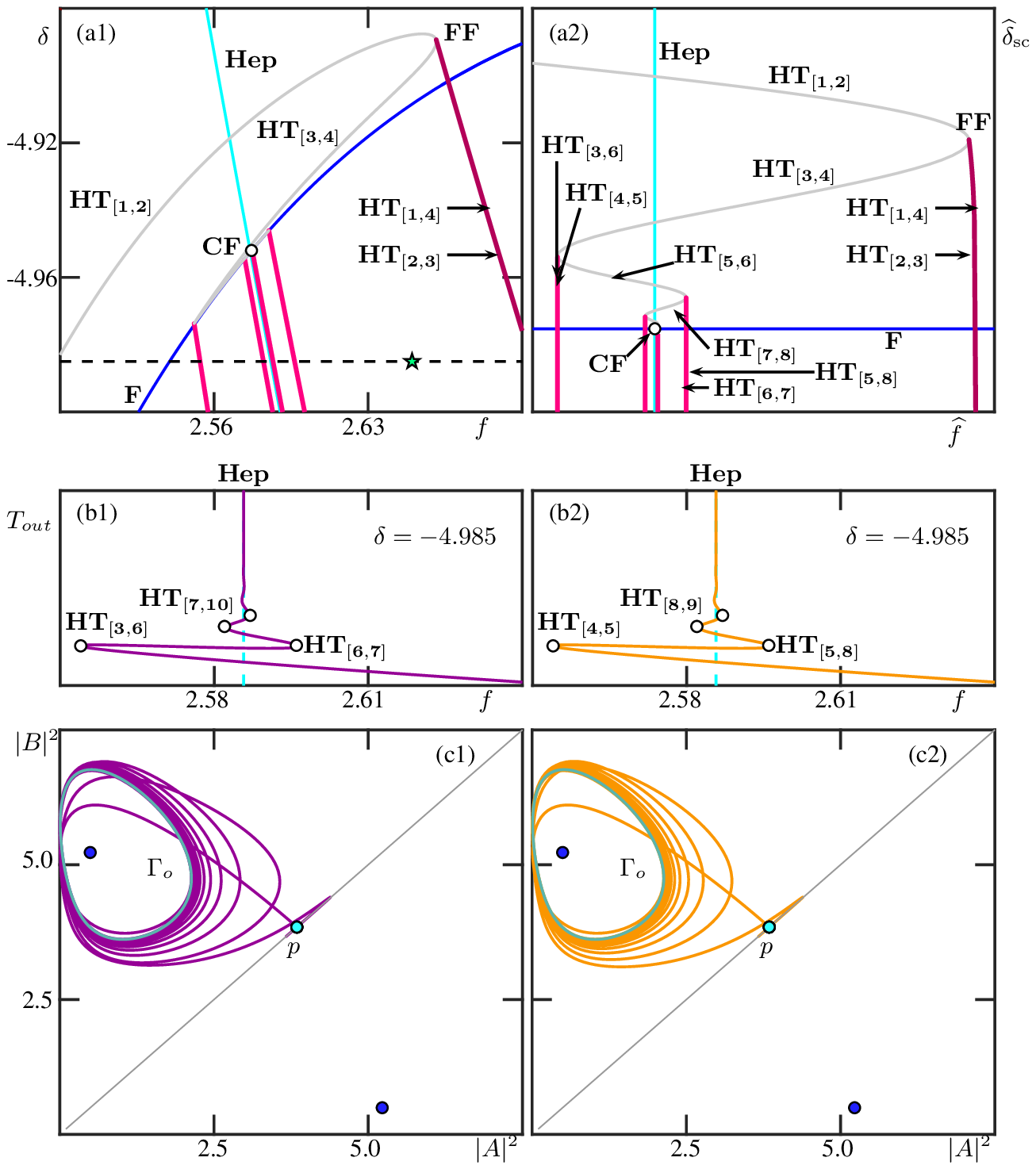} 
\caption{Bifurcation diagram near the codimension-two point $\mathbf{CF}$ in the $(f,\delta)$-plane. Panel~(a1) is an enlargement of \fref{fig:LyapunovBifUltraZoom}(a2) near the point, and panel~(a1) shows the same data after a coordinate change to straightened parameters $\widehat{f}$ and $\widehat{\delta}$, where the curves $\mathbf{Hep}$ and $\mathbf{F}$ are the coordinate axes. The grey curve, the continuation of $\mathbf{HT_{[3,4]}}$ oscillates into the point $\mathbf{CF}$, which generates further pairs of curves $\mathbf{HT_{[i,i+1]}}$, $\mathbf{i}=2,4,6$ and $\mathbf{HT_{[i,i+3]}}$, $\mathbf{i}=1,3,5$ (fuchsia) of fold bifurcations of homoclinic orbits to $\Gamma_o$. Panels~(b1) and~(b2) show the continuation in $f$ for $\delta=-4.985$ of the two transversal homoclinic orbits $\gamma_3$ and $\gamma_4$ from \fref{fig:LyapunovBifUltraZoom}(d2), where the time $T_{out}$ is the time that the homoclinic orbit spends outside a tubular neighbourhood of $\Gamma_o$. Panels~(c1) and~(c2) show the two respective homoclinic orbits with large $T_{out}$ as they approach the bifurcation $\mathbf{Hep}$.} 
\label{fig:singlecodim2EtoP} 
\end{figure} 
%%%%%%%%%%%%%%%%%%%%%%%%%%%%%%%%%%%%%%%%%%%%%%%%%%%%%%%%

Observe in \fref{fig:LyapunovBifUltraZoom}(a2) that the branch $\mathbf{HT_{[3,4]}}$ appears to oscillate into the point $\mathbf{CF}$. \Fref{fig:singlecodim2EtoP} shows that this is indeed the case and that infinitely many additional pairs of fold bifurcations are generated in the process. Panel~(a1) is an enlargement of the $(f,\delta)$-plane near the points $\mathbf{CF}$ and $\mathbf{FF}$. Also shown are branches $\mathbf{HT_{[i,j]}}$ (not labeled) of further folds of homoclinic orbits to $\Gamma_o$. To clarify the bifurcation diagram near the global codimension-two point $\mathbf{CF}$ in the style of an unfolding, we present in panel~(a2) the same curves in a straightened out manner. This image was generated from the computed data by a smooth coordinate change, constructed by splines, that maps the curves  $\mathbf{F}$ and $\mathbf{Hep}$ in the $(f,\delta)$-plane to the coordinate axes in the $(\widehat{f}, \widehat{\delta})$-plane; additionally, we rescale the positive $\widehat{\delta}$-axis to  $\widehat{\delta}_{\rm sc}$ so that the oscillating nature of the grey fold curve becomes more visible. Subsequent intersection points of the grey fold curve with $\mathbf{Hep}$ now clearly show exponential convergence to the point $\mathbf{CF}$.

This numerically computed bifurcation diagram agrees with the conjectured unfolding in \cite[Fig. 15~(b)]{Kirk1}, which was developed for a geometrical two-dimensional map reflecting the dynamics close to a degenerate singular cycle between a saddle focus equilibrium and a saddle periodic orbit. It was shown in \cite{Kirk1} that to leading order the grey curve satisfies
\begin{equation}\label{eq:leadingOrder}
\widehat{f} = -c \widehat{\delta}^{\, \eta_1} \cos \left(\Phi- \eta_2\ln{\widehat{\delta}}\right),
\end{equation}
where $c$ and $\Phi$ are constants, and $\eta_1 = | \text{Re}(\lambda_s) |/\lambda_u$ and $\eta_2= \text{Im}(\lambda_s)/\lambda_u$; here, $\lambda_s$ and $\lambda_u$ are the complex stable and the real unstable eigenvalues of the saddle focus equilibrium at the point $\mathbf{CF}$, respectively. In particular, if one takes the maxima in $\widehat{f}$ of Eq.~\cref{eq:leadingOrder} then it is clear that $\widehat{f} = -c \widehat{\delta}^{\, \eta_1}$, while minima in $\widehat{f}$ satisfies $\widehat{f} = c \widehat{\delta}^{\, \eta_1}$. Applying a least-squares approximation to the logarithm of both the maxima and minima of our numerical data in the $(\widehat{f},\widehat{\delta})$-plane yields the estimate $\widetilde{\eta_1} \approx 0.3495$, which is very close to the computed eigenvalue ratio  $\eta_1 \approx 0.3633$. By considering this curve in the $(\widehat{f},\ln(\widehat{\delta}))$-plane, where the maxima and minima of $\widehat{f}$ are evenly spaced as $\widehat{\delta}$ increases, one approximately recover the period of the oscillation; the average distance in $\widehat{\delta}$ of succesive maxima gives the estimate $\widetilde{\eta_2} \approx 0.9913$, while the eigenvalue ratio is  $\eta_2 \approx 0.9835$. This shows that the conjectured unfolding in \cite{Kirk1} and our numerical computations for the vector field case agree at a quantitative level. \Fref{fig:singlecodim2EtoP}(a2) also shows more clearly the additional pairs of branches $\mathbf{HT_{[i,i+1]}}$ with $\mathbf{i}=2,4,6$ and $\mathbf{HT_{[i,i+3]}}$ with $\mathbf{i}=1,3,5$ that emerge from additional fold-fold points on the grey curve (located extremely close to folds with respect to $\widehat{f}$). Going along the grey curve past each fold-fold point generates a new branch $\mathbf{HT_{[i,i+1]}}$, with $\mathbf{i}=5,7,\ldots$, which generates a new and coexisting pair $\gamma_i$ and $\gamma_{i+1}$ of transverse homoclinic orbits to $\Gamma_o$. The overall geometry is that of a folded, two-sheeted surface (think of a folded piece of paper) \cite{arnold1985singularities,Ali1} that folds back and forth with decreasing amplitude. 

Having an explicit system that exhibits this degenerate singular cycle $\mathbf{CF}$ allows us to understand the nature of the folds that accumulate on it. We consider now the situation where we continue in $f$ the homoclinic orbits $[3]$ and $[4]$ through their respective folds for fixed $\delta$ below the point $\mathbf{FF}$. Specifically for $\delta = -4.985$, which is the horizontal line through the green star in \fref{fig:LyapunovBifUltraZoom}(a2), which also appears in \fref{fig:singlecodim2EtoP}(a1). The results of these two continuations are shown in \fref{fig:singlecodim2EtoP}(b1) and~(b2), where the respective homoclinic orbit is represented by the time $T_{out}$ that it spends outside a tubular neighborhood of $\Gamma_o$. Notice from these panels that we observe what one might call a `Shilnikov bifurcation of EtoP cycles', characterized by the accumulation of a homoclinic orbit to a periodic orbit onto heteroclinic cycles, here involving $p$ and $\Gamma_o$. The last computed homoclinic orbits for large $T_{out}$ that approach the bifurcation $\mathbf{Hep}$ are shown in panels~(c1) and~(c2), respectively. Notice that each of these limits is a different codimension-one EtoP cycle between $p$ and $\Gamma_o$; as the point $\mathbf{CF}$ is approached, these two distinct EtoP cycles approach each other and then coalesce at $\mathbf{CF}$, where they disappear. The connections from $\Gamma_o$ to $p$ are structurally stable, in both panels, and they become the same tangent connection at $\mathbf{CF}$ when continued to $\mathbf{Hep}$. While a more complete study of the codimension-two global bifurcation point $\mathbf{CF}$ --- the focus case of the degenerate singular cycle studied in \cite{ Kirk1, Lohr1} --- remains an interesting challenge for future research, we conclude here that \fref{fig:singlecodim2EtoP}(a2) represents a concise and consistent numerical unfolding (of which there may be several). Furthermore, the explicit system~\eref{eq:Couplednondim} offers opportunities for future study of the other degenerate singular cycle $\mathbf{CF^s}$ (which exists by $\Z_2$-equivariance), and the singular cycles $\mathbf{DSC^s_+}$, $\mathbf{DSC^+_+}$ and $\mathbf{DSC^-_+}$ when the tangent connections occur between saddle periodic orbits.

%%%%%%%%%%%%%%%%%%%%%%%%%%%%%%%%%%%%%%%%%%%%%%%%%%%%%%%%
\section{Perspectives and Conclussions} \label{sec:conclusions}

%%%%%%%%%%%%%%%%%%%%%%%%%%%%%%%%%%%%%%%%%%%%%%%%%%%%%%%%
\begin{figure}
\centering
\includegraphics[scale=0.95]{./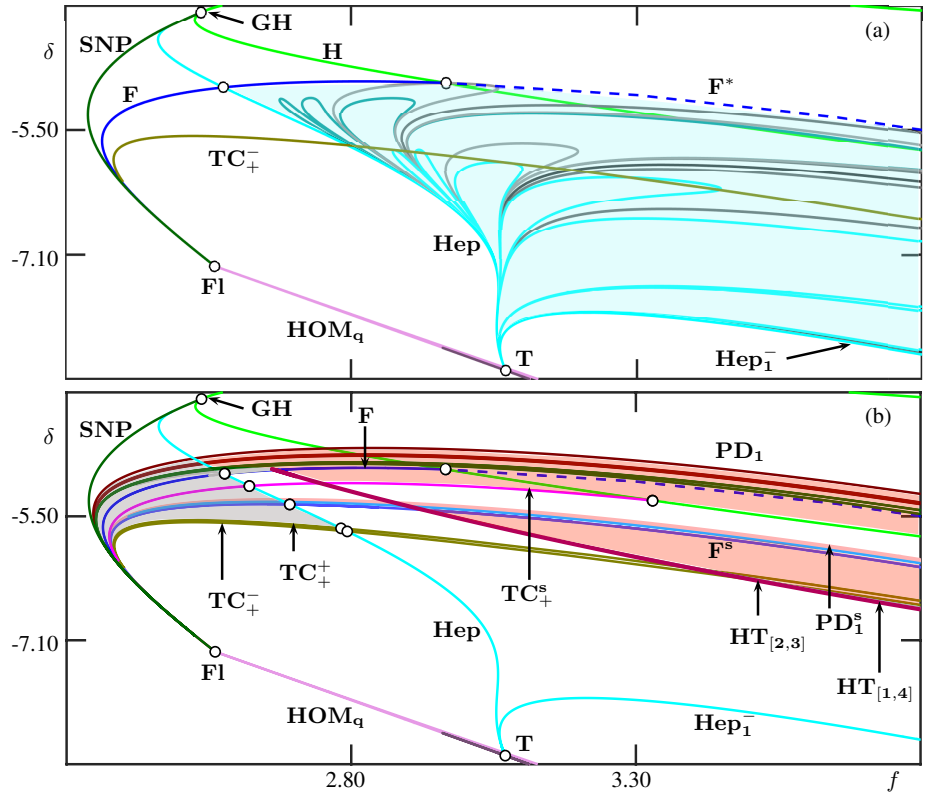} 
\caption{Bifurcation diagrams in the $(f,\delta)$-plane of system~\eref{eq:Couplednondim} showcasing the curves that emanate from the Bykov T-point $\mathbf{T}$ in panel~(a), and the ones that emanate from the flip bifurcation $\mathbf{Fl}$ in panel~(b). Shilnikov bifurcations may be encounterd only in the blue highlighted region in panel~(a), and $W^u_+(p)$ may converges to a chaotic attractor only in the red highlighted region in panel~(b).} 
\label{fig:LastBif}
\end{figure} 
%%%%%%%%%%%%%%%%%%%%%%%%%%%%%%%%%%%%%%%%%%%%%%%%%%%%%%%%

We finish by providing an overall perspective of the bifurcation structure we found in system~\eref{eq:Couplednondim}. To this end, \fref{fig:LastBif} shows two final bifurcation diagrams that highlight the roles of the Bykov T-point $\mathbf{T}$ and the flip bifurcation $\mathbf{Fl}$, respectively, as organizing centers of the $(f,\delta)$-plane. Panel~(a) shows that all of the different curves of codimension-one connecting orbits we studied --- Shilnikov homoclinic orbits and EtoP connections involving different periodic orbits --- emerge from the point $\mathbf{T}$; more specifically, near $\mathbf{T}$ they exist in a wedge (cyan region) bounded by the two primary EtoP connections: $\mathbf{Hep}$ with kneading sequence $S=(+\overline{+})$ and $\mathbf{Hep_1^-}$ with kneading sequence $S=(+\overline{-})$. Along a sufficiently small circle around this point we expect all possible global bifurcations to be present that are generated by $\mathbf{T}$ and involve the saddle equilibrium $p$. On the other hand, further away certain global bifurcation curves form loops, which means that they are no longer encountered along a larger circle. As we have seen, such loops are closely associated with different types of fold curves of certain global invariant manifolds. In particular, we find that the curves of Shilnikov bifurcations are tangent to the fold curve $\mathbf{F}$ and $\mathbf{F^*}$ \cite{Kirk1, Lohr1} and curves of EtoP connections to the fold curves $\mathbf{TC_+^s}$, $\mathbf{TC_+^+}$ and $\mathbf{TC_+^-}$.

In turn, \fref{fig:LastBif}(b) highlights the fold curve $\mathbf{F}$ and all the other fold curves we considered, as well as the local bifurcations associated with the periodic orbits we have studied in this paper. Notice that all these bifurcations can all be continued as curves to the flip bifurcation point $\mathbf{Fl}$. We have numerical evidence that at this point the two-dimensional stable manifold of the symmetric real saddle equilibrium involved in the $\mathbf{T}$-point is at an inclination flip configuration. To our knowledge, the only results concerning a flip bifurcation in a $\mathbf{Z_2}$-equivariant system are those in \cite{Golmakani2011}, where it is studied how Lorenz-like attractors arise from a symmetric orbit flip bifurcation in three-dimensional vector field. Our results suggest the ingredients of an unfolding, in the sense that all curves we found to emerge from $\mathbf{Fl}$ in between the curves $\mathbf{SNP}$ and $\mathbf{TC^-_+}$ will be encountered along a sufficiently small circle around this point. 

When global bifurcation curves and, in particular, the fold curves of invariant manifolds cross the curve $\mathbf{Hep}$ then additional bifurcations emerge from the respective codimension-two crossing points. One specific example we considered in more detail is the emergence from the crossing point of $\mathbf{F}$ and $\mathbf{Hep}$ of additonal curves $\mathbf{HT_{[i,j]}}$ of fold bifurcations of homoclinic orbits to a peridic orbit; these turned out to play a role for bounding the region where the branch $W^u_+(p)$ converges to a chaotic attractor, and this region is highlighted in \fref{fig:LastBif}(b).

Overall, we discovered an intringuing bifurcation structure due to the interaction of the points $\mathbf{T}$ and $\mathbf{Fl}$ in the two-site $\Z_2$-equivariant and four-dimensional semiclassical Bose--Hubard dimer model. More specifically, certain curves emanating from $\mathbf{T}$ and $\mathbf{Fl}$, respectively, cross transversely to create many additional codimension-two points, generating cascades of additional global bifurcations. Our numerical evidence suggests that there exist infinitely many codimension-two points along the curve $\mathbf{Hep}$ in between the fold curves $\mathbf{F}$ and $\mathbf{TC_+^-}$. These points organize finer details in the organization of the kneading invariant in the bifurcation diagram that we showed, and they should be associated with saddle periodic orbits that emanate from the local bifurcations. It is natural to suspect that there is an underlying organizing center of codimension-three, characterized by the confluence of the codimension points $\mathbf{T}$ and $\mathbf{Fl}$ along the curve $\mathbf{HOM_q}$ in \fref{fig:LastBif}. The further study of this and other global bifurcations is beyond the scope of this paper and remains an interesting and challenging subject for future research. 

By considering the global bifurcation associated with the saddle focus $p$ and the saddle periodic orbits $\Gamma_o$, $\Gamma^s_o$ and $\Gamma^s$ only, we already obtained a detailed characterization of different dynamics. In particular, this includes different types of chaotic behavior in the red region of \fref{fig:LastBif}(b), whose different switching properties implies that it might be possible to distinguish them in an experiment. Localized, \emph{non-switching chaotic behavior} arises through the period-doubling cascade $\mathbf{PD_i}$ of asymmetric periodic orbits; it undergoes a boundary crisis at the homoclinic fold bifurcation curves $\mathbf{HT_{[1,2]}}$ and $\mathbf{HT_{[1,4]}}$. The symmetry increasing transition when the fold bifurcation $\mathbf{F}$ is crossed then leads to \emph{chaotic behavior with chaotic switching}; the associated chaotic attractor undergoes a boundary crisis at the curves  $\mathbf{HT_{[1,4]}}$ or $\mathbf{TC_+^s}$. Similarly, \emph{chaotic behavior with regular switching} arises from the period-doubling cascade $\mathbf{PD^s_i}$ of the pair of asymmetric periodic orbits $\widehat{\Gamma}_a$ and $\widehat{\Gamma}^*_a$, and it features a subsequent symmetry increasing bifurcation at the curve $\mathbf{SI^s}$. An interior crisis bifurcation of the associated chaotic attractor near the curves $\mathbf{F^s}$ and $\mathbf{HT^s_o}$ then creates \emph{chaotic behavior with intermittent regular and chaotic switching}; this attractor undergoes a boundary crisis at the $\mathbf{HT_{[1,4]}}$, $\mathbf{TC_+^+}$ or $\mathbf{TC_+^-}$.

Finally, it will be an interesting challenge to identify the different types of dynamics, including the different types of chaotic switching between the two cavities, in an actual experiment. This will require an increase of the strength of the (optical) driving beyond what has been achieved so far. A related question is how chaotic and other behavior identified in the semiclassical approximation studied here manifests itself in measurements of systems near the quantum limit, such as photonic nanocavities with only quite few photons \cite{Haddadi_2014}. Providing insights about quantum chaos and the transition between the classical and quantum regimes will require the analysis of statistical properties of open quantum systems; this is a challenging topic of ongoing research \cite{PhysRevE.67.066203, PhysRevResearch.2.033131, PhysRevA.102.063702, wimberger2014nonlinear}.

%%%%%%%%%%%%%%%%%%%%%%%%%%%%%%%%%%%%%%%%%%%%%%%%%%%%%%%%
\section*{Acknowledgments}
We thank Bruno Garbin, Ariel Levenson and Alejandro Yacomotti for sharing their insight, especially concerning their experiments. We also thank Cris Hasan, Stefan Ruschel and Kevin Stitely for insightful discussions. 

%%%%%%%%%%%%%%%%%%%%%%%%%%%%%%%%%%%%%%%%%%%%%%%%%%%%%%%%
\bibliographystyle{siam}
\bibliography{GBK_BHchaosArXiv}

\begin{thebibliography}{10}

\bibitem{TIDES2012}
{\sc A.~Abad, R.~Barrio, F.~Blesa, and M.~Rodr\'iguez}, {\em Algorithm 924:
  Tides, a taylor series integrator for differential equations}, ACM Trans.
  Math. Software, 39 (2012).

\bibitem{PhysRevLett.95.010402}
{\sc M.~Albiez, R.~Gati, J.~F\"olling, S.~Hunsmann, M.~Cristiani, and M.~K.
  Oberthaler}, {\em {Direct observation of tunneling and nonlinear
  self-trapping in a single bosonic Josephson junction}}, Phys. Rev. Lett., 95
  (2005), p.~010402.

\bibitem{arnold1985singularities}
{\sc V.~Arnold, A.~Varchenko, and S.~Gusein-Zade}, {\em Singularities of
  Differentiable Maps: Volume I: The Classification of Critical Points Caustics
  and Wave Fronts}, Monographs in Mathematics, Birkh{\"a}user Boston, 1985.

\bibitem{Ashwin1990}
{\sc P.~Ashwin}, {\em {Symmetric chaos in systems of three and four forced
  oscillators}}, Nonlinearity, 3 (1990), pp.~603--617.

\bibitem{Shil4}
{\sc R.~Barrio and A.~Shilnikov}, {\em {Parameter--sweeping techniques for
  temporal dynamics of neuronal systems: case study of Hindmarsh--Rose model}},
  J. Math. Neurosci., 1 (2011).

\bibitem{Shil3}
{\sc R.~Barrio, A.~Shilnikov, and L.~Shilnikov}, {\em Kneadings, symbolic
  dynamics and painting {L}orenz chaos}, Internat. J. Bifur. Chaos Appl. Sci.
  Engrg., 22 (2012).

\bibitem{Ben-Tal2002}
{\sc A.~Ben-Tal}, {\em {Symmetry restoration in a class of forced
  oscillators}}, Physica D, 171 (2002), pp.~236--248.

\bibitem{MaiaThesis}
{\sc M.~Brunstein}, {\em Nonlinear Dynamics in {III-V} Semiconductor Photonic
  Crystal Nano-cavities}, PhD thesis, Universit\'e Paris Sud - Paris XI, 2011.

\bibitem{Call1}
{\sc R.~C. Calleja, E.~J. Doedel, A.~R. Humphries, A.~Lemus-Rodr\'iguez, and
  E.~B. Oldeman}, {\em Boundary-value problem formulations for computing
  invariant manifolds and connecting orbits in the circular restricted three
  body problem}, Celest. Mech. Dyn. Astron., 114 (2012), pp.~77--106.

\bibitem{BinMahmud}
{\sc B.~Cao, K.~W. Mahmud, and M.~Hafezi}, {\em Two coupled nonlinear cavities
  in a driven-dissipative environment}, Phys. Rev. A, 94 (2016), p.~063805.

\bibitem{Carmichael:1631392}
{\sc H.~Carmichael}, {\em {An open systems approach to quantum optics}},
  {Lecture Notes in Physics Monographs}, Springer, Berlin, 1993.

\bibitem{Casteels2017}
{\sc W.~Casteels and C.~Ciuti}, {\em {Quantum entanglement in the
  spatial--symmetry-breaking phase transition of a driven-dissipative
  Bose--Hubbard dimer}}, Phys. Rev. A, 95 (2017), p.~013812.

\bibitem{PhysRevA.93.033824}
{\sc W.~Casteels, F.~Storme, A.~Le~Boit\'e, and C.~Ciuti}, {\em Power laws in
  the dynamic hysteresis of quantum nonlinear photonic resonators}, Phys. Rev.
  A, 93 (2016), p.~033824.

\bibitem{Kirk1}
{\sc A.~R. Champneys, V.~Kirk, E.~Knobloch, B.~E. Oldeman, and J.~D.~M.
  Rademacher}, {\em Unfolding a tangent equilibrium--to--periodic heteroclinic
  cycle}, SIAM J. Appl. Dyn. Syst., 8 (2009), pp.~1261--1304.

\bibitem{san2}
{\sc A.~R. Champneys, Y.~Kuznetsov, and B.~Sandstede}, {\em A numerical toolbox
  for homoclinic bifurcation analysis}, Internat. J. Bifur. Chaos Appl. Sci.
  Engrg., 6 (1996), pp.~867--887.

\bibitem{Chossat1988}
{\sc P.~Chossat and M.~Golubitsky}, {\em {Symmetry--increasing bifurcation of
  chaotic attractors}}, Physica D, 32 (1988), pp.~423--436.

\bibitem{Lya1}
{\sc F.~Christiansen and H.~H. Rugh}, {\em {Computing Lyapunov spectra with
  continuous Gram--Schmidt orthonormalization}}, Nonlinearity,  (1997),
  pp.~1063--1072.

\bibitem{PhysRevE.64.025202}
{\sc P.~Coullet and N.~Vandenberghe}, {\em {Chaotic self-trapping of a weakly
  irreversible double Bose condensate}}, Phys. Rev. E, 64 (2001), p.~025202.

\bibitem{Dellnitz1995}
{\sc M.~Dellnitz and C.~Heinrich}, {\em {Admissible symmetry increasing
  bifurcations}}, Nonlinearity, 8 (1995), pp.~1039--1066.

\bibitem{Doe1}
{\sc E.~J. Doedel}, {\em {AUTO}: A program for the automatic bifurcation
  analysis of autonomous systems}, Congr. Numer., 30 (1981), pp.~265--284.

\bibitem{Doe3}
{\sc E.~J. Doedel, B.~Krauskopf, and H.~M. Osinga}, {\em Global invariant
  manifolds in the transition to preturbulence in the lorenz system}, Indag.
  Math., 22 (2011), pp.~223--241.

\bibitem{Doe2}
{\sc E.~J. Doedel and B.~E. Oldeman}, {\em {AUTO}-\textup{07}p: Continuation
  and Bifurcation Software for Ordinary Differential Equations}, Department of
  Computer Science, Concordia University, Montreal, Canada, 2010.
\newblock With major contributions from A. R. Champneys, F. Dercole, T. F.
  Fairgrieve, Y. Kuznetsov, R. C. Paffenroth, B. Sandstede, X. J. Wang and C.
  H. Zhang; available at \url{http://www.cmvl.cs.concordia.ca/}.

\bibitem{DrummondWalls1980}
{\sc P.~D. Drummond and D.~F. Walls}, {\em {Quantum theory of optical
  bistability. I. Nonlinear polarisability model}}, J. Phys. A: Math. Gen, 13
  (1980), pp.~725--741.

\bibitem{PhysRevE.67.066203}
{\sc C.~Emary and T.~Brandes}, {\em {Chaos and the quantum phase transition in
  the Dicke model}}, Phys. Rev. E, 67 (2003), p.~066203.

\bibitem{Gibbs1985}
{\sc H.~M. Gibbs}, {\em Optical Bistability: Controlling Light with Light},
  Academic Press, 1985.

\bibitem{Ali1}
{\sc R.~Gilmore and M.~Lefranc}, {\em The Topology of Chaos: Alice in Stretch
  and Squeezeland}, Wiley-Interscience, 2002.

\bibitem{And3}
{\sc A.~Giraldo, B.~Krauskopf, N.~G.~R. Broderick, A.~M. Yacomotti, and J.~A.
  Levenson}, {\em {The driven--dissipative Bose--Hubbard dimer: phase diagram
  and chaos}}, New J. Phys.,  (2020).

\bibitem{And1}
{\sc A.~Giraldo, B.~Krauskopf, and H.~M. Osinga}, {\em Saddle invariant objects
  and their global manifolds in a neighborhood of a homoclinic flip bifurcation
  of case {B}}, SIAM J. Appl. Dyn. Syst., 16 (2017), pp.~640--686.

\bibitem{And2}
\leavevmode\vrule height 2pt depth -1.6pt width 23pt, {\em Cascades of global
  bifurcations and chaos near a homoclinic flip bifurcation: A case study},
  SIAM J. Appl. Dyn. Syst., 17 (2018), pp.~2784--2829.

\bibitem{Glendinning1984}
{\sc P.~Glendinning}, {\em {Bifurcations near homoclinic orbits with
  symmetry}}, Phys. Lett. A, 103 (1984), pp.~163--166.

\bibitem{Golmakani2011}
{\sc A.~Golmakani and A.~J. Homburg}, {\em {Lorenz attractors in unfoldings of
  homoclinic--flip bifurcations}}, Dynamical Systems, 26 (2011), pp.~61--76.

\bibitem{GREBOGI1983181}
{\sc C.~Grebogi, E.~Ott, and J.~A. Yorke}, {\em Crises, sudden changes in
  chaotic attractors, and transient chaos}, Physica D, 7 (1983), pp.~181--200.

\bibitem{Haddadi_2014}
{\sc S.~Haddadi, P.~Hamel, G.~Beaudoin, I.~Sagnes, C.~Sauvan, P.~Lalanne, J.~A.
  Levenson, and A.~M. Yacomotti}, {\em Photonic molecules: tailoring the
  coupling strength and sign}, Optics Express, 22 (2014), p.~12359.

\bibitem{Hamel2015}
{\sc P.~Hamel, S.~Haddadi, F.~Raineri, P.~Monnier, G.~Beaudoin, I.~Sagnes,
  A.~Levenson, and A.~M. Yacomotti}, {\em {Spontaneous mirror--symmetry
  breaking in coupled photonic--crystal nanolasers}}, Nature Photonics, 9
  (2015), pp.~311--315.

\bibitem{Heinrich2008}
{\sc C.~Heinrich}, {\em Symmetry increasing bifurcations via collisions of
  attractors}, Rocky Mountain J. Math., 29 (2008), pp.~559--608.

\bibitem{HomSan}
{\sc A.~J. Homburg and B.~Sandstede}, {\em Homoclinic and heteroclinic
  bifurcations in vector fields}, in Handbook of Dynamical Systems, H.~W.
  Broer, B.~Hasselblatt, and F.~Takens, eds., vol.~3, {Elsevier, New York},
  2010, pp.~381--509.

\bibitem{Joannopoulos:08:Book}
{\sc J.~D. Joannopoulos, S.~G. Johnson, J.~N. Winn, and R.~D. Meade}, {\em
  Photonic Crystals: Molding the Flow of Light}, Princeton University Press,
  2~ed., 2008.

\bibitem{King1992}
{\sc G.~P. King and S.~T. Gaito}, {\em {Bistable chaos. II. Bifurcation
  analysis M.}}, Phys. Rev. A, 46 (1992), pp.~3100--3110.

\bibitem{Knobloch2013}
{\sc J.~Knobloch, J.~S. Lamb, and K.~N. Webster}, {\em {Using Lin's method to
  solve Bykov's problems}}, J. Differential Equations, 257 (2013),
  pp.~2984--3047.

\bibitem{Krauskopf_2004}
{\sc B.~Krauskopf and B.~E. Oldeman}, {\em A planar model system for the
  saddle{\textendash}node hopf bifurcation with global reinjection},
  Nonlinearity, 17 (2004), pp.~1119--1151.

\bibitem{Kra2}
{\sc B.~Krauskopf and H.~M. Osinga}, {\em Computing invariant manifolds via the
  continuation of orbit segments}, in Numerical Continuation Methods for
  Dynamical Systems: {Path Following and Boundary Value Problems},
  B.~Krauskopf, H.~M. Osinga, and J.~Gal\'an-Vioque, eds., {Springer, The
  Netherlands}, 2007, pp.~117--154.

\bibitem{KraRie1}
{\sc B.~Krauskopf and T.~{Rie{\ss}}}, {\em A {L}in's method approach to finding
  and continuing heteroclinic connections involving periodic orbits},
  Nonlinearity, 21 (2008), pp.~1655--1690.

\bibitem{Kuz1}
{\sc Y.~A. Kuznetsov}, {\em Elements of Applied Bifurcation Theory},
  {Springer-Verlag, New York}, 3rd~ed., 2004.

\bibitem{Lohr1}
{\sc A.~Lohse and A.~Rodrigues}, {\em Boundary crisis for degenerate singular
  cycles}, Nonlinearity, 30 (2017), pp.~2211--2245.

\bibitem{10.5555/1593342}
{\sc A.~Matsko}, {\em Practical Applications of Microresonators in Optics and
  Photonics}, CRC Press, Inc., USA, 1st~ed., 2009.

\bibitem{Matsumoto1987}
{\sc T.~Matsumoto, L.~O. Chua, and M.~Komuro}, {\em {Birth and death of the
  double scroll}}, Physica D, 24 (1987), pp.~97--124.

\bibitem{Melbourne1993}
{\sc I.~Melbourne, M.~Dellnitz, and M.~Golubitsky}, {\em {The structure of
  symmetric attractors}}, Arch. Ration. Mech. Anal., 123 (1993), pp.~75--98.

\bibitem{Palis1}
{\sc J.~Palis and W.~de~Melo}, {\em Geometric Theory of Dynamical Systems},
  {Springer, New York}, 1982.

\bibitem{RADEMACHER2005390}
{\sc J.~D.~M. Rademacher}, {\em Homoclinic orbits near heteroclinic cycles with
  one equilibrium and one periodic orbit}, J. Differential Equations, 218
  (2005), pp.~390--443.

\bibitem{RADEMACHER2010305}
\leavevmode\vrule height 2pt depth -1.6pt width 23pt, {\em {Lyapunov--Schmidt
  reduction for unfolding heteroclinic networks of equilibria and periodic
  orbits with tangencies}}, J. Differential Equations, 249 (2010),
  pp.~305--348.

\bibitem{Shil5}
{\sc L.~P. Shilnikov}, {\em A case of the existence of a denumerable set of
  periodic motions}, Sov. Math. Dokl., 6 (1965), pp.~163--166.

\bibitem{PhysRevResearch.2.033131}
{\sc K.~C. Stitely, A.~Giraldo, B.~Krauskopf, and S.~Parkins}, {\em Nonlinear
  semiclassical dynamics of the unbalanced, open dicke model}, Phys. Rev.
  Research, 2 (2020), p.~033131.

\bibitem{PhysRevA.102.063702}
{\sc K.~C. Stitely, S.~J. Masson, A.~Giraldo, B.~Krauskopf, and S.~Parkins},
  {\em Superradiant switching, quantum hysteresis, and oscillations in a
  generalized dicke model}, Phys. Rev. A, 102 (2020), p.~063702.

\bibitem{Wigg1}
{\sc S.~Wiggins}, {\em Introduction to Applied Nonlinear Dynamical Systems and
  Chaos}, {Springer-Verlag, New York}, 2nd~ed., 2003.

\bibitem{wimberger2014nonlinear}
{\sc S.~Wimberger}, {\em Nonlinear dynamics and quantum chaos}, Springer, 2014.

\bibitem{Shil1}
{\sc T.~Xing, R.~Barrio, and A.~Shilnikov}, {\em Symbolic quest into homoclinic
  chaos}, Internat. J. Bifur. Chaos Appl. Sci. Engrg., 24 (2014).

\bibitem{Kirk2}
{\sc W.~Zhang, B.~Krauskopf, and V.~Kirk}, {\em How to find a codimension-one
  heteroclinic cycle between two periodic orbits}, Discrete Contin. Dyn. Syst.,
  32 (2012), pp.~2825--2851.

\end{thebibliography}
%%%%%%%%%%%%%%%%%%%%%%%%%%%%%%%%%%%%%%%%%%%%%%%%%%%%%%%%

\end{document}